\documentclass{article}
\usepackage[utf8]{inputenc}
\usepackage[T1]{fontenc}
\usepackage{amssymb, amsmath, amsthm}  
\usepackage{bm}
\usepackage{mathtools}
\usepackage[final]{graphicx}
\usepackage{caption}                      
\usepackage[english]{babel}
\usepackage[a4paper, total={6.5in, 8.5in}]{geometry}
\usepackage{enumerate}
\usepackage{float}
\usepackage{verbatim}
\usepackage{tikz}
\usetikzlibrary{fit, matrix, calc, positioning, decorations.shapes, decorations.pathreplacing}
\usepackage{makeidx}
\usepackage{algpseudocode}
\usepackage{algorithm}
\usepackage{xcolor}
\usepackage{nicematrix}
\usepackage{hyperref}  
\usepackage[style=numeric, sorting=nyt]{biblatex}
\DeclareFieldFormat[article]{title}{#1}      
\DeclareFieldFormat[book]{title}{#1}         
\DeclareFieldFormat[inbook]{title}{#1}       
\DeclareFieldFormat{booktitle}{#1}           

\renewbibmacro{in:}{}                        
\DeclareFieldFormat{pages}{#1}               
\usepackage{xurl}  
\DeclareFieldFormat{doi}{\href{https://doi.org/#1}{\nolinkurl{#1}}}
\DeclareFieldFormat{url}{\href{#1}{#1}}
\DeclareNameAlias{author}{family-given}

\newtheorem{theorem}{Theorem}[section]

\newtheorem{corollary}[theorem]{Corollary}
\newtheorem{proposition}[theorem]{Proposition}
\newtheorem{lemma}[theorem]{Lemma}

\DeclareMathOperator*{\argmin}{arg\,min}
\newcommand{\bbp}{\mathbb{P}}
\newcommand{\bE}{\mathbb{E}}
\newcommand{\bbn}{\mathbb{N}}

\newcommand{\mT}{\mathcal{T}}
\newcommand{\mX}{\mathcal{X}}
\newcommand{\mF}{\mathcal{F}}
\newcommand{\Fmat}{\mathbb{F}\text{-matrix}}
\newcommand{\Fmats}{\mathbb{F}\text{-matrices}}

\newcommand{\var}{\mathbb{V}\text{ar}}
\newcommand{\cov}{\mathbb{C}\text{ov}}
\newcommand{\transpose}{\mathsf{T}}
\newcommand{\bx}{\mathbf{x}}
\providecommand{\keywords}[1]
{
  \small	
  \textbf{Keywords:} #1
}
\providecommand{\mXC}[1]
{
  \small	
  \textbf{Mathematics Subject Classification:} #1
}
\title{Markov embedding of ranked unlabelled evolutionary trees and its applications}
\author{Lasse Thorup Fallesen$^{1}$, Simon Pauli$^{2}$, Elisabeth Sommer James$^{1}$, Lars Nørvang Andersen$^{1}$,\\ Asger Hobolth$^{1}$ \\
        \small $^{1}$Department of Mathematics, Aarhus University, Ny Munkegade 118, DK-8000 Aarhus C, Denmark \\
        \small $^{2}$Institute of Applied Statistics, Johannes Kepler University, Linz, Austria \\
}
\date{}
\begin{document}
\maketitle
\begin{abstract}\noindent
Rooted bifurcating trees are mathematical objects used to model evolutionary relationships and arise naturally in both coalescent theory and phylogenetics. Recent numerical representations of tree topologies, known as $\Fmats$, allow for summarizing a sample of trees via Frechét means and provide new measures of tree balance. However, the number of ranked unlabelled trees grows super-exponentially with the number of leaves. This makes computation intensive and current methods rely on mixed integer programming and simulation-based methods. Moreover, $\Fmats$ are difficult to interpret, and their distribution is only described in terms of first- and second-order moments under neutral branching. In this paper, we introduce a Markov chain embedding of ranked and unlabelled trees that drastically decreases the size of the state space. Leveraging this embedding, we develop an algorithm that efficiently computes all Fréchet means and use discrete phase-type theory to obtain the joint distribution of tree balance indices. We also use discrete phase-type theory to generalize previous results regarding moments of $\Fmats$ to arbitrary order for any time homogeneous and bifurcating coalescent model. Using this framework, we construct three tests for neutrality and demonstrate their improved power compared to previous methods on simulated data. 
\end{abstract}

\noindent\keywords{Binary Tree  $\cdot$ Unlabelled tree $\cdot$ Markov chain $\cdot$ Phase-type distributions $\cdot$ Tree balance}\medbreak

\noindent\mXC{Primary: 60J10 $\cdot$ Secondary: 60J90, 92D15, 60J22}
\section{Introduction}
Rooted and bifurcating trees provide a natural representation of evolutionary and ancestral relationships between species in phylogenetics and individuals within the same species in population genetics. They encode both forward-in-time branching processes and the order in which $n$ lineages coalesce backward-in-time. The branch lengths and topology of a tree offer insights into past evolutionary dynamics and genetic variation \cite{ColjinPlazzotta, Kayondo}. While branch lengths describe the coalescent times, the topology of a tree captures the relative ordering of events as well as tree balance.

Since the introduction of coalescent theory in 1982 \cite{Coalescent}, trees with associated coalescent times have been studied extensively. However, the topology of the ranked and unlabelled tree alone can influence genetic diversity and carries information about evolutionary patterns and tree balance that are independent of labeling and branch lengths. Such structural features can be used to test hypotheses about deviance from neutrality \cite{SamPal, DrummondSuchard, ColjinPlazzotta}, motivating the study of these topologies. 

Ranked and unlabelled trees arise naturally in the study of ancestral relationships among exchangeable objects, where the identity or \textit{label} of each lineage or taxon is irrelevant, but the order of coalescent or branching events is informative. For example, they can be used to characterize heterogeneity across a sample of evolutionary histories of cells in cancer tumours \cite{SamPal}. We may want to characterize the branching pattern for a specific type of cancer or compare evolutionary patterns across different cancer types. 

Recently, a numeric representation called $\Fmats$ \cite{SamPal, FmatPaper} has been proposed to encode ranked unlabelled trees. This representation provides a convenient framework for comparing tree topologies and summarizing a sample of ranked and unlabelled trees. However, the number of $\Fmats$ remains large, growing superexponentially with the number of taxa. This makes exact calculations computationally expensive.

In this paper, we address this complexity by introducing a Markov embedding of ranked and unlabelled trees. In Section $2$, we construct the embedding using the $\Fmat$ representation and demonstrate how these arise naturally in a coalescent framework when branch lengths are ignored. The formulation yields a state space that grows according to $((1+\sqrt{5})/2)^n$, which is many orders of magnitude smaller than the full space of ranked unlabelled trees. This provides a tractable probabilistic framework for working with ranked tree shapes, which we capitalize on in the remainder of the paper.

In Section $3$, we exploit the reduced state space along with the Markov property to develop a new algorithm to efficiently summarize a sample of tree shapes or a distribution on the space thereof. In Section $4$, we introduce new balance indices tailored to ranked tree shapes based on the $\Fmat$ representation. In Section $5$, we show how the embedding also enables the use of discrete phase-type theory to derive distributional properties of functionals of the individual states. This yields matrix-analytic expressions for all moments of $\Fmat$ entries, and also generalizes existing results \cite{SamPal} beyond the Kingman model. In Section $6$, we exploit the feed-forward structure of the chain to develop methods that avoid expensive matrix operations. In Section $7$, we use the phase-type framework to develop three novel and powerful tests of neutrality. Lastly, in Section $8$, we discuss the relation of the embedding to the block-counting process and extensions of the framework.

We apply the methods to the Kingman coalescent and to trees sampled from the Blum-Fran\c{c}ois $\beta$-splitting model \cite{Blum-Francois}. Compared with existing results \cite{SamPal}, we observe substantial improvements in both computational efficiency and statistical power. An implementation of the methods developed here is done in \texttt{R} \cite{R} and is available at \href{https://github.com/LasseThorup1/RankedCoalescent}{https://github.com/LasseThorup1/RankedCoalescent}. All figures in the paper have been made using data available in the repository.\vspace{-10pt}

\section{Markov embedding of ranked unlabelled trees}
The ranked unlabelled trees analyzed in this paper are rooted binary trees with ordered internal nodes. The branching events are ranked, starting from $2$, and describes the total number of branches immediately after each event. Figure \ref{fig:Figure-Fmat} gives an overview of our Markov embedding and interpretation of the $\Fmat$ notation. A representation of all $5$ distinct ranked tree topologies with $n = 5$ leaves is provided in Figure \ref{fig:Figure-Fmat}A. While this is a forward-in-time perspective, the same structures can be viewed backward-in-time as a coalescent process. In regard to the latter, the rankings decrease starting from $n$ and denote the number of lineages immediately prior to each coalescent event. The process is unlabelled in the sense that lineages (backward-in-time) or leaves (forward-in-time) are unlabelled. We assume that the trees are isochronous, implying that no lineage goes extinct, or equivalently, that all leaves are sampled simultaneously. Branch lengths are not considered here as our focus lies solely on the ranked topology of the process.

Let $\mT_n$ denote the space of ranked unlabelled trees with $n$ leaves. Each tree $T \in \mT_n$ can be uniquely represented by a lower-triangular, integer-valued  $\Fmat$ of dimension $n-1$ \cite{FmatPaper}. The diagonal elements $F_{ii} = i + 1$ represent the number of branches immediately following the $i$th branching event, and the off-diagonal elements $F_{ij}, 1 \leq j < i \leq n-1$ represent the number of extant branches immediately following the $j$th branching event that do not bifurcate prior to the $(i+1)$th branching event. Figure \ref{fig:Figure-Fmat}B shows the $\Fmats$ corresponding to the trees in Figure \ref{fig:Figure-Fmat}A.

In \cite{SamPal} the space of these $\Fmats$ is introduced via a set of linear constraints. In contrast, the Markov embedding illustrated in Figure \ref{fig:Figure-Fmat}C bypasses these constraints by representing trees as sample paths of a Markov chain, and in Figure \ref{fig:Figure-Fmat}D we demonstrate how the $\Fmats$ naturally arise as a sequence of coalescent events corresponding to the transitions in the chain. We now formally introduce the embedding.\vspace{-5pt}

\subsection{The Markov embedding of the ranked coalescent}
As introduced above, $\Fmats$ encode the forward-in-time branching process. However, a bifurcating tree can also be viewed as the outcome of a coalescent process, backward-in-time, where we consider each branch as a lineage. Here we adopt the latter viewpoint and describe a Markov chain whose sample paths are in one-to-one correspondence with the space of $\Fmats$, and we refer to this Markov chain as the \textit{ranked coalescent}.

Let $\{X_t\}_{t=0,\cdots,n-1}$ denote the ranked coalescent. For $t = 0,\cdots, n-2$, $X_t$ corresponds to the $(n-1-t)$th column in an appropriate $\Fmat$, while $X_{n-1} = \text{MRCA}$ represents the absorbing state. By construction, we will always have $X_0 = (0,\cdots,0,n)^\transpose, X_1 = (0,\cdots,n-1,n-2)^\transpose$. We denote the state space of the ranked coalescent by $\mX_n$, and each non-absorbing state in $\mX_n$ is denoted by $\bx = (x_1,\cdots,x_{n-1})^\transpose$.

Below, we illustrate by example how the columns of the $\Fmats$ (states in the Markov chain) can be interpreted as partial trees and how the chain transitions from one column to a column to its immediate left by coalescing two lineages present in the current state. We do so by considering the $11$-leaved tree along with its corresponding $\Fmat$ shown in Figure \ref{fig:twoTrees}.
\begin{figure} [!bt]
    \centering
    \scalebox{0.75}{
    \begin{tikzpicture}[node distance=4cm]
        \node (Tree1) {\begin{tikzpicture} 
        \draw (0, 0.5) -- (0, 0);
        \draw (-1, 0) -- (1, 0);
        \draw (-1, 0) -- (-1, -0.5);
        \draw (1, 0) -- (1, -2);
        \draw (-1.5, -0.5) -- (-0.5, -0.5);
        \draw (-1.5, -0.5) -- (-1.5, -1);
        \draw (-2, -1) -- (-1, -1);
        \draw (-2, -1) -- (-2, -1.5);
        \draw (-1, -1) -- (-1, -2);
        \draw (-2.5, -1.5) -- (-1.5, -1.5);
        \draw (-2.5, -1.5) -- (-2.5, -2);
        \draw (-1.5, -1.5) -- (-1.5, -2);
        \draw (-0.5, -0.5) -- (-0.5, -2);
        \node[text width = 1cm] at (0.2,0.2) {$2$};
        \node[text width = 1cm] at (-0.8,-0.3) {$3$};
        \node[text width = 1cm] at (-1.3,-0.8) {$4$};
        \node[text width = 1cm] at (-1.8,-1.3) {$5$};
        \node[text width = 1cm] at (-0.2,0.75) {\textbf{Absorbing}};
    \end{tikzpicture}} ;
    \node (Tree2) [right of=Tree1] {\begin{tikzpicture}
        \draw [red](0, 0.5) -- (0, 0);
        \draw [red](-1, 0) -- (1, 0);
        \draw [red](-1, 0) -- (-1, -0.5);
        \draw [red](1, 0) -- (1, -2);
        \draw [red](-1.5, -0.5) -- (0, -0.5);
        \draw [red](-1.5, -0.5) -- (-1.5, -1);
        \draw [red](-2, -1) -- (-1, -1);
        \draw [red](-2, -1) -- (-2, -2);
        \draw [red](-1, -1) -- (-1, -2);
        \draw [red](0, -0.5) -- (0, -1.5);
        \draw [red](-0.5, -1.5) -- (0.5, -1.5);
        \draw [red](-0.5, -1.5) -- (-0.5, -2);
        \draw [red](0.5, -1.5) -- (0.5, -2);
        \node[text width = 1cm, red] at (0.2,0.2) {$2$};
        \node[text width = 1cm, red] at (-0.8,-0.3) {$3$};
        \node[text width = 1cm, red] at (-1.3,-0.8) {$4$};
        \node[text width = 1cm, red] at (0.2,-1.3) {$5$};
        \node[text width = 1cm, red] at (-0.2,0.75) {\textbf{Absorbing}};
    \end{tikzpicture}} ;
    \node (Tree3) [right of=Tree2] {\begin{tikzpicture} 
        \draw [blue] (0, 0.5) -- (0, 0);
        \draw [blue](-1, 0) -- (1, 0);
        \draw [blue](-1, 0) -- (-1, -0.5);
        \draw [blue](1, 0) -- (1, -1.5);
        \draw [blue](-1.5, -0.5) -- (-0.5, -0.5);
        \draw [blue](-1.5, -0.5) -- (-1.5, -1);
        \draw [blue](-0.5, -0.5) -- (-0.5, -2);
        \draw [blue](-2, -1) -- (-1, -1);
        \draw [blue](-2, -1) -- (-2, -2);
        \draw [blue](-1, -1) -- (-1, -2);
        \draw [blue](0.5, -1.5) -- (1.5, -1.5);
        \draw [blue](0.5, -1.5) -- (0.5, -2);
        \draw [blue](1.5, -1.5) -- (1.5,-2);
        \node[text width = 1cm, blue] at (0.2,0.2) {$2$};
        \node[text width = 1cm, blue] at (-0.8,-0.3) {$3$};
        \node[text width = 1cm, blue] at (-1.3,-0.8) {$4$};
        \node[text width = 1cm, blue] at (1.2,-1.3) {$5$};
        \node[text width = 1cm, blue] at (-0.2,0.75) {\textbf{Absorbing}};
    \end{tikzpicture}};

    \node (Tree4) [right of=Tree3] {\begin{tikzpicture}
        \draw [orange](0, 0.5) -- (0, 0);
        \draw [orange](-1, 0) -- (1, 0);
        \draw [orange](-1, 0) -- (-1, -0.5);
        \draw [orange](1, 0) -- (1, -1);
        \draw [orange](-1.5, -0.5) -- (-0.5, -0.5);
        \draw [orange](-1.5, -0.5) -- (-1.5, -2);
        \draw [orange](-0.5, -0.5) -- (-0.5, -2);
        \draw [orange](0.5, -1) -- (1.5, -1);
        \draw [orange](0.5, -1) -- (0.5, -1.5);
        \draw [orange](1.5, -1) -- (1.5, -2);
        \draw [orange](0, -1.5) -- (1, -1.5);
        \draw [orange](0, -1.5) -- (0, -2);
        \draw [orange](1, -1.5) -- (1, -2);
        \node[text width = 1cm, orange] at (0.2,0.2) {$2$};
        \node[text width = 1cm, orange] at (-0.8,-0.3) {$3$};
        \node[text width = 1cm, orange] at (1.2,-0.8) {$4$};
        \node[text width = 1cm, orange] at (0.7,-1.3) {$5$};
        \node[text width = 1cm, orange] at (-0.2,0.75) {\textbf{Absorbing}};
    \end{tikzpicture}};

    \node (Tree5) [right of=Tree4] {\begin{tikzpicture} 
        \draw [purple](-1, 0) -- (1, 0);
        \draw [purple](0, 0.5) -- (0, 0);
        \draw [purple](-1, 0) -- (-1, -0.5);
        \draw [purple](1, 0) -- (1, -1);
        \draw [purple](-1.5, -0.5) -- (-0.5, -0.5);
        \draw [purple](-1.5, -0.5) -- (-1.5, -1.5);
        \draw [purple](-0.5, -0.5) -- (-0.5, -2);
        \draw [purple](0.5, -1) -- (1.5, -1);
        \draw [purple](0.5, -1) -- (0.5, -2);
        \draw [purple](1.5, -1) -- (1.5, -2);
        \draw [purple](-2, -1.5) -- (-1, -1.5);
        \draw [purple](-2, -1.5) -- (-2, -2);
        \draw [purple](-1, -1.5) -- (-1, -2); 
        \node[text width = 1cm, purple] at (0.2,0.2) {$2$};
        \node[text width = 1cm, purple] at (-0.8,-0.3) {$3$};
        \node[text width = 1cm, purple] at (1.2,-0.8) {$4$};
        \node[text width = 1cm, purple] at (-1.3,-1.3) {$5$};
        \node[text width = 1cm, purple] at (-0.2,0.75) {\textbf{Absorbing}};
    \end{tikzpicture}};

    \node (F1) [below=1.2cm of Tree1] {$F_1 = \begin{bmatrix}
        2&0&0&0\\
        1&3&0&0\\
        1&2&4&0\\
        1&2&3&5
    \end{bmatrix}$};

    \node (F2) [right of=F1, red] {$F_2 = \begin{bmatrix}
        2&0&0&0\\
        1&3&0&0\\
        1&2&4&0\\
        1&1&3&5
    \end{bmatrix}$};

    \node (F3) [right of=F2, blue] {$F_3 = \begin{bmatrix}
        2&0&0&0\\
        1&3&0&0\\
        1&2&4&0\\
        0&1&3&5
    \end{bmatrix}$};

    \node (F4) [right of=F3, orange] {$F_4 = \begin{bmatrix}
        2&0&0&0\\
        1&3&0&0\\
        0&2&4&0\\
        0&2&3&5
    \end{bmatrix}$};

    \node (F5) [right of=F4, purple] {$F_5 = \begin{bmatrix}
        2&0&0&0\\
        1&3&0&0\\
        0&2&4&0\\
        0&1&3&5
    \end{bmatrix}$};

    \node (Embedding) at ([shift={(3cm,-6cm)}] F1) {\scalebox{0.9}{\begin{tikzpicture}[->]
    \node [label = above:state 1] at(-2,0) (state 1){$\begin{pmatrix}
        0\\0\\0\\5
    \end{pmatrix}$};
    
    \node [label = above:state 2] at(0,0) (state 2){$\begin{pmatrix}
        0\\0\\4\\3
    \end{pmatrix}$};
    
    \node [label = below:state 4] at(2, -1.5) (state 4){$\begin{pmatrix}
        0\\3\\2\\1
    \end{pmatrix}$};
    
    \node [label = above:state 3] at(2, 1.5) (state 3){$\begin{pmatrix}
        0\\3\\2\\2
    \end{pmatrix}$};
    
    \node [label = right:state 7] at(4, -3) (state 7){$\begin{pmatrix}
        2\\1\\0\\0
    \end{pmatrix}$};
    
    \node [label = right:state 6] at(4, 0) (state 6){$\begin{pmatrix}
        2\\1\\1\\0
    \end{pmatrix}$};
    
    \node [label = right:state 5] at(4, 3)(state 5){$\begin{pmatrix}
        2\\1\\1\\1
    \end{pmatrix}$};

    \node at (7.5, 0)(absorbing) {MRCA};

    \node at (4, 4.5){\textbf{column 1}};

    \node at (2, 3){\textbf{column 2}};

    \node at (0, 1.5){\textbf{column 3}};

    \node at (-2, 1.5){\textbf{column 4}};
    \path (state 1) edge (state 2);
    \path (state 2) edge (state 3);
    \path (state 2) edge (state 4);
    \path [line width =2.5pt](state 3) edge (state 5);
    \path [line width =2.5pt,orange] (state 3) edge (state 7);
    \path [line width =2.5pt,red] (state 4) edge (state 5);
    \path [line width =2.5pt,blue] (state 4) edge (state 6);
    \path [line width =2.5pt, purple] (state 4) edge (state 7);
    \path (state 5) edge (absorbing);
    \path (5.8, 0) edge (6.8,0);
    \path (state 7) edge (absorbing);
    \coordinate (dm1) at (4,4);
    \coordinate (dm2) at (4,-4);
    \node[rectangle,draw,minimum width=1cm, color = red] [fit = (dm1) (dm2)] {};
    
    \coordinate (dm3) at (2,2.4);
    \coordinate (dm4) at (2,-2.4);
    \node[rectangle,draw,minimum width=1cm, color = red] [fit = (dm3) (dm4)] {};
    
    \coordinate (dm5) at (0,0.9);
    \coordinate (dm6) at (0,-0.9);
    \node[rectangle,draw,minimum width=1cm, color = red] [fit = (dm5) (dm6)] {};
    
    \coordinate (dm7) at (-2,0.9);
    \coordinate (dm8) at (-2,-0.9);
    \node[rectangle,draw,minimum width=1cm, color = red] [fit = (dm7) (dm8)] {};
    \end{tikzpicture}}};

    \node (StateSpaceComp) at ([shift={(10cm,0cm)}]Embedding) {\scalebox{1}{\begin{tikzpicture}
        \node (F1) {$F_3 = \begin{bmatrix}
            2&0&0&0\\
            1&3&0&0\\
            1&2&4&0\\
            0&1&3&5
        \end{bmatrix}$};

        \node[anchor=west] (X0) [below=0cm of F1] {\scalebox{0.7}{$X_0 = \begin{pmatrix}0\\0\\0\\5\end{pmatrix} = \begin{pmatrix}
            0\\0\\0\\1
        \end{pmatrix}+\begin{pmatrix}
            0\\0\\0\\1
        \end{pmatrix}+\begin{pmatrix}
            0\\0\\0\\1
        \end{pmatrix}+\begin{pmatrix}
            0\\0\\0\\1
        \end{pmatrix}+\begin{pmatrix}
            0\\0\\0\\1
        \end{pmatrix}$}}; 

        \node[anchor=west] (X1) [below=0cm of X0] {\scalebox{0.7}{$X_1 = \begin{pmatrix}0\\0\\4\\3\end{pmatrix} = \begin{pmatrix}
            0\\0\\1\\1
        \end{pmatrix}+\begin{pmatrix}
            0\\0\\1\\1
        \end{pmatrix}+\begin{pmatrix}
            0\\0\\1\\1
        \end{pmatrix}+\begin{pmatrix}
            0\\0\\0\\1
        \end{pmatrix}$}};

        \node[anchor=west] (X2) [below=0cm of X1] {\scalebox{0.7}{$X_2 = \begin{pmatrix}0\\3\\2\\1\end{pmatrix} = \begin{pmatrix}
            0\\1\\0\\0
        \end{pmatrix}+\begin{pmatrix}
            0\\1\\1\\1
        \end{pmatrix}+\begin{pmatrix}
            0\\1\\1\\0
        \end{pmatrix}$}};

        \node[anchor=west] (X3) [below=0cm of X2] {\scalebox{0.7}{$X_3 = \begin{pmatrix}2\\1\\1\\0\end{pmatrix} = \begin{pmatrix}
            1\\0\\0\\0
        \end{pmatrix}+\begin{pmatrix}
            1\\1\\1\\0
        \end{pmatrix}$}};

        \node (Tree0) [right= 1cm of X0] {\scalebox{0.45}{\begin{tikzpicture}
            \draw (0, 0.5) -- (0,0);
            \draw (-1, 0) -- (1,0);
            \draw (-1, 0) -- (-1, -0.5);
            \draw (-1.5, -0.5) -- (-0.5, -0.5);
            \draw (-1.5, -0.5) -- (-1.5, -1);
            \draw (-2, -1) -- (-1, -1);
            \draw (-2, -1) -- (-2, -2);
            \draw (-1, -1) -- (-1, -2);
            \draw (-0.5, -0.5) -- (-0.5, -2);
            \draw (1, 0) -- (1, -1.5);
            \draw (0.5, -1.5) -- (1.5,-1.5);
            \draw [red,line width = 1mm](0.5, -1.5) -- (0.5, -2);
            \draw [red,line width = 1mm](1.5, -1.5) -- (1.5, -2);
            \draw [red,line width = 1mm](-2,-1.5) -- (-2,-2);
            \draw [red,line width = 1mm](-1,-1.5) -- (-1,-2);
            \draw [red,line width = 1mm](-0.5,-1.5) -- (-0.5,-2);
            \node[text width = 1cm] at (-0.4,0.2) {$2$};
            \node[text width = 1cm] at (-1.4,-0.3) {$3$};
            \node[text width = 1cm] at (-1.9,-0.8) {$4$};
            \node[text width = 1cm] at (0.6,-1.3) {$5$};
        \end{tikzpicture}}};

        \node (Tree1) [below= 0cm of Tree0] {\scalebox{0.45}{\begin{tikzpicture}
            \draw (0, 0.5) -- (0,0);
            \draw (-1, 0) -- (1,0);
            \draw (-1, 0) -- (-1, -0.5);
            \draw (-1.5, -0.5) -- (-0.5, -0.5);
            \draw (-1.5, -0.5) -- (-1.5, -1);
            \draw (-2, -1) -- (-1, -1);
            \draw (-2, -1) -- (-2, -2);
            \draw (-1, -1) -- (-1, -2);
            \draw (-0.5, -0.5) -- (-0.5, -2);
            \draw (1, 0) -- (1, -1.5);
            \draw (0.5, -1.5) -- (1.5,-1.5);
            \draw (0.5, -1.5) -- (0.5, -2);
            \draw (1.5, -1.5) -- (1.5, -2);
            \draw[red,line width = 1mm] (-2, -1) -- (-2,-2);
            \draw[red,line width = 1mm] (-1, -1) -- (-1,-2);
            \draw[red,line width = 1mm] (-0.5,-1) -- (-0.5,-2);
            \draw[red,line width = 1mm] (1,-1) -- (1,-1.5);
            \node[text width = 1cm] at (0.2,0.43) {$2$};
            \node[text width = 1cm] at (-0.8,-0.06) {$3$};
            \node[text width = 1cm] at (-1.3,-0.56) {$4$};
            \node[text width = 1cm] at (1.2,-1.08) {$5$};
        \end{tikzpicture}}};

        \node (Tree2) [below= 0cm of Tree1] {\scalebox{0.45}{\begin{tikzpicture}
            \draw (0, 0.5) -- (0,0);
            \draw (-1, 0) -- (1,0);
            \draw (-1, 0) -- (-1, -0.5);
            \draw (-1.5, -0.5) -- (-0.5, -0.5);
            \draw (-1.5, -0.5) -- (-1.5, -1);
            \draw (-2, -1) -- (-1, -1);
            \draw (-2, -1) -- (-2, -2);
            \draw (-1, -1) -- (-1, -2);
            \draw (-0.5, -0.5) -- (-0.5, -2);
            \draw (1, 0) -- (1, -1.5);
            \draw (0.5, -1.5) -- (1.5,-1.5);
            \draw (0.5, -1.5) -- (0.5, -2);
            \draw (1.5, -1.5) -- (1.5, -2);
            \draw[red,line width = 1mm] (-1.5,-0.5) -- (-1.5,-1);
            \draw[red,line width = 1mm] (-0.5,-0.5) -- (-0.5,-2);
            \draw[red,line width = 1mm] (1,-0.5) -- (1,-1.5);
            \node[text width = 1cm] at (0.2,0.43) {$2$};
            \node[text width = 1cm] at (-0.8,-0.06) {$3$};
            \node[text width = 1cm] at (-1.3,-0.56) {$4$};
            \node[text width = 1cm] at (1.2,-1.08) {$5$};
        \end{tikzpicture}}};

        \node (Tree3) [below= 0cm of Tree2] {\scalebox{0.45}{\begin{tikzpicture}
            \draw (0, 0.5) -- (0,0);
            \draw (-1, 0) -- (1,0);
            \draw (-1, 0) -- (-1, -0.5);
            \draw (-1.5, -0.5) -- (-0.5, -0.5);
            \draw (-1.5, -0.5) -- (-1.5, -1);
            \draw (-2, -1) -- (-1, -1);
            \draw (-2, -1) -- (-2, -2);
            \draw (-1, -1) -- (-1, -2);
            \draw (-0.5, -0.5) -- (-0.5, -2);
            \draw (1, 0) -- (1, -1.5);
            \draw (0.5, -1.5) -- (1.5,-1.5);
            \draw (0.5, -1.5) -- (0.5, -2);
            \draw (1.5, -1.5) -- (1.5, -2);
            \draw[red,line width = 1mm] (-1, 0) -- (-1, -0.5);
            \draw[red,line width = 1mm] (1, 0) -- (1, -1.5);
            \node[text width = 1cm] at (0.2,0.43) {$2$};
            \node[text width = 1cm] at (-0.8,-0.06) {$3$};
            \node[text width = 1cm] at (-1.3,-0.56) {$4$};
            \node[text width = 1cm] at (1.2,-1.08) {$5$};
        \end{tikzpicture}}};

        \node (WholeTree) [right=0cm of F1] {\scalebox{0.6}{\begin{tikzpicture}
            \draw (0, 0.5) -- (0,0);
            \draw (-1, 0) -- (1,0);
            \draw (-1, 0) -- (-1, -0.5);
            \draw (-1.5, -0.5) -- (-0.5, -0.5);
            \draw (-1.5, -0.5) -- (-1.5, -1);
            \draw (-2, -1) -- (-1, -1);
            \draw (-2, -1) -- (-2, -2);
            \draw (-1, -1) -- (-1, -2);
            \draw (-0.5, -0.5) -- (-0.5, -2);
            \draw (1, 0) -- (1, -1.5);
            \draw (0.5, -1.5) -- (1.5,-1.5);
            \draw (0.5, -1.5) -- (0.5, -2);
            \draw (1.5, -1.5) -- (1.5, -2);
            \node[text width = 1cm] at (-0.4,0.2) {$2$};
            \node[text width = 1cm] at (-1.4,-0.3) {$3$};
            \node[text width = 1cm] at (-1.9,-0.8) {$4$};
            \node[text width = 1cm] at (0.6,-1.3) {$5$};
        \end{tikzpicture}}};
        
    \end{tikzpicture}}}; 
    
    \node (HeaderNode) at ([shift={(-7cm,5cm)}] Tree1) {};

    \node[anchor=west] at ([shift={(0cm,0cm)}] HeaderNode)  {\fontsize{18}{22}\selectfont \textbf{A:}  All ranked trees $T \in \mT_5$ with $n = 5$ leaves};

    \node[anchor=west] at ([shift={(0cm,-4.225cm)}] HeaderNode) {\fontsize{18}{22}\selectfont \textbf{B:} $\Fmat$ encoding of trees};

    \node[anchor=west] at ([shift={(0cm,-7.3cm)}] HeaderNode) {\fontsize{18}{22}\selectfont \textbf{C:} Markov embedding with n = 5};

    \node[anchor=west] at ([shift={(10cm, -7.3cm)}] HeaderNode) {\fontsize{18}{22}\selectfont \textbf{D:} Coalescent interpretation};
    \end{tikzpicture}}
    \caption{{\bf A:} All ranked tree shapes with $n = 5$ leaves. {\bf B:} The corresponding $\Fmat$ representations. {\bf C:} The transition diagram of the Markov embedding, where arrows from column 2 to column 1 are colored according to the corresponding tree in {\bf A}. {\bf D:} Interpretation of $F_3$ and its corresponding tree in the embedding.}
    \label{fig:Figure-Fmat}
    \vspace{-10pt}
\end{figure}
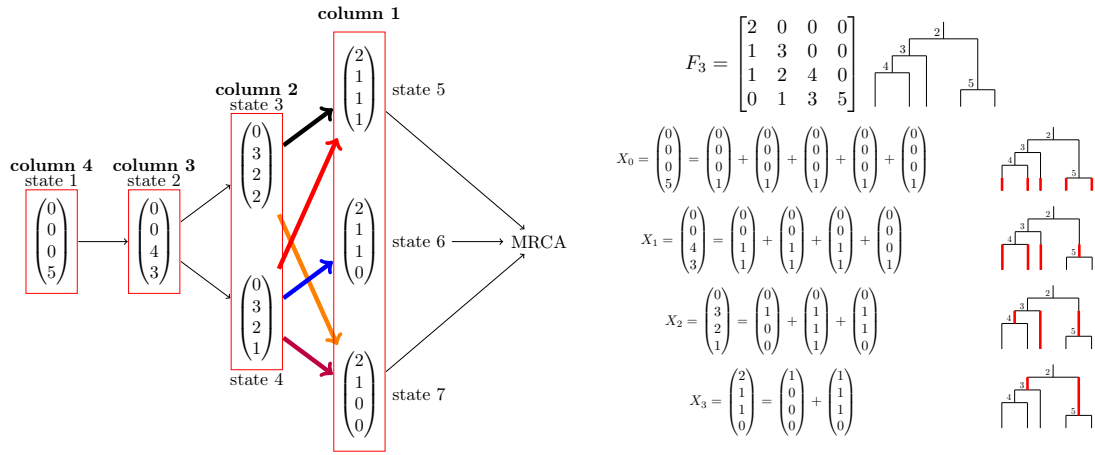
In Figure \ref{fig:twoTrees}C, we highlight the edges corresponding to the fourth column $\bx_6 = (0,0,0,5,4,4,3,2,1,0)^\transpose$. Inspired by this, we decompose $\bx_6$ in terms of its lineages as follows
\begin{align}
    \bx_6 = \begin{pmatrix}
    0 \\ 0\\ 0 \\ 5 \\4 \\4 \\ 3\\ 2 \\ 1\\ 0
    \end{pmatrix} =
    \underbrace{\begin{pmatrix}
    0 \\ 0\\ 0 \\ 1 \\ 0 \\ 0 \\ 0\\ 0 \\ 0\\ 0
    \end{pmatrix}}_{\text{a}} +
    \underbrace{\begin{pmatrix}
    0 \\ 0\\ 0 \\ 1 \\ 1 \\ 1 \\ 1\\ 1 \\ 0\\ 0
    \end{pmatrix}}_{\text{b}} +
    \underbrace{\begin{pmatrix}
    0 \\ 0\\ 0 \\ 1 \\ 1 \\ 1 \\ 1\\ 0 \\ 0\\ 0
    \end{pmatrix}}_{\text{c}} +
    \underbrace{\begin{pmatrix}
    0 \\ 0\\ 0 \\ 1 \\ 1 \\ 1 \\ 0\\ 0 \\ 0\\ 0
    \end{pmatrix}}_{\text{d}} +
    \underbrace{\begin{pmatrix}
    0 \\ 0\\ 0 \\ 1 \\ 1 \\ 1 \\ 1\\ 1 \\ 1\\ 0
    \end{pmatrix}}_{\text{e}}.
    \label{eq:X6decomp}
\end{align}
Here, each summand represents a lineage, and the entries ($0$ or $1$) indicate whether or not the lineage is present at a given layer of the tree. As only one pair of lineages can coalesce at a time, at most one lineage can terminate at each layer. This makes the decomposition unique. The decomposition allows us to derive transition probabilities: As all lineages are distinct, all $\binom{5}{2} = 10$ possible ways of selecting $2$ of the $5$ lineages to coalesce lead to distinct new states. If we assume a Kingman coalescent \cite{Coalescent}, all transitions occur with the same probability of $1/10$. In the Appendix, we derive the full transition probability matrix, also including the case of non-distinct lineages.

Suppose lineages a and b coalesce, as is the case in Figure \ref{fig:twoTrees}. Considering Figure \ref{fig:twoTrees}D, we see how the state $\bx_7$ "inherits" the lineages c,d, and e in \eqref{eq:X6decomp} with the third entry of $\bx_6$ changed from $0$ to $1$ for $\bx_7$ as these are now present also at the next layer in the tree. Furthermore, a new lineage $(0,0,1,0,\cdots,0)^\transpose$ is formed from the coalescence of a and b.
\begin{align}
    \underbrace{\begin{pmatrix}
    0 \\ 0\\ 1 \\ 0 \\ 0 \\ 0 \\ 0\\ 0 \\ 0\\ 0
    \end{pmatrix}}_{\text{a}+\text{b}}
    +
    \underbrace{\begin{pmatrix}
    0 \\ 0\\ 1 \\ 1 \\ 1 \\ 1 \\ 1\\ 0 \\ 0\\ 0
    \end{pmatrix}}_{\text{c}^\star} +
    \underbrace{\begin{pmatrix}
    0 \\ 0\\ 1 \\ 1 \\ 1 \\ 1 \\ 0\\ 0 \\ 0\\ 0
    \end{pmatrix}}_{\text{d}^\star} +
    \underbrace{\begin{pmatrix}
    0 \\ 0\\ 1 \\ 1 \\ 1 \\ 1 \\ 1\\ 1 \\ 1\\ 0
    \end{pmatrix}}_{\text{e}^\star} = 
    \begin{pmatrix}
    0 \\ 0\\ 4 \\ 3 \\ 3 \\ 3 \\ 2\\ 1 \\ 1\\ 0
    \end{pmatrix}
    = \bx_7.
    \label{eq:X7decomp}
\end{align}

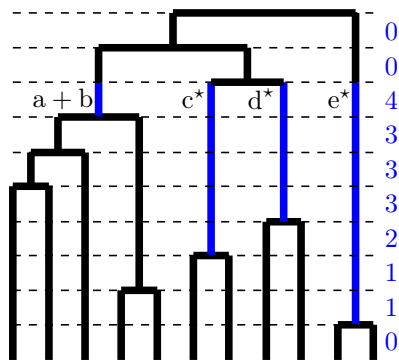
\begin{figure} [H]
  \centering
      \begin{tikzpicture}[scale=0.9]
        \node (Tree) {\begin{tikzpicture}[x=1pt,y=1pt,line width=0.6pt,scale=0.13]
    \tikzset{
      vnode/.style={circle,fill=black,inner sep=0pt,minimum size=2pt},
      rlabel/.style={font=\normalsize, inner sep=1pt, fill=white}
    }
    
    
    \draw[line width=1mm] (121,121) -- (121,625);
    \draw[line width=1mm] (225,121) -- (225,625);
    \draw[line width=1mm] (121,625) -- (225,625);
    \draw[line width=1mm] (173,625) -- (173,723);
    
    \draw[line width=1mm] (329,121) -- (329,723);
    \draw[line width=1mm] (173,723) -- (329,723);
    \draw[line width=1mm] (251,723) -- (251,826);
    
    \draw[line width=1mm] (433,121) -- (433,323);
    \draw[line width=1mm] (537,121) -- (537,323);
    \draw[line width=1mm] (433,323) -- (537,323);
    \draw[line width=1mm] (485,323) -- (485,826);
    \draw[line width=1mm] (251,826) -- (485,826);
    \draw[line width=1mm] (368,826) -- (368,1027);
    
    \draw[line width=1mm] (641,121) -- (641,424);
    \draw[line width=1mm] (743,121) -- (743,424);
    \draw[line width=1mm] (641,424) -- (743,424);
    \draw[line width=1mm] (692,424) -- (692,826);
    \draw[line width=1mm] (692,826) -- (692,927);
    
    \draw[line width=1mm] (851,121) -- (851,522);
    \draw[line width=1mm] (951,121) -- (951,522);
    \draw[line width=1mm] (851,522) -- (951,522);
    \draw[line width=1mm] (901,522) -- (901,826);
    \draw[line width=1mm] (901,826) -- (901,927);
    
    \draw[line width=1mm] (692,927) -- (901,927);
    \draw[line width=1mm] (796.5,927) -- (796.5,1027);
    
    \draw[line width=1mm] (368,1027) -- (796.5,1027);
    \draw[line width=1mm] (582.25,1027) -- (582.25,1128);
    
    \draw[line width=1mm] (1057,121) -- (1057,223);
    \draw[line width=1mm] (1159,121) -- (1159,223);
    \draw[line width=1mm] (1057,223) -- (1159,223);
    \draw[line width=1mm] (1108,223) -- (1108,826);
    \draw[line width=1mm] (1108,826) -- (1108,1128);
    
    \draw[line width=1mm] (582.25,1128) -- (1108,1128);


    
    \node[vnode,label={[rlabel]above left:11}] at (1108,223) {};
    
    \node[vnode,label={[rlabel]above left:10}] at (485,323) {};
    
    \node[vnode,label={[rlabel]above left:9}] at (692,424) {};
    
    \node[vnode,label={[rlabel]above left:8}] at (901,522) {};
    
    \node[vnode,label={[rlabel]above left:7}] at (173,625) {};
    
    \node[vnode,label={[rlabel]above left:6}] at (251,723) {};
    
    \node[vnode,label={[rlabel]above left:5}] at (368,826) {};
    
    \node[vnode,label={[rlabel]above left:4}] at (796.5,927) {};
    
    \node[vnode,label={[rlabel]above left:3}] at (582.25,1027) {};
    
    \node[vnode,label={[rlabel]above left:2}] at (845.125,1128) {};

\end{tikzpicture}};
    
        \node [right=1cm of Tree] (Fmatrix) {\begin{tikzpicture}[baseline=(m.center)]
        \node (m) {
        $\begin{bmatrix}
        2 & 0 & \color{blue}0 & \color{red}0 & 0 & 0 & 0 & 0 & 0 & 0 \\[2pt]
        1 & 3 & \color{blue}0 & \color{red}0 & 0 & 0 & 0 & 0 & 0 & 0 \\[2pt]
        1 & 2 & \color{blue}4 & \color{red}0 & 0 & 0 & 0 & 0 & 0 & 0 \\[2pt]
        1 & 1 & \color{blue}3 & \color{red}5 & 0 & 0 & 0 & 0 & 0 & 0 \\[2pt]
        1 & 1 & \color{blue}3 & \color{red}4 & 6 & 0 & 0 & 0 & 0 & 0 \\[2pt]
        1 & 1 & \color{blue}3 & \color{red}4 & 5 & 7 & 0 & 0 & 0 & 0 \\[2pt]
        1 & 1 & \color{blue}2 & \color{red}3 & 4 & 6 & 8 & 0 & 0 & 0 \\[2pt]
        1 & 1 & \color{blue}1 & \color{red}2 & 3 & 5 & 7 & 9 & 0 & 0 \\[2pt]
        1 & 1 & \color{blue}1 & \color{red}1 & 2 & 4 & 6 & 8 & 10 & 0 \\[2pt]
        0 & 0 & \color{blue}0 & \color{red}0 & 1 & 3 & 5 & 7 & 9 & 11
        \end{bmatrix}$};
    
    \foreach \i/\name in {0/\bx_9,1/\bx_8,2/\bx_7,3/\bx_6,4/\bx_5,5/\bx_4,6/\bx_3,7/\bx_2,8/\bx_1,9/\bx_0} {
        \node[above] at ($(m.north west)!{(\i+0.8)/10.6}!(m.north east)$) 
        {$\ifnum\i=2 \textcolor{blue}{\bx_7} \else \ifnum\i=3 \textcolor{red}{\bx_6} \else \name \fi \fi$};
    }
    
    \end{tikzpicture}};

    \node [below = 1cm of Tree] (column4) {\begin{tikzpicture}[x=1pt,y=1pt,line width=0.6pt,scale=0.13]
    \tikzset{
      vnode/.style={circle,fill=black,inner sep=0pt,minimum size=2pt},
      rlabel/.style={font=\normalsize, inner sep=1pt, fill=white}
    }
    
    
    
    \draw[line width=1mm] (121,121) -- (121,625);
    \draw[line width=1mm] (225,121) -- (225,625);
    \draw[line width=1mm] (121,625) -- (225,625);
    \draw[line width=1mm] (173,625) -- (173,723);
    
    \draw[line width=1mm] (329,121) -- (329,723);
    \draw[line width=1mm] (173,723) -- (329,723);
    \draw[line width=1mm,red] (251,723) -- (251,826);
    
    \draw[line width=1mm] (433,121) -- (433,323);
    \draw[line width=1mm] (537,121) -- (537,323);
    \draw[line width=1mm] (433,323) -- (537,323);
    \draw[line width=1mm,red] (485,323) -- (485,826);
    \draw[line width=1mm] (251,826) -- (485,826);
    \draw[line width=1mm] (368,826) -- (368,1027);
    
    \draw[line width=1mm] (641,121) -- (641,424);
    \draw[line width=1mm] (743,121) -- (743,424);
    \draw[line width=1mm] (641,424) -- (743,424);
    \draw[line width=1mm,red] (692,424) -- (692,826);
    \draw[line width=1mm] (692,826) -- (692,927);
    
    \draw[line width=1mm] (851,121) -- (851,522);
    \draw[line width=1mm] (951,121) -- (951,522);
    \draw[line width=1mm] (851,522) -- (951,522);
    \draw[line width=1mm,red] (901,522) -- (901,826);
    \draw[line width=1mm] (901,826) -- (901,927);
    
    \draw[line width=1mm] (692,927) -- (901,927);
    \draw[line width=1mm] (796.5,927) -- (796.5,1027);
    
    \draw[line width=1mm] (368,1027) -- (796.5,1027);
    \draw[line width=1mm] (582.25,1027) -- (582.25,1128);
    
    \draw[line width=1mm] (1057,121) -- (1057,223);
    \draw[line width=1mm] (1159,121) -- (1159,223);
    \draw[line width=1mm] (1057,223) -- (1159,223);
    \draw[line width=1mm,red] (1108,223) -- (1108,826);
    \draw[line width=1mm] (1108,826) -- (1108,1128);
    
    \draw[line width=1mm] (582.25,1128) -- (1108,1128);


    \draw[dashed] (121,223) -- (1159,223);
    \draw[dashed] (121,323) -- (1159,323);
    \draw[dashed] (121,424) -- (1159,424);
    \draw[dashed] (121,522) -- (1159,522);
    \draw[dashed] (121,625) -- (1159,625);
    \draw[dashed] (121,723) -- (1159,723);
    \draw[dashed] (121,826) -- (1159,826);
    \draw[dashed] (121,927) -- (1159,927);
    \draw[dashed] (121,1027) -- (1159,1027);
    \draw[dashed] (121,1128) -- (1159,1128);


    \node[anchor=west] at (160, 776) {\text{a}};
    \node[anchor=west] at (395, 776) {\text{b}};
    \node[anchor=west] at (605, 776) {\text{c}};
    \node[anchor=west] at (815, 776) {\text{d}};
    \node[anchor=west] at (1025, 776) {\text{e}};

    \node[inner sep=0pt,color = red,minimum size=2pt,label={[rlabel]}] at (1200,204) {0};
    \node[inner sep=0pt,color = red,minimum size=2pt,label={[rlabel]}] at (1200,304) {1};
    \node[inner sep=0pt,color = red,minimum size=2pt,label={[rlabel]}] at (1200,404) {2};
    \node[inner sep=0pt,color = red, minimum size=2pt,label={[rlabel]}] at (1200,504) {3};
    \node[inner sep=0pt,color = red,minimum size=2pt,label={[rlabel]}] at (1200,604) {4};
    \node[inner sep=0pt,color = red,minimum size=2pt,label={[rlabel]}] at (1200,704) {4};
    \node[inner sep=0pt,color = red,minimum size=2pt,label={[rlabel]}] at (1200,804) {5};
    \node[inner sep=0pt,color = red,minimum size=2pt,label={[rlabel]}] at (1200,904) {0};
    \node[inner sep=0pt,color = red,minimum size=2pt,label={[rlabel]}] at (1200,1004) {0};
    \node[inner sep=0pt,color = red,minimum size=2pt,label={[rlabel]}] at (1200,1104) {0};

\end{tikzpicture}};
    \node [right=1cm of column4] (column3) {\begin{tikzpicture}[x=1pt,y=1pt,line width=0.6pt,scale=0.13]
    \tikzset{
    vnode/.style={circle,fill=black,inner sep=0pt,minimum size=2pt},
    rlabel/.style={font=\normalsize, inner sep=1pt, fill=white}
    }

    
    
    \draw[line width=1mm] (121,121) -- (121,625);
    \draw[line width=1mm] (225,121) -- (225,625);
    \draw[line width=1mm] (121,625) -- (225,625);
    \draw[line width=1mm] (173,625) -- (173,723);
    
    \draw[line width=1mm] (329,121) -- (329,723);
    \draw[line width=1mm] (173,723) -- (329,723);
    \draw[line width=1mm] (251,723) -- (251,826);
    
    \draw[line width=1mm] (433,121) -- (433,323);
    \draw[line width=1mm] (537,121) -- (537,323);
    \draw[line width=1mm] (433,323) -- (537,323);
    \draw[line width=1mm] (485,323) -- (485,826);
    \draw[line width=1mm] (251,826) -- (485,826);
    \draw[line width=1mm,blue] (368,826) -- (368,927);
    \draw[line width=1mm] (368,927) -- (368,1027);
    
    \draw[line width=1mm] (641,121) -- (641,424);
    \draw[line width=1mm] (743,121) -- (743,424);
    \draw[line width=1mm] (641,424) -- (743,424);
    \draw[line width=1mm,blue] (692,424) -- (692,826);
    \draw[line width=1mm,blue] (692,826) -- (692,927);
    
    \draw[line width=1mm] (851,121) -- (851,522);
    \draw[line width=1mm] (951,121) -- (951,522);
    \draw[line width=1mm] (851,522) -- (951,522);
    \draw[line width=1mm,blue] (901,522) -- (901,826);
    \draw[line width=1mm,blue] (901,826) -- (901,927);
    
    \draw[line width=1mm] (692,927) -- (901,927);
    \draw[line width=1mm] (796.5,927) -- (796.5,1027);
    
    \draw[line width=1mm] (368,1027) -- (796.5,1027);
    \draw[line width=1mm] (582.25,1027) -- (582.25,1128);
    
    \draw[line width=1mm] (1057,121) -- (1057,223);
    \draw[line width=1mm] (1159,121) -- (1159,223);
    \draw[line width=1mm] (1057,223) -- (1159,223);
    \draw[line width=1mm,blue] (1108,223) -- (1108,927);
    \draw[line width=1mm] (1108,927) -- (1108,1128);
    
    \draw[line width=1mm] (582.25,1128) -- (1108,1128);
    
    \draw[dashed] (121,223) -- (1159,223);
    \draw[dashed] (121,323) -- (1159,323);
    \draw[dashed] (121,424) -- (1159,424);
    \draw[dashed] (121,522) -- (1159,522);
    \draw[dashed] (121,625) -- (1159,625);
    \draw[dashed] (121,723) -- (1159,723);
    \draw[dashed] (121,826) -- (1159,826);
    \draw[dashed] (121,927) -- (1159,927);
    \draw[dashed] (121,1027) -- (1159,1027);
    \draw[dashed] (121,1128) -- (1159,1128);

    \node at (150, 879) {$\text{a}+\text{b}$};
    \node at (580, 879) {$\text{c}^\star$};
    \node at (770, 879) {$\text{d}^\star$};
    \node at (1000, 879) {$\text{e}^\star$};

    \node[inner sep=0pt,color=blue,minimum size=2pt,label={[rlabel]}] at (1190,174) {0};
    \node[inner sep=0pt,color=blue,minimum size=2pt,label={[rlabel]}] at (1190,274) {1};
    \node[inner sep=0pt,color=blue,minimum size=2pt,label={[rlabel]}] at (1190,374) {1};
    \node[inner sep=0pt,color=blue,minimum size=2pt,label={[rlabel]}] at (1190,474) {2};
    \node[inner sep=0pt,color=blue,minimum size=2pt,label={[rlabel]}] at (1190,574) {3};
    \node[inner sep=0pt,color=blue,minimum size=2pt,label={[rlabel]}] at (1190,674) {3};
    \node[inner sep=0pt,color=blue,minimum size=2pt,label={[rlabel]}] at (1190,774) {3};
    \node[inner sep=0pt,color=blue,minimum size=2pt,label={[rlabel]}] at (1190,874) {4};
    \node[inner sep=0pt,color=blue,minimum size=2pt,label={[rlabel]}] at (1190,974) {0};
    \node[inner sep=0pt,color=blue,minimum size=2pt,label={[rlabel]}] at (1190,1074) {0};
    
    \end{tikzpicture}};

    \node (HeaderNode) at ([shift={(-1cm,3.3cm)}] Tree) {};
    \node [anchor=west] at ([shift={(-3cm,0cm)}] HeaderNode) {\large{\textbf{A:} Tree with $11$ leaves}};
    \node [anchor=west] at ([shift={(5cm,0cm)}] HeaderNode) {\large{\textbf{B:} Corresponding $\Fmat$}};
    \node [anchor=west] at ([shift={(-3cm,-6.85cm)}] HeaderNode) {\large{\textbf{C:} Interpretation of column $4$}};
    \node [anchor=west] at ([shift={(5cm,-6.85cm)}] HeaderNode) {\large{\textbf{D:} Interpretation of column $3$}};

    \end{tikzpicture}
  \caption{Coalescent interpretation of $\Fmat$ columns and transitions between them. {\bf A:} A ranked and unlabelled tree with $11$ leaves. {\bf B:} The corresponding $\Fmat$, where the third and fourth column are colored to match the interpretation below. {\bf C:} Interpretation of $\bx_6 = (0,0,0,5,4,4,3,2,1,0)^\transpose$. {\bf D:} Interpretation of $\bx_7 = (0,0,4,3,3,3,2,1,1,0)^\transpose$.}
  \label{fig:twoTrees}
\end{figure}
\noindent

This example, along with Figure \ref{fig:Figure-Fmat}D, demonstrates how transitions determined by coalescent events form $\Fmats$, suggesting that every $\Fmat$ - and thus every tree $T \in \mT_n$ - can be uniquely represented by a sequence of states, or path, through the Markov chain. We formalize this observation in the following theorem.

\begin{theorem}\label{Thm:MarkovEmbeddingTheorem}
    Every ranked unlabelled tree $T \in \mT_n$ can be uniquely encoded as a path $p = (\bx_0,\cdots,\bx_{n-2})$ through $\mX_n$.
\end{theorem}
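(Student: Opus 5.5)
We will prove the statement by composing two bijections: trees to $\Fmats$, and $\Fmats$ to paths in $\mX_n$. The first, $T \mapsto F(T)$, is the one-to-one correspondence between $\mT_n$ and lower-triangular $\Fmats$ of dimension $n-1$ from \cite{FmatPaper}, which we take as given. It therefore suffices to show that the map sending an $\Fmat$ $F$ to its sequence of columns read from right to left,
\[
p(F) = (\bx_0,\cdots,\bx_{n-2}), \qquad \bx_t = \text{column } (n-1-t) \text{ of } F,
\]
is a bijection onto the set of paths of $\mX_n$ that start at $(0,\cdots,0,n)^\transpose$ and have positive-probability transitions.

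\emph{Step 1 (the map is well defined and injective).} A matrix is determined by its columns, so $F \mapsto p(F)$ is injective; composing with \cite{FmatPaper} gives uniqueness of the encoding. We must also check that $p(F)$ really is a path of the chain. To do this we formalise the lineage decomposition illustrated in \eqref{eq:X6decomp}. Read backward in time, the column $\bx_t$ corresponds to the $n-t$ lineages present just before the $(n-t)$th coalescence. Each such lineage is an indicator vector supported on a contiguous block of layers: it starts at layer $n-1-t$ and ends at the layer where that lineage was born in the original (forward) tree. Since at most one branch is born per branching event, distinct lineages end at distinct layers. Hence the decomposition
\[
\bx_t = \sum_{\ell} \mathbf{1}_{[n-1-t,\,e_\ell]}
\]
is unique: the end points $e_\ell$ can be read off the successive differences $x_k - x_{k+1}$ of the column.

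\emph{Step 2 (the transition structure).} The tree prescribes which two lineages coalesce at the next event. We then show that column $n-2-t$ of $F$ equals the vector obtained, exactly as in \eqref{eq:X7decomp}, by
\begin{itemize}
  \item extending every surviving lineage one layer up, and
  \item adding the new lineage $\mathbf{e}_{n-2-t}$ that represents the merged pair.
\end{itemize}
This uses the definition of $F_{ij}$ as the number of branches present after event $j$ that do not bifurcate before event $i+1$. A branch counted in column $j$ at row $i$ is either a branch counted in column $j+1$ that was already present at event $j$, or it is the new branch created by event $j+1$. Verifying this identity entry by entry is the main technical step, and we expect the obstacle to lie in the index bookkeeping, namely the convention that the diagonal entry $F_{jj} = j+1$ combines with the extended lineages. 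We would handle it by induction on $t$, proving simultaneously that the lineage decomposition of $\bx_t$ matches the set of extant branches between layers.

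\emph{Step 3 (surjectivity).} Conversely, a path of the chain is a sequence of choices of pairs of lineages to coalesce. Performing these coalescences backward in time builds a ranked unlabelled tree $T$. By Step 2, the columns of $F(T)$ reproduce the path, so every path arises from some tree.

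Since the chain is defined by precisely this coalescence rule, Steps 1 to 3 together give a bijection between $\mT_n$ and the paths through $\mX_n$, which proves the theorem.
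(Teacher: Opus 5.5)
Your proposal is correct and follows essentially the same route as the paper. The paper gives no separate proof but derives the theorem from its worked example: columns of the $\Fmat$ are the states, each column decomposes uniquely into lineage indicators, and a transition extends the surviving lineages by one layer and adds $\bm{e}_{n-2-t}$ for the merged pair, with the tree--$\Fmat$ correspondence of \cite{FmatPaper} taken as given.

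One small fix in Step 1. A lineage in column $j$ ends at layer $b-1$, where $b$ is the event at which that branch bifurcates forward in time (equivalently, the coalescence that created it backward in time), not where it was born. Also, endpoints are distinct only below layer $n-1$, since all external lineages end there. Uniqueness of the decomposition as a multiset still holds, because those external lineages are identical vectors and the endpoint counts are read off from $x_k - x_{k+1}$ with $x_n := 0$.
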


While this provides a natural interpretation of ranked tree shapes as backward-in-time coalescent processes, we have yet to fully motivate the embedding beyond improved interpretation. One issue when dealing with the full space $\mT_n$ is its sheer size. $|\mT_n|$ equals the \textit{Euler / up-down numbers} (sequence A000111 in \cite{IntegerSeq}) \cite{CardTreeSpace} and scales asymptotically as $2\left(\frac{2}{\pi}\right)^{n+1}n!$. In contrast, $\mX_n$ grows much more slowly, as demonstrated in Figure \ref{fig:SizeComp} and formalized in the following theorem.
\begin{figure}[!bt]
    \centering
    \begin{tikzpicture}
        \node (Picture) {\includegraphics[width=0.75\linewidth]{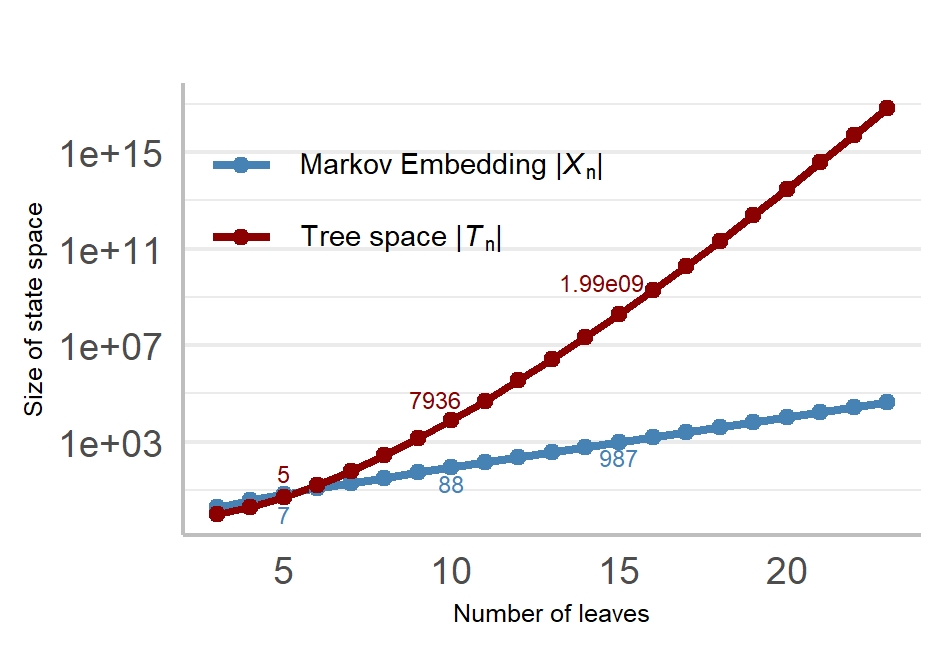}};
        \node [above=-1cm of Picture] (A) {\Large{Comparison of state space sizes}};
    \end{tikzpicture}
    \vspace{-5pt}
    \caption{The size of the state space for an increasing number of leaves. $|\mT_n|$ increases superexponentially and is many orders of magnitude larger than $|\mX_n|$, even for a moderate number of leaves. Note that $y$-axis is logarithmic.}
    \label{fig:SizeComp}
    \vspace{-10pt}
\end{figure}

\begin{theorem}\label{Thm:SizeOfSS}
    The size of $\mX_n$ equals $\text{Fib}(n+1)$, where $\text{Fib}(n)$ is the $n$th Fibonacci number. 
\end{theorem}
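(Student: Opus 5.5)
The plan is to first characterise explicitly which vectors occur as states of $\mX_n$, then count them column by column, and finally identify the total with $\text{Fib}(n+1)$. By Theorem~\ref{Thm:MarkovEmbeddingTheorem} the non-absorbing states are exactly the vectors that occur as some column of some $\Fmat$. So, with the MRCA counted as one extra state, $|\mX_n| = 1 + \sum_{j=1}^{n-1} C_n(j)$, where $C_n(j)$ is the number of distinct vectors that can appear as column $j$ of an $\Fmat$ of dimension $n-1$.

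\textbf{Step 1 (describing a column).} Column $j$ is zero above the diagonal and has $F_{jj}=j+1$. Going down, $F_{ij}-F_{i-1,j} = -d_i$ with $d_i\in\{0,1\}$. Here $d_i=1$ exactly when the $(i+1)$th branching event splits one of the branches that were extant after event $j$ and are still unsplit. This is the lineage picture of \eqref{eq:X6decomp}: at most one lineage terminates per layer. I would then prove three facts.
\begin{enumerate}
\item $d_{j+1}=1$ is forced, because all $j+1$ branches present are extant at layer $j$.
\item The entries stay nonnegative, and once an entry is $0$ all later $d_i$ vanish.
\item Conversely, every such $0/1$ sequence is realised by some tree.
\end{enumerate}
For the converse, build the first $j$ branchings arbitrarily. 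Afterwards, let event $i+1$ split an unsplit layer-$j$ branch when $d_i=1$, and otherwise split a branch created after layer $j$. Such a branch exists as soon as $d_{j+1}=1$ has occurred. Together these give
\[
C_n(j) \;=\; \sum_{k=0}^{j}\binom{n-2-j}{k},
\]
which counts the binary strings of length $n-2-j$ (the free $d_i$'s) containing at most $j$ ones. The term $j=n-1$ corresponds to $X_0$ and is set to $1$ by convention. As a sanity check, this gives $1,1,2,3$ (total $7$, so $8$ with the MRCA) for $n=5$, matching Figure~\ref{fig:Figure-Fmat}C. It gives $13$ for $n=6$ and $21$ for $n=7$.

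\textbf{Step 2 (the Fibonacci identity).} Using $\sum_{i=1}^{n-1}\text{Fib}(i)=\text{Fib}(n+1)-1$, it suffices to prove
\[
S_n := \sum_{j=1}^{n-1}\sum_{k=0}^{j}\binom{n-2-j}{k} \;=\; \text{Fib}(n+1)-1 .
\]
I would prove this by induction, showing $S_{n+1}-S_n=\text{Fib}(n)$, or equivalently $S_{n+2}=S_{n+1}+S_n+1$. Substituting $m=n-2-j$ and applying Pascal's rule $\binom{m}{k}=\binom{m-1}{k}+\binom{m-1}{k-1}$ should split $S_{n+2}$ into two shifted copies. One copy reproduces $S_{n+1}$, since the truncation $k\le j$ is unchanged. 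The other copy, after shifting $k\mapsto k-1$ and $j\mapsto j-1$, reproduces $S_n$, since the bound $k-1\le j-1$ shifts consistently. The leftover boundary terms (the $k=0$ terms and the $X_0$/$X_1$ columns) should account for the extra $+1$.

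\textbf{Main obstacle.} The combinatorial bookkeeping in Step~2 is where I expect difficulty. The cutoff $k\le j$ is only active when $j<m$, so the boundary terms must be tracked carefully to get exactly $+1$. If that becomes messy, my first fallback is a bijection. I would encode a column by the word $d_{j+1}d_{j+2}\cdots d_{n-1}$ preceded by the layer index $j$, and map the pair (column, position) to a binary word of length $n-2$ with no two consecutive $1$s, whose number is $\text{Fib}(n)$. I would then sum over levels using the same $\sum\text{Fib}$ identity.
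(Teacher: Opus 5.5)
Your proof is correct, but it takes a genuinely different route from the paper.

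\textbf{The paper's route.} The paper never describes the states explicitly. It slices $\mX_n$ by the last coordinate (the number of external lineages), $\mX_n^j=\{\bx\in\mX_n : x_{n-1}=j\}$. It then shows that deleting the last coordinate is a bijection $\mX_n^j\to\mX_{n-1}^j\cup\mX_{n-1}^{j+1}$ for $j\le n-2$, which gives the Fibonacci recursion $|\mX_n^j|=|\mX_{n-1}^j|+|\mX_{n-1}^{j+1}|$ directly. Induction yields $|\mX_n^j|=\text{Fib}(n-1-j)$ for $1\le j\le n-2$ and $|\mX_n^0|=\text{Fib}(n-1)-1$, and summing finishes.

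\textbf{Your route.} You slice by tier instead. You characterise each column as a $0/1$ decrement sequence with a forced first step and total at most $j$, and obtain the closed form $C_n(j)=\sum_{k\le j}\binom{n-2-j}{k}$. The Fibonacci numbers then enter only through a binomial identity.

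\textbf{Your Pascal-rule induction does close.} For $1\le j\le n-1$:
\begin{itemize}
\item the first copy equals $S_{n+1}-1$;
\item the second copy equals $S_n$, because its extra $j'=0$ term $\binom{n-2}{0}=1$ replaces the convention term;
\item the $j=n$ and $j=n+1$ terms of $S_{n+2}$ add $2$.
\end{itemize}
Together these give $S_{n+2}=S_{n+1}+S_n+1$.

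A quicker way: with $m=n-2-j$, group the pairs $(m,k)$ by $s=m+k\le n-2$. The shallow-diagonal identity $\sum_{m+k=s}\binom{m}{k}=\text{Fib}(s+1)$ then gives $S_n=\sum_{s=0}^{n-2}\text{Fib}(s+1)=\text{Fib}(n+1)-1$ at once. The missing pair $(m,k)=(n-2,0)$ is exactly compensated by the convention term for $X_0$, so no fallback bijection is needed.

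\textbf{What each approach buys.} Your route gives an explicit description of the state space, essentially the binary encoding $d(\bx)$ of the appendix. It also gives the exact tier sizes $|T_n^{n-1-j}|=C_n(j)$, which are what control the cost $\sum_k|T_n^k|\,|T_n^{k+1}|$ in Section 6. It also makes explicit the realisability check (a non-splitting step is always available after the forced split), which the paper's surjectivity step uses only tacitly. The paper's route is shorter, avoids all binomial bookkeeping, and gives the counts by number of external lineages as a by-product.

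\textbf{One cosmetic point.} With the paper's ordinal indexing, $d_i=F_{i-1,j}-F_{ij}$ records the $i$th branching event, i.e.\ the one of rank $i+1$. Fix one convention when you write it up.
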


The proof relies on a recursive structure of $\{\mX_n\}_{n=3,4,\cdots}$ and can be found in Section $9.2$ in the appendix. This result shows that $|\mX_n|$ grows asymptotically according to $((1+\sqrt{5})/2)^{n+1}$. In the following section, we demonstrate how this reduction in state space provides substantial computational advantages.

\section{Tree summaries of distributions on $\mT_n$}
Consider a probability distribution $\mu$ on the space of ranked unlabelled trees with $n$ leaves $\mT_n$. Although summarizing real-valued probability distributions is simple via classical statistics, extending these ideas to distributions supported on spaces of discrete structures such as $\mT_n$ is substantially more challenging. Recently, \cite{SamPal} introduced a framework for summarizing such distributions via Fréchet means, defined using a distance metric induced by the $\Fmat$ representation.

Specifically, for trees $T_1,T_2\in \mT_n$, we follow \cite{FmatPaper} and consider the Euclidean norm on the corresponding $\Fmats$
\begin{align}
    d(T_1,T_2) := ||F_1-F_2||_2 = \left\{\sum_{i,j} [(F_1)_{ij}-(F_2)_{ij}]^2\right\}^{1/2}. \label{eq:DistMeasure}
\end{align}
For a distribution $\mu$ on $\mT_n$, a Fréchet mean tree is any tree that minimizes the expected squared distance to all other trees \cite{Frechet1948} and is defined through its encoding by
\begin{align}
    \bar{F} \in \argmin_{F \in \mF_n} \sum_{G \in \mF_n} ||F-G||_2^2\mu(G) = \argmin_{F \in \mF_n} ||F-M||_2^2,
\end{align}
where $M = \sum_{G \in \mF_n} G \mu(G)$, and where $\mF_n$ is the space of $(n-1) \times (n-1)$ $\Fmats$, with the latter equality following from \cite{SamPal}. While this provides a notion of a central tree, the associated optimization problem is notoriously difficult to solve to optimality.

Without the Markov embedding, the optimization must be performed directly over the huge space $\mT_n$. In practice, this problem is formulated as a mixed-integer convex program, which can be solved using general-purpose optimizers such as \texttt{Gurobi} \cite{gurobi}. However, the computation time increases rapidly with $n$; for example, for $n = 23$, the reported runtime on an Intel\textsuperscript{\textregistered} i7 processor is approximately $1000$ seconds  \cite{SamPal}.

Moreover, because $\mT_n$ is discrete, the minimizer does not need to be unique. In particular, under the ranked Kingman coalescent, we find that multiple Fréchet means exist whenever $n \geq 5$. As a consequence, measures of dispersion, and credible and interquartile sets are all dependent on which solution is returned by the optimizer.

With the Markov embedding, the optimization problem can instead be reformulated in terms of paths in the much smaller state space $\mX_n$. Due to the Markov property, the minimization of $||F-M||_2^2$ can be solved by recursion. This reformulation yields two fundamental advantages: First, the recursive structure leads to substantial computational gains. Second, all minimizing paths can be found systematically, eliminating the ambiguity induced by the non-uniqueness.

In the next section, we introduce \texttt{ViTreebi}, an algorithm that exploits the structure of $\mX_n$ to efficiently compute all Fréchet means. 
\subsection{The ViTreebi algorithm}
The \texttt{Viterbi} \cite{Viterbi, HMM} algorithm for decoding hidden Markov models computes a globally most probable path using recursion to avoid an exponential number of paths. Here we introduce the \texttt{ViTreebi} algorithm that computes \textit{all} Fréchet mean trees for a probability distribution with mean tree $M$ using a similar recursion. 

According to Theorem \ref{Thm:MarkovEmbeddingTheorem}, all trees correspond to a path through $\mX_n$. Thus, finding a Fréchet mean tree, corresponds to finding a \textit{Fréchet mean path} $\bar{p} = (\bar{\bx}_0,\cdots,\bar{\bx}_{n-2})$ through $\mX_n$, fulfilling
\begin{align}
    \bar{p} = (\bar{\bx}_0,\cdots,\bar{\bx}_{n-2}) \in \argmin_{(\bx_0, \cdots, \bx_{n-2}) : \bbp(\bx_0,\cdots,\bx_{n-2}) > 0} \sum_{t=0}^{n-2}\sum_{k = 1}^{n-1} (x_{k,t} - M_{k,n-1-t})^2.\label{eq:FréchetMeanPath}
\end{align}
To solve this minimization problem, we define for each state $\bx \in \mX_n$: its order $i(\bx)$, its tier $t(\bx)$ defined as the unique time step $t$, at which $\bx$ can occur, i.e.
\begin{align}
    t(\bx) = t \quad \text{if and only if}\quad \bbp(X_t = \bx) > 0, \label{eq:TierDef}
\end{align}
its cost, $$c(\bx) := \sum_{k=1}^{n-1} (x_k - M_{k,n-1-t(\bx)})^2,$$ and the set of paths leading to it,
\begin{align*}
    \mathcal{P}(\bx) := \{(\bx_0,\cdots,\bx_{t(\bx)-1}) : \bbp(X_0 = \bx_0,\cdots, X_{t(\bx)} = \bx)> 0\}.
\end{align*}

Using these quantities, we define a $(|\mX_n|-1) \times (n-1)$ matrix of costs $\bm{C}$
\begin{align*}
    C_{i(\bx),t} = \begin{cases}
         \min_{(\bx_0,\cdots,\bx_{t-2}) \in \mathcal{P}(\bx)} \left\{\sum_{u = 0}^{t-2} c(\bx_u)\right\} + c(\bx) & \text{if } t(\bx) = t-1\\
         \infty&\text{otherwise}.
    \end{cases}
\end{align*}
By construction,the minimization over all sample paths reduces to a minimization problem over $\bm{C}_{\cdot,n-1}$, so that
\begin{align*}
    \min_{(\bx_0, \cdots, \bx_{n-2}) : \bbp(\bx_0,\cdots,\bx_{n-2}) > 0} \sum_{t=0}^{n-2}\sum_{k = 1}^{n-1} (x_{t,k} - M_{k,n-1-t})^2 = \min_i C_{i,n-1}.
\end{align*}
Similarly to the \texttt{Viterbi} algorithm, the Fréchet mean paths can be found by recursively filling out the cost matrix followed by a backtracking procedure that uncovers the path(s) that lead to the minimum. The following theorem formalizes this procedure.
\begin{theorem}\label{ViTreebiTheorem}
    The costs can be computed recursively as $C_{1,1} = 0$ and
    \begin{align}
        C_{i(\bx),t} = \underset{i \in \mathcal{A}(\bx)}{\min} \{C_{i,t-1}\} + c(\bx),\quad t = 2,\cdots, n-1,\label{ViTreebiRecursion}
    \end{align}
    where $\mathcal{A}(\bx) := \{i(\bm{r}) : \bbp(X_t = \bx \mid X_{t-1} = \bm{r}) > 0\}$ is the set of antecedents of $\bx$.

    The Fréchet mean paths, $\bar{p} = (\bar{\bx}_0,\cdots,\bar{\bx}_{n-2})$, can be found using backtracking by letting
    \begin{align}
        i(\bar{\bx}_{n-2}) \in \underset{i}{\argmin}\{C_{i,n-1}\},\label{Backtrack1}
    \end{align}
    and 
    \begin{align}
        i(\bar{\bx}_{n-1-t}) \in \underset{i \in \mathcal{A}(\bar{\bx}_{n-t})}{\argmin}\{C_{i,n-t}\},\quad t = 2, \cdots, n-1.\label{Backtrack2}
    \end{align}
\end{theorem}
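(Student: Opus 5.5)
This is a dynamic-programming (Bellman principle) argument on the layered structure of $\mX_n$, and I would argue by induction on $t$.

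\emph{Step 1: layered structure.} By \eqref{eq:TierDef}, every state has a unique tier. Every transition of positive probability goes from tier $t-1$ to tier $t$, since $X_{t-1}=\bm r$ and $X_t=\bx$ force $t(\bm r)=t-1$ and $t(\bx)=t$. It follows that $\mathcal{P}(\bx)$ is exactly the set of concatenations of a path in $\mathcal{P}(\bm r)$ with the state $\bm r$, where $\bm r$ ranges over the antecedents of $\bx$. This uses the Markov property: a path has positive probability if and only if each of its one-step transitions does. Hence
\[
\mathcal{P}(\bx)=\bigcup_{\bm r:\, i(\bm r)\in\mathcal{A}(\bx)}\{(p,\bm r): p\in\mathcal{P}(\bm r)\}.
\]

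\emph{Step 2: base case and induction.} For $t=1$, the only state at tier $0$ is $X_0=(0,\dots,0,n)^\transpose$, which has order $1$. Its cost vanishes if $M$'s last column equals $X_0$, which is the case because every $\Fmat$ has that last column. This gives $C_{1,1}=0$, and all other entries are $\infty$.

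For the inductive step, fix $\bx$ with $t(\bx)=t-1$. Split the minimum over $\mathcal{P}(\bx)$ according to the decomposition in Step 1, and use that the objective is additive along the path:
\[
\min_{\mathcal{P}(\bx)}\sum_u c(\bx_u)
=\min_{i(\bm r)\in\mathcal{A}(\bx)}\Bigl(\min_{p\in\mathcal{P}(\bm r)}\sum_{u} c(p_u)+c(\bm r)\Bigr)
=\min_{i\in\mathcal{A}(\bx)} C_{i,t-1}.
\]
The second equality is the induction hypothesis. Adding $c(\bx)$ gives \eqref{ViTreebiRecursion}. States not at tier $t-1$ receive $\infty$ in column $t$; this is consistent with the recursion because their antecedents lie in other columns, where the entries are $\infty$.

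\emph{Step 3: backtracking.} By the identity stated before the theorem, the minimal total cost equals $\min_i C_{i,n-1}$, which justifies \eqref{Backtrack1}. For \eqref{Backtrack2}, I would show by downward induction the following: a partial path $(\bar\bx_{n-t},\dots,\bar\bx_{n-2})$ extends to an optimal full path if and only if the prefix ending at $\bar\bx_{n-t}$ is optimal for $C_{i(\bar\bx_{n-t}),n-t+1}$. The "if" direction is the exchange argument: replacing a prefix by a cheaper one lowers the total cost. The converse follows from the equality in Step 2, since choosing a predecessor in $\argmin_{i\in\mathcal{A}}C_{i,n-t}$ attains $C_{i(\bar\bx_{n-t}),n-t+1}-c(\bar\bx_{n-t})$. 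Consequently, every sequence of argmin choices yields a Fréchet mean path. Conversely, every Fréchet mean path arises from some sequence of argmin choices, so taking all argmin branches recovers \emph{all} Fréchet means. Combined with Theorem \ref{Thm:MarkovEmbeddingTheorem}, these paths are exactly the Fréchet mean trees.

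The main obstacle is Step 1. One must check that tiers are well defined and that the path sets decompose exactly as stated, so that no spurious or missing paths enter the minimisation. The index bookkeeping between $t$, $t(\bx)$ and $n-1-t$ is fiddly but routine. The remainder is the standard Viterbi argument.
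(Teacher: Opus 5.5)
Your proposal is correct and follows essentially the paper's proof. Both split the minimum over $\mathcal{P}(\bx)$ according to the last state $\bm r$ with $i(\bm r)\in\mathcal{A}(\bx)$, and identify the inner minimum plus $c(\bm r)$ with $C_{i(\bm r),t-1}$; that identification is just the definition of $C$, so no induction hypothesis is needed. Your version is more complete than the paper's in two places: you verify $C_{1,1}=0$ from the fixed last column of every $\mathbb{F}$-matrix, and you justify the backtracking rules, including that taking all argmin branches recovers every Fréchet mean path, which the paper's proof leaves implicit.
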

\begin{proof}
    We prove formula \eqref{ViTreebiRecursion}. The case $t = 2$ is simple, as there is only one possible transition at this step, namely the transition from state $1$ to state $2$ (see Figure \ref{fig:Figure-Fmat}C). For $\bx \in \mX_n$ with $t(\bx) > 2$ we have
    \begin{align*}
        C_{i(\bx),t} &= \min_{(\bx_0,\cdots,\bx_{t-2}) \in \mathcal{P}(\bx)} \left\{\sum_{u = 0}^{t-2} c(\bx_u)\right\} + c(\bx)\\ 
        &= \min_{\bm{r} \in \mathcal{A}(\bx)}\left\{\min_{(\bx_0,\cdots,\bx_{t-3}) \in \mathcal{P}(\bm{r})} \left\{\sum_{u=0}^{t-3} c(\bm{s}_u)\right\} +c(\bm{r})\right\} + c(\bx) \\
        & =\min_{\bm{r} \in \mathcal{A}(\bx)}\left\{C_{i(\bm{r}),t-1}\right\} + c(\bx).
    \end{align*}
\end{proof}\noindent
In Figure \ref{fig:CostMatrix6}, we demonstrate how the \texttt{ViTreebi} algorithm works in the case where $n=6$, with the underlying distribution $\mu$ corresponds to the Kingman model, which determines the costs $c(\bx)$.

\begin{figure}[H]
    \centering
    \begin{tikzpicture}[scale = 0.75]
        \node (Table) {$\renewcommand{\arraystretch}{0.85}\begin{tabular}{c|c c c c}
         $\bx$&$i(\bx)$&$t(\bx)$&$c(\bx)$&$\mathcal{A}(\bx)$  \\ \hline
         $(0,0,0,0,6)$&$1$&$0$&$0$& \\
         $(0,0,0,5,4)$&$2$&$1$&$0$&$\{1\}$ \\
         $(0,0,4,3,3)$&$3$&$2$&$9/25$&$\{2\}$\\
         $(0,0,4,3,2)$&$4$&$2$&$4/25$&$\{2\}$\\
         $(0,3,2,2,2)$&$5$&$3$&$89/100$&$\{3,4\}$\\
         $(0,3,2,2,1)$&$6$&$3$&$29/100$&$\{3,4\}$\\
         $(0,3,2,1,1)$&$7$&$3$&$29/100$&$\{3,4\}$ \\ 
         $(0,3,2,1,0)$&$8$&$3$&$169/100$&$\{4\}$ \\ 
         $(2,1,1,1,1)$&$9$&$4$&$649/900$&$\{5,6,7\}$ \\ 
         $(2,1,1,1,0)$&$10$&$4$&$469/900$&$\{6,8\}$ \\ 
         $(2,1,1,0,0)$&$11$&$4$&$469/900$&$\{7,8\}$ \\ 
         $(2,1,0,0,0)$&$12$&$4$&$769/900$&$\{5,6,7,8\}$ \\ 
    \end{tabular}$};

    \node (Cost matrix)  at ([shift={(11cm,-0.1cm)}] Table) {
        \begin{tikzpicture}[node distance=0.4cm]
    \matrix (M) [matrix of math nodes,
             left delimiter=(,
             right delimiter=),
             row sep=-2mm,
             column sep=2mm]
{
  0 & \cdot & \cdot & \cdot & \cdot \\
  \cdot & 0 & \cdot & \cdot & \cdot \\
  \cdot & \cdot & 9/25 & \cdot & \cdot \\
  \cdot & \cdot & 4/25 & \cdot & \cdot \\
  \cdot & \cdot & \cdot & 105/100 & \cdot \\
  \cdot & \cdot & \cdot & 45/100 & \cdot \\
  \cdot & \cdot & \cdot & 45/100 & \cdot \\
  \cdot & \cdot & \cdot & 185/100 & \cdot \\
  \cdot & \cdot & \cdot & \cdot & 1054/900 \\
  \cdot & \cdot & \cdot & \cdot & 874/900 \\
  \cdot & \cdot & \cdot & \cdot & 874/900 \\
  \cdot & \cdot & \cdot & \cdot & 1174/900 \\
};
    
    \node[above] at ($(M-1-1.center)+(-0.3mm,2.8mm)$) {\textbf{0}};
    \node[above] at ($(M-1-2.center)+(0,3.14mm)$) {\textbf{1}};
    \node[above] at ($(M-1-3.center)+(0,3.14mm)$) {\textbf{2}};
    \node[above] at ($(M-1-4.center)+(0,3.14mm)$) {\textbf{3}};
    \node[above] at ($(M-1-5.center)+(0.3mm,2.8mm)$) {\textbf{4}};
    \foreach \row in {1,...,12} {
    \node[right=2mm of M.east |- M-\row-1.center] {\textbf{\row}};
}
    
    \draw[->, red, thick] (M-10-5.west) to[out=180,in=0] (M-6-4.east);
    \draw[->, red, thick] (M-6-4.west) to[out=180,in=0] (M-4-3.east);
    \draw[->, red, thick] (M-4-3.west) to[out=180,in=0] (M-2-2.east);
    \draw[->, red, thick] (M-2-2.west) to[out=180,in=0] (M-1-1.east);

    \draw[->, blue, thick] (M-11-5.west) to[out=180,in=0] (M-7-4.east);
    \draw[->, blue, thick] (M-7-4.west) to[out=180,in=0] (M-4-3.east);
    \draw[->, blue, thick] (M-4-3.west) to[out=90,in=0] (M-2-2.east);
    \draw[->, blue, thick] (M-2-2.west) to[out=90,in=0] (M-1-1.east);
    \end{tikzpicture}};

    \node[anchor=west] [below=0.45cm of Table] (Tree1) {\begin{tikzpicture}
        \draw[red] (9.5, 2.85) -- (13.5,2.85);
    \draw[red] (9.5, 2.85) -- (9.5, 2.31);
    \draw[red] (8.83, 2.31) -- (10.17, 2.31);
    \draw[red] (8.83, 2.31) -- (8.83, 1.77);
    \draw[red] (8.17, 1.77) -- (9.5, 1.77);
    \draw[red] (8.17, 1.77) -- (8.17, 1.23);
    \draw[red] (7.5, 1.23) -- (8.83, 1.23);
    \draw[red] (7.5, 1.23) -- (7.5, 0.15);
    \draw[red] (8.83, 1.23) -- (8.83, 0.15);
    \draw[red] (13.5,2.85) -- (13.5,0.69);
    \draw[red] (12.83,0.69) -- (14.17,0.69);
    \draw[red] (12.83,0.69) -- (12.83,0.15);
    \draw[red] (14.17,0.69) -- (14.17,0.15);
    \draw[red] (10.17,2.31) -- (10.17, 0.15);
    \draw[red] (9.5,1.77) -- (9.5,0.15);
    \node at (11.5,3.2) {$2$};
    \node at (9.37,2.7) {$3$};
    \node at (8.7,2.1) {$4$};
    \node at (8.05,1.6) {$5$};
    \node at (13.35,1.1) {$6$};
    \end{tikzpicture}};

    \node[anchor=west] [right=of Tree1] (Tree2) {\begin{tikzpicture}
    \draw[blue] (9.5, -0.12) -- (13.5,-0.12);
    \draw[blue] (9.5, -0.12) -- (9.5, -0.66);
    \draw[blue] (8.83, -0.66) -- (10.17, -0.66);
    \draw[blue] (8.83, -0.66) -- (8.83, -1.2);
    \draw[blue] (8.17, -1.2) -- (9.5, -1.2);
    \draw[blue] (8.17, -1.2) -- (8.17, -2.28);
    \draw[blue] (7.5, -2.28) -- (8.83, -2.28);
    \draw[blue] (7.5, -2.28) -- (7.5, -2.82);
    \draw[blue] (8.83, -2.28) -- (8.83, -2.82);
    \draw[blue] (9.5, -1.2) -- (9.5,-2.82);
    \draw[blue] (10.17, -0.66) -- (10.17, -2.82);
    \draw[blue] (13.5, -0.12) -- (13.5, -1.74);
    \draw[blue] (12.83, -1.74) -- (14.17, -1.74);
    \draw[blue] (12.83, -1.74) -- (12.83,-2.82);
    \draw[blue] (14.17, -1.74) -- (14.17,-2.82);

    \node at (11.25, 0.05) {$2$};
    \node at (9.1, -0.45) {$3$};
    \node at (8.43, -1.05) {$4$};
    \node at (13.1, -1.6) {$5$};
    \node at (7.77, -2.1) {$6$};
    \end{tikzpicture}};

    \node (HeaderNode) at ([shift={(-5.2cm,3.8cm)}] Table)  {};
    \node[anchor=west] at ([shift={(0cm,0cm)}] HeaderNode) {\Large{\textbf{A:} State quantities for $n=6$}};
    \node[anchor=west] at ([shift={(10cm,0cm)}] HeaderNode) {\Large{\textbf{B:} Cost matrix with $n=6$}};
    \node[anchor=west] at ([shift={(0cm,-7.6cm)}] HeaderNode) {\Large{\textbf{C:} The two Kingman Fréchet mean trees for $n=6$}};
    \end{tikzpicture}
    \caption{{\bf A:} Table containing all non-absorbing states for $n=6$, along the quantities used in the proof of Theorem \ref{ViTreebiTheorem} {\bf B:} The cost matrix of the \texttt{ViTreebi} algorithm with $n=6$ lineages. We find that the cheapest cost is $874/900$ obtained by the two Fréchet paths $\bar{p}_1 = (1,2,4,6,10)$ (red), $\bar{p}_2 = (1,2,4,7,11)$ (blue). {\bf C:} The corresponding Fréchet mean trees drawn with colors matching the arrows.}
    \label{fig:CostMatrix6}
\end{figure}
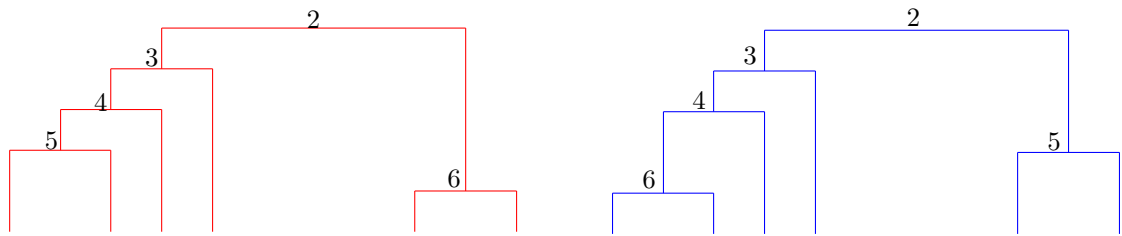

While this example illustrates the mechanics of \texttt{ViTreebi} for a small tree, the algorithm scales efficiently to larger trees. Figure \ref{fig:ViTreebiFigure} demonstrates \texttt{ViTreebi}'s ability to find all Fréchet means with dramatically reduced computation time.
\begin{figure} [H]
    \centering
    \begin{tikzpicture}[scale = 0.95]
    \node (CompTimePlot) {\includegraphics[scale=0.85]{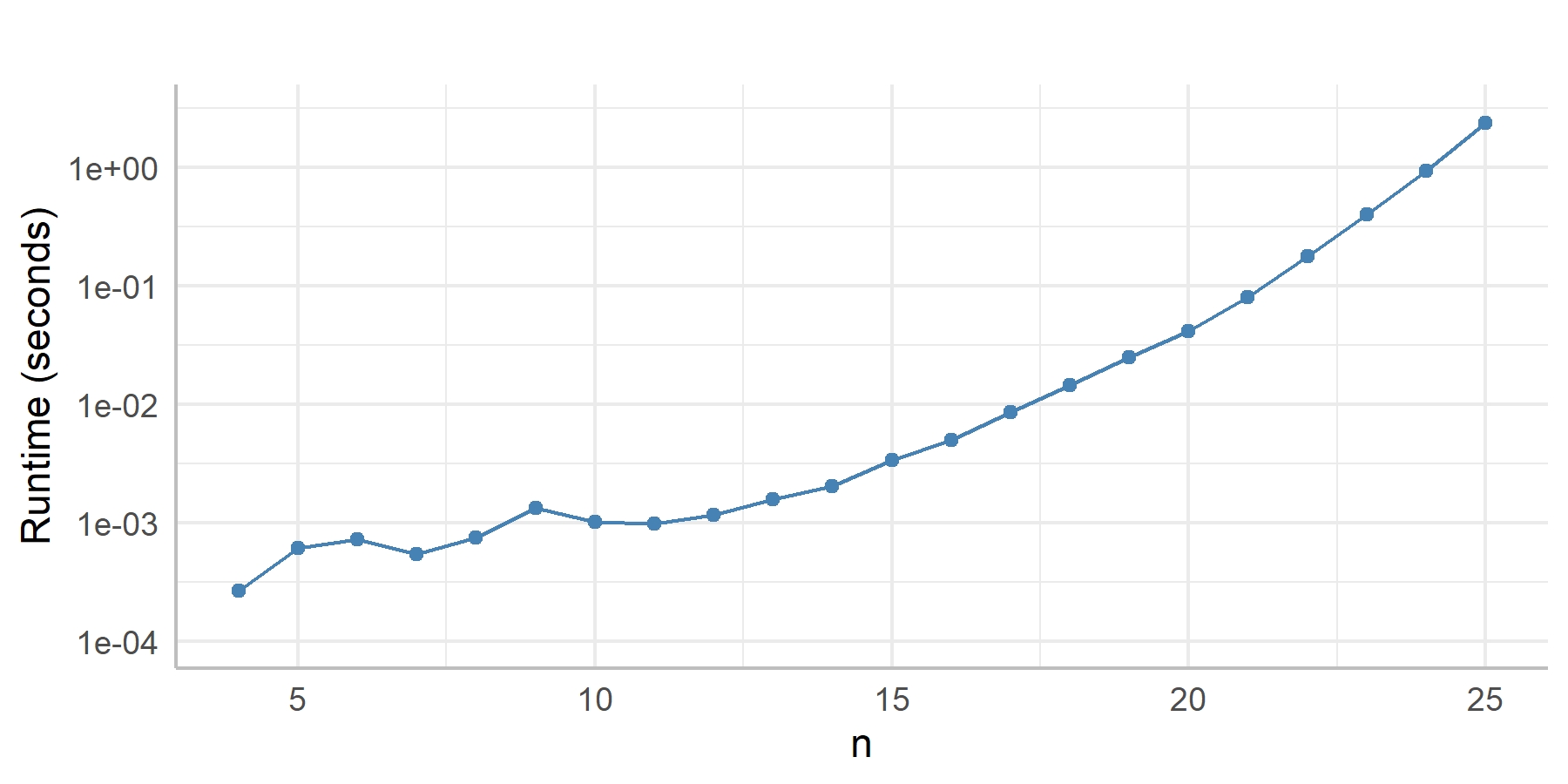}};
    \node[anchor=west] at ([shift={(-7.5cm,-5.25cm)}] CompTimePlot) {\includegraphics{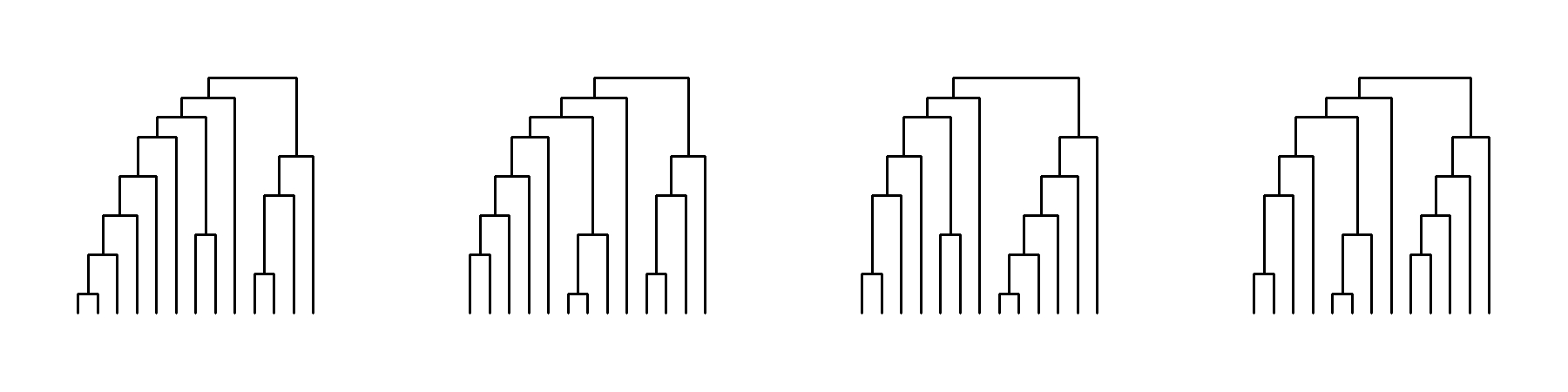}};
    \node[anchor=west] at ([shift={(-7.5cm,-8.85cm)}] CompTimePlot) {\includegraphics{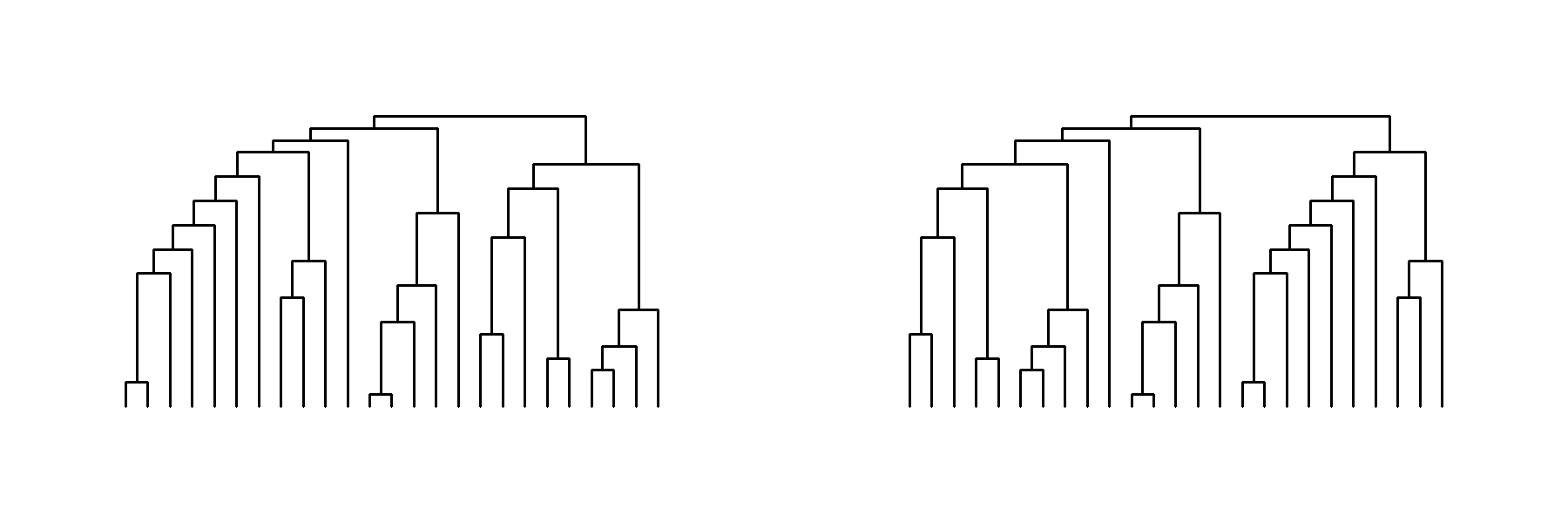}};
    \node at ([shift={(0cm,-12.75cm)}] CompTimePlot) {\includegraphics{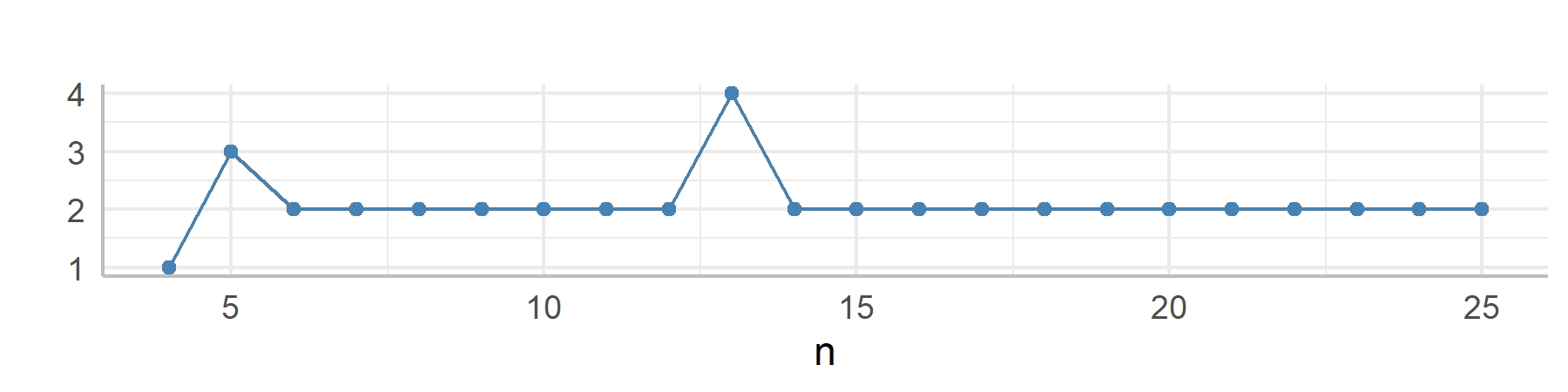}};

    \node (HeaderNode) at ([shift={(-8cm,3cm)}] CompTimePlot)  {};
    \node[anchor=west] at ([shift={(0cm,0cm)}] HeaderNode) {\Large{\textbf{A:} Runtime against number of leaves}};
    \node[anchor=west] at ([shift={(0cm,-6.5cm)}] HeaderNode) {\Large{\textbf{B:} The four Kingman Fréchet mean trees with $n=13$}};
    \node[anchor=west] at ([shift={(0cm,-9.8cm)}] HeaderNode) {\Large{\textbf{C:} The two Kingman Fréchet mean trees with $n=25$}};

    \node[anchor=west] at ([shift={(0cm,-13.9cm)}] HeaderNode) {\Large{\textbf{D:} The number of Kingman Fréchet mean trees against number of leaves}};
    \end{tikzpicture}
    \caption{Results from the \texttt{ViTreebi} algorithm. {\bf A:} average running times based on $1000$ runs for increasing number of leaves $n$. Computations were performed on a laptop with an Intel\textsuperscript{\textregistered} i5 processor. {\bf B:} we demonstrate \texttt{ViTreebi}'s ability to find all Fréchet means. {\bf C:} the two Fréchet means with $n=25$. {\bf D:} the total number of Kingman Fréchet means for each $n$ between $4$ and $25$.} 
    \label{fig:ViTreebiFigure}
\end{figure}

For example, both Fréchet means for $n=25$ lineages depicted in Figure \ref{fig:ViTreebiFigure}C are computed in less than three seconds. This represents a dramatic improvement not only in computation time, but also in the maximum tree size for which exact solutions can be obtained, compared to \cite{SamPal}. The combination of this computational speed-up and the ability to find all Fréchet means instead of just one highlights the practical importance of the Markov embedding. 

\section{Balance indices in the ranked coalescent}
The balance of phylogenetic trees has received considerable attention in the literature, as tree topology and tree balance can greatly influence phylogenetic diversity \cite{PhylogenDiversity}. The monograph \cite{TreeBalance} provides an extensive list of tree balance indices. In this section, we propose using the $\Fmat$ representation to construct two indices of tree balance, and in the following section we provide matrix analytical formulae to describe the distribution of these indices.

Recall the decomposition \eqref{eq:X6decomp}. Lineages that have yet to coalesce and are external in the tree take the shape $(0,\cdots,0,1,\cdots,1)^\transpose$. In balanced trees, these external lineages tend to coalesce early, leaving few long-lived external lineages. In contrast, imbalanced trees retain many such long-lived external lineages. We therefore propose using the total number of external lineages across all coalescent events as an index of tree balance and name this the external branch length. For a ranked unlabelled tree with $\Fmat$ $F$, this is given by
\begin{align}
    E(F) := \sum_{j = 1}^{n-1} F_{n-1,j}.\label{eq:E}
\end{align}

Although $E$ measures tree balance through the number of external lineages alone, it does not account for differences between internal lineages. To further emphasize imbalance, we seek an index that accounts not only for the presence of external lineages, but also for how long lineages, internal or external, persist through successive coalescent events.
\begin{figure}[H]
    \centering
    \begin{tikzpicture}
        \node (Tree1) {\begin{tikzpicture}[line width = 0.5mm]
        \node at (-5, 0.4) {\Large{\textbf{A}}};
        \draw (-1, 0.4) -- (-1, 0);
        \draw (-2, 0) -- (0,0);
        \draw [red](0, 0) -- (0, -3);
        \draw (-2, 0) -- (-2, -0.33);
        \draw (-2.5, -0.33) -- (-1.5, -0.33);
        \draw (-1.5, -0.33) -- (-1.5, -1.67);
        \draw (-2.5, -0.33) -- (-2.5, -0.67);
        \draw (-2.83, -0.67) -- (-2.17,-0.67);
        \draw [red](-2.17, -0.67) -- (-2.17, -3);
        \draw (-2.83, -0.67) -- (-2.83, -1);
        \draw (-3.16, -1) -- (-2.5, -1);
        \draw [red](-2.5, -1) -- (-2.5, -3);
        \draw (-3.16, -1) -- (-3.16, -1.33);
        \draw (-3.5, -1.33) -- (-2.83, -1.33);
        \draw (-1.83, -1.67) -- (-1.17,-1.67);
        \draw [red](-1.83, -1.67) -- (-1.83,-3);
        \draw [red](-1.17, -1.67) -- (-1.17,-3);
        \draw (-3.5, -1.33) -- (-3.5, -2);
        \draw [red](-2.83, -1.33) -- (-2.83, -3);
        \draw (-3.83, -2) -- (-3.17,-2);
        \draw (-3.83,-2) -- (-3.83,-2.33);
        \draw [red](-3.17,-2) -- (-3.17, -3);
        \draw (-4.16,-2.33) -- (-3.5, -2.33);
        \draw (-4.16,-2.33) -- (-4.16,-2.67);
        \draw [red](-3.5,-2.33) -- (-3.5,-3);
        \draw (-4.5,-2.67) -- (-3.83, -2.67);
        \draw [red](-4.5,-2.67) -- (-4.5,-3);
        \draw [red](-3.83,-2.67) -- (-3.83,-3);

        \node[text width = 1cm] at(-0.8,0.2) {$2$};
        \node[text width = 1cm] at(-1.75,-0.17) {$3$};
        \node[text width = 1cm] at(-2.25,-0.5) {$4$};
        \node[text width = 1cm] at(-2.67,-0.83) {$5$};
        \node[text width = 1cm] at(-3,-1.17) {$6$};
        \node[text width = 1cm] at(-1.25,-1.5) {$7$};
        \node[text width = 1cm] at(-3.33,-1.83) {$8$};
        \node[text width = 1cm] at(-3.67,-2.17) {$9$};
        \node[text width = 1cm] at(-4.17,-2.5) {$10$};

        \draw[gray, dashed] (-4.5, -2.67) -- (0.5, -2.67);
        \draw[gray, dashed] (-4.5, -2.33) -- (0.5, -2.33);
        \draw[gray, dashed] (-4.5, -2) -- (0.5, -2);
        \draw[gray, dashed] (-4.5, -1.67) -- (0.5, -1.67);
        \draw[gray, dashed] (-4.5, -1.33) -- (0.5, -1.33);
        \draw[gray, dashed] (-4.5, -1) -- (0.5, -1);
        \draw[gray, dashed] (-4.5, -0.67) -- (0.5, -0.67);
        \draw[gray, dashed] (-4.5, -0.33) -- (0.5, -0.33);

        \node[red, text width = 0.8cm] at(0.5,-2.83) {$10$};
        \node[red, text width = 0.8cm] at(0.5,-2.5) {$8$};
        \node[red, text width = 0.8cm] at(0.5,-2.16) {$7$};
        \node[red, text width = 0.8cm] at(0.5,-1.83) {$6$};
        \node[red, text width = 0.8cm] at(0.5,-1.5) {$4$};
        \node[red, text width = 0.8cm] at(0.5,-1.16) {$3$};
        \node[red, text width = 0.8cm] at(0.5,-0.83) {$2$};
        \node[red, text width = 0.8cm] at(0.5,-0.5) {$1$};
        \node[red, text width = 0.8cm] at(0.5,-0.16) {$1$};
    \end{tikzpicture}};

    \node [right=of Tree1] (Tree 2) {\begin{tikzpicture}[line width = 0.5mm]
        \node at (-2.83, 0.4) {\Large{\textbf{B}}};
        \draw(0.33, 0.4) -- (0.33, 0);
        \draw(-1, 0) -- (1.67, 0);
        \draw(-1, 0) -- (-1, -0.33);
        \draw(-1.67,-0.33) -- (0,-0.33);
        \draw(-1.67,-0.33) -- (-1.67,-1);
        \draw(1.67,0) -- (1.67,-0.67);
        \draw(1.33, -0.67) -- (2, -0.67);
        \draw(0, -0.33) -- (0, -2);
        \draw(-2, -1) -- (-1, -1);
        \draw(-2, -1) -- (-2,-2.33);
        \draw(-1,-1) -- (-1,-2.67);
        \draw (-2.33,-2.33) -- (-1.67,-2.33);
        \draw [red](-2.33,-2.33) -- (-2.33,-3);
        \draw [red](-1.67,-2.33) -- (-1.67, -3);
        \draw (-1.33,-2.67) -- (-0.67, -2.67);
        \draw [red](-1.33,-2.67) -- (-1.33,-3);
        \draw [red](-0.67,-2.67) -- (-0.67,-3);
        \draw (-0.33,-2) -- (0.33,-2);
        \draw [red](-0.33,-2) -- (-0.33,-3);
        \draw [red](0.33,-2) -- (0.33,-3);
        \draw (1.33, -0.67) -- (1.33,-1.33);
        \draw (1,-1.33) -- (1.67,-1.33);
        \draw (1,-1.33) -- (1,-1.67);
        \draw [red](1.67,-1.33) -- (1.67,-3);
        \draw (0.67,-1.67) -- (1.33,-1.67);
        \draw [red](0.67,-1.67) -- (0.67,-3);
        \draw [red](1.33,-1.67) -- (1.33,-3);
        \draw [red](2,-0.67) -- (2,-3);
        
        \node[text width = 1cm] at(-0.1,0.2) {$2$};
        \node[text width = 1cm] at(-1.4,-0.17) {$3$};
        \node[text width = 1cm] at(1.3,-0.5) {$4$};
        \node[text width = 1cm] at(-2.1,-0.83) {$5$};
        \node[text width = 1cm] at(0.97,-1.17) {$6$};
        \node[text width = 1cm] at(0.64,-1.5) {$7$};
        \node[text width = 1cm] at(-0.4,-1.83) {$8$};
        \node[text width = 1cm] at(-2.4,-2.17) {$9$};
        \node[text width = 1cm] at(-1.5,-2.5) {$10$};
        
        \draw[gray, dashed] (-2.25, -2.67) -- (2.5, -2.67);
        \draw[gray, dashed] (-2.25, -2.33) -- (2.5, -2.33);
        \draw[gray, dashed] (-2.25, -2) -- (2.5, -2);
        \draw[gray, dashed] (-2.25, -1.67) -- (2.5, -1.67);
        \draw[gray, dashed] (-2.25, -1.33) -- (2.5, -1.33);
        \draw[gray, dashed] (-2.25, -1) -- (2.5, -1);
        \draw[gray, dashed] (-2.25, -0.67) -- (2.5, -0.67);
        \draw[gray, dashed] (-2.25, -0.33) -- (2.5, -0.33);
        
        \node[red, text width = 0.8cm] at(2,-2.83) {$10$};
        \node[red, text width = 0.8cm] at(2,-2.5) {$8$};
        \node[red, text width = 0.8cm] at(2,-2.16) {$6$};
        \node[red, text width = 0.8cm] at(2,-1.83) {$4$};
        \node[red, text width = 0.8cm] at(2,-1.5) {$2$};
        \node[red, text width = 0.8cm] at(2,-1.16) {$1$};
        \node[red, text width = 0.8cm] at(2,-0.83) {$1$};
        \node[red, text width = 0.8cm] at(2,-0.5) {$0$};
        \node[red, text width = 0.8cm] at(2,-0.16) {$0$};

    \end{tikzpicture}};

    \node [below =0cm of Tree1] (F1) {$F_1 =$\begin{tikzpicture}[baseline=(m.center)]
\node (m) [matrix of math nodes,
            left delimiter=(,
            right delimiter=),
            nodes in empty cells,
            nodes={minimum size=4.5mm, anchor=center},
            row sep = 0mm,
            column sep = 0mm] {
    2 & 0 & 0 & 0 & 0 & 0 & 0 & 0 & 0 \\
    1 & 3 & 0 & 0 & 0 & 0 & 0 & 0 & 0 \\
    1 & 2 & 4 & 0 & 0 & 0 & 0 & 0 & 0 \\
    1 & 2 & 3 & 5 & 0 & 0 & 0 & 0 & 0 \\
    1 & 2 & 3 & 4 & 6 & 0 & 0 & 0 & 0 \\
    1 & 1 & 2 & 3 & 5 & 7 & 0 & 0 & 0 \\
    1 & 1 & 2 & 3 & 4 & 6 & 8 & 0 & 0 \\
    1 & 1 & 2 & 3 & 4 & 6 & 7 & 9 & 0 \\
    1 & 1 & 2 & 3 & 4 & 6 & 7 & 8 & 10 \\
};

\foreach \j in {1,...,9} {
    \draw[red] (m-9-\j.north west) rectangle (m-9-\j.south east);
}

\foreach \i in {3,...,9} {
    \foreach \j in {1,...,\numexpr\i-2\relax} {
        \draw[blue] (m-\i-\j.north west) rectangle (m-\i-\j.south east);
    }
}
\foreach \j in {1,...,7} {
    \draw[purple, ultra thick] (m-9-\j.north west) rectangle (m-9-\j.south east);
}
\end{tikzpicture}};
\node [right=of F1] (F2) {    $F_2 =$ \begin{tikzpicture}[baseline=(m.center)]
\node (m) [matrix of math nodes,
            left delimiter=(,
            right delimiter=),
            nodes in empty cells,
            nodes={minimum size=4.5mm, anchor=center},
            row sep = 0mm,
            column sep = 0mm] {
    2 & 0 & 0 & 0 & 0 & 0 & 0 & 0 & 0 \\
    1 & 3 & 0 & 0 & 0 & 0 & 0 & 0 & 0 \\
    0 & 2 & 4 & 0 & 0 & 0 & 0 & 0 & 0 \\
    0 & 1 & 3 & 5 & 0 & 0 & 0 & 0 & 0 \\
    0 & 1 & 2 & 4 & 6 & 0 & 0 & 0 & 0 \\
    0 & 1 & 2 & 4 & 5 & 7 & 0 & 0 & 0 \\
    0 & 0 & 1 & 3 & 3 & 6 & 8 & 0 & 0 \\
    0 & 0 & 1 & 2 & 3 & 5 & 7 & 9 & 0 \\
    0 & 0 & 1 & 1 & 2 & 4 & 6 & 8 & 10 \\
};

\foreach \j in {1,...,9} {
    \draw[red] (m-9-\j.north west) rectangle (m-9-\j.south east);
}

\foreach \i in {3,...,9} {
    \foreach \j in {1,...,\numexpr\i-2\relax} {
        \draw[blue] (m-\i-\j.north west) rectangle (m-\i-\j.south east);
    }
}
\foreach \j in {1,...,7} {
    \draw[purple, ultra thick] (m-9-\j.north west) rectangle (m-9-\j.south east);
}
\end{tikzpicture}};
\node[above=0cm of $(Tree1.north)!0.5!(Tree 2.north)$]
{\Large{Illustration of balance indices $E$ and $S$}};
    \end{tikzpicture}
    \caption{Demonstrating that $E$ and $S$ capture tree balance. {\bf A:} An imbalanced tree and its corresponding $\Fmat$. {\bf B:} A balanced tree along with its corresponding $\Fmat$. For the left tree we find $E(F_1) = 42$ and $S(F_1) = 69$. For the right tree $E(F_2) = 32$ and $S(F_2) = 43$. Entries contributing to $E$ are highlighted in red, entries contributing to $S$ are highlighted in blue, and entries contributing to both are highlighted in purple.
}
    \label{fig:EFig}
\end{figure}
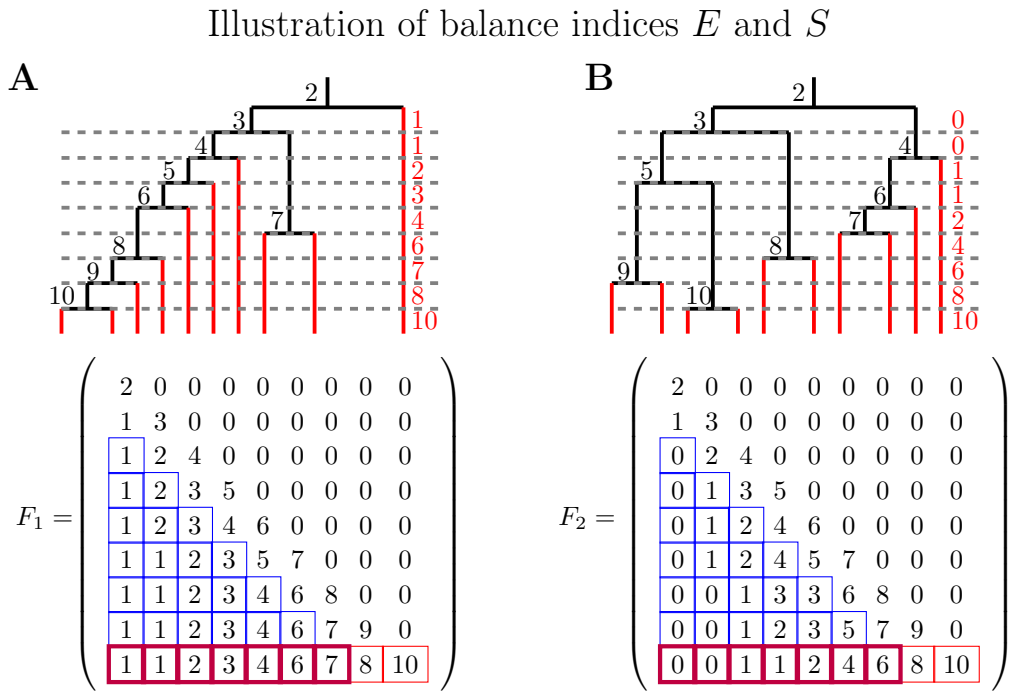\noindent 
As an example, recall the decomposition \eqref{eq:X6decomp} and consider two internal lineages given by $(0,\cdots,1,0,\cdots,0,0)^\transpose$ and $(0,\cdots,1,1,\cdots,1,0)^\transpose$, respectively. While the lineage $(0,\cdots,1,1,\cdots,1,0)^\transpose$ has only coalesced once, the lineage $(0,\cdots,1,0,\cdots,0,0)^\transpose$ is rooted much deeper in the tree. Such persistence is encoded in the non-fixed entries of the corresponding $\Fmat$. Each non fixed entry $F_{ij}, j = 1,\cdots,n-3, i = j+2,\cdots,n-1$ counts the number of lineages extant at time $j$ that survive until time $i$. Thus, summing these entries measures the total lineage persistence across the tree. We therefore propose the index
\begin{align}
    S(F) := \sum_{j=1}^{n-3}\sum_{i=j+2}^{n-1} F_{ij}, \label{eq:SumNonFixed}
\end{align}
which measures the total persistence of lineages across the coalescent process. 

The new balance index $S$ provides a complementary perspective to classical tree balance indices such as the Sackin and Colless indices \cite{TreeBalance}. The Sackin index measures tree balance in terms of cumulative distance of all leaves to the root, while the Colless index quantifies asymmetry at each branching event. Both indices are defined for unlabelled trees and therefore do not incorporate the ranking of coalescent events. In contrast, $S$ characterizes balance through the persistence of lineages throughout the tree and is tailored to ranked tree shapes. Despite this difference, the indices correlate well, as illustrated in Figure \ref{fig:SackinColless}.
\captionsetup{skip=0pt}
\begin{figure} [H]
    \centering
    \begin{tikzpicture}
        \node (Picture) {\includegraphics[width=1\linewidth]{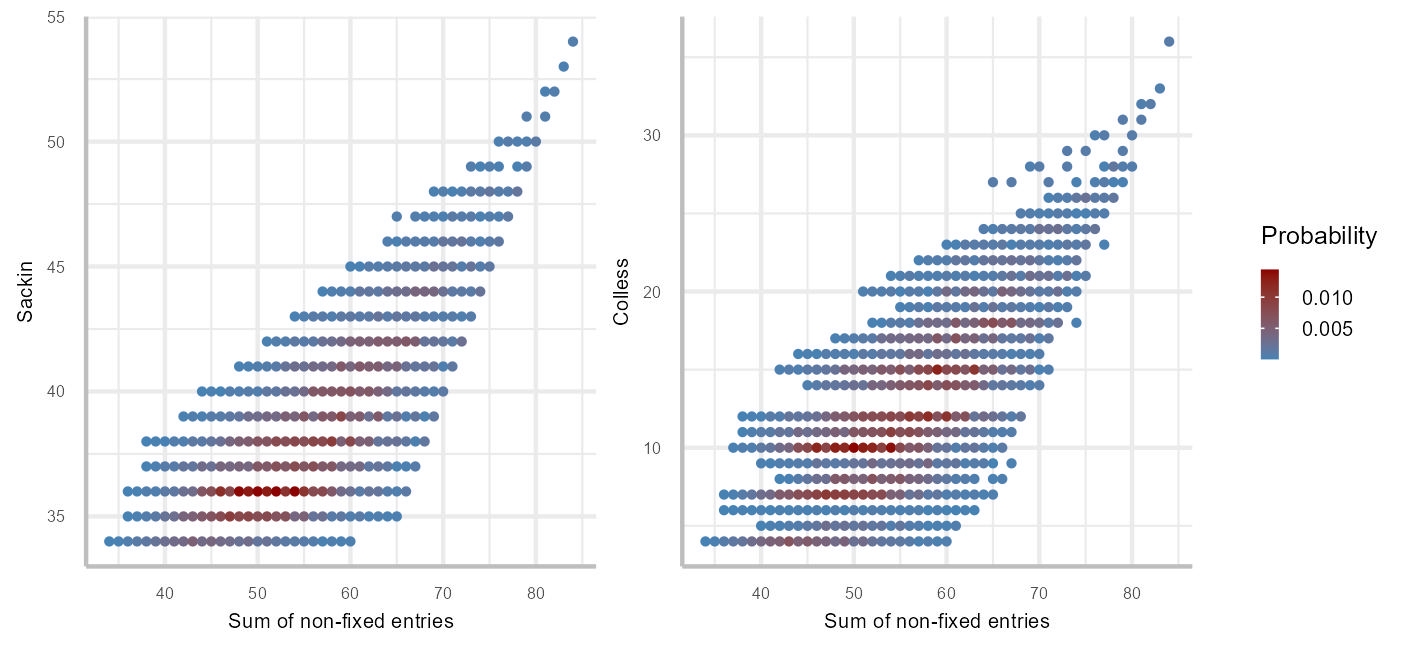}};
        \node (HeaderNode) [above=0cm of Picture] {};
        \node[anchor=west] at ([shift={(-8.25cm,0cm)}] HeaderNode) {\Large{\textbf{A:} Sackin vs $S$}};
        \node[anchor=west] at ([shift={(-1.25cm,0cm)}] HeaderNode) {\Large{\textbf{B:} Colless vs $S$}};
    \end{tikzpicture}
    
    \caption{Sackin {\bf A} and Colless {\bf B} index against $S$ for all $7936$ ranked unlabelled trees with $n=10$ leaves. Each dot is colored according to the total probability of trees corresponding to the given $(S,\text{index})$ pair under the Kingman model.}
    \label{fig:SackinColless}
\end{figure}

Compared to Sackin and Colless, $S$ spans a larger range of values, making it more sensitive to differences in ranked tree shapes. While all indices manage to capture the caterpillar as the most imbalanced (top right corner in Figure \ref{fig:SackinColless}A and B), $S$ also uniquely identifies the most balanced ranked tree, $T_{\text{bal}}$ as defined in \cite{SamPal}, by assigning it a unique minimal value. Both the Sackin and Colless index lack this ability whenever $n \neq 2^k$. Finally, the definition of $S$ allows its full distribution to be computed under any time homogeneous and bifurcating coalescent model via phase-type methods. This makes $S$ an ideal balance index for the ranked coalescent.

In Figure \ref{fig:BalanceSummaries}, we compute the balance indices for $m=1000$ trees with $n=25$ leaves simulated from the Blum-Fran\c{c}ois model \cite{Blum-Francois} using code from the \texttt{fmatrix} package from \cite{SamPal}, with moderate balance and imbalance ($\beta = 1$ and $\beta = -0.5$, respectively). The figure highlights how the proposed indices capture tree balance and motivates the tests for neutrality developed in Section 7. The theoretical framework enabling these tests is built on phase-type theory and is introduced in the following section. 
\begin{figure} [H]
    \centering
    \includegraphics[width=0.9\linewidth]{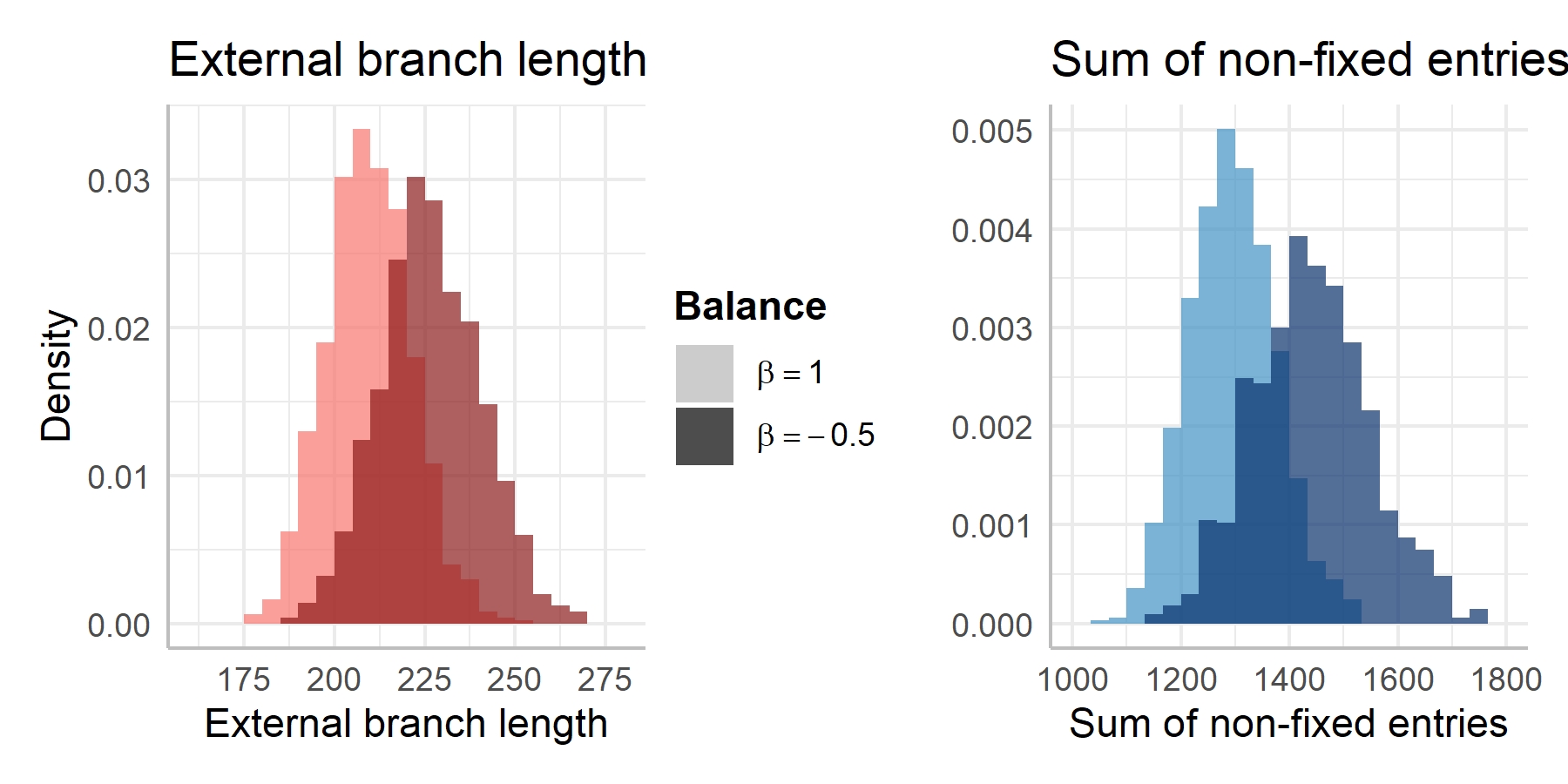}
    \caption{External branch length (left) and the sum of non-fixed entries (right) for $m = 1000$ trees with $n = 25$ leaves sampled from the Blum-Fran\c{c}ois model. Shading represents tree balance, with dark colors indicating more imbalanced trees ($\beta = 0.5$) and light colors indicating more balanced trees ($\beta = 1$).}
    \label{fig:BalanceSummaries}
\end{figure}

\section{Phase-type analysis of the ranked coalescent}
The Markov embedding provides a compact representation of genealogical topologies, capturing the order of coalescence events without labeling individual lineages. In this section, we demonstrate that the embedding enables matrix analytical expressions for distributional properties of functionals of each state, such as external branch length and sum of non-fixed entries via discrete phase-type analysis. This approach generalizes Theorem 2 in \cite{SamPal}, extending  beyond the Kingman model and enabling matrix analytic expressions for higher order moments.

Both the external branch length and the sum of non-fixed entries can be obtained by assigning certain rewards to each state in the ranked coalescent. Phase-type theory offers a strong theoretical framework for working with such reward-based quantities in absorbing Markov chains. In this section, we briefly review results from discrete phase-type theory and demonstrate their applicability to the ranked coalescent by deriving distributional quantities of the balance indices $E, S$. We refer to \cite{AzucenaPhD, MatExpo} for further information on discrete phase-type distributions.

\subsection{The discrete phase-type distribution}
Consider a time-homogeneous discrete Markov chain $\{X_t\}_{t\in\bbn_0}$ with state space $\{1,\cdots,p,p+1\}$, where states $1,\cdots,p$ are transient and the last state, $p+1$, is absorbing. We write the transition probability matrix $\bm{P}$ as
\begin{align*}
    \bm{P} = \begin{pmatrix}
        \bm{T}&\bm{t}\\0&1
    \end{pmatrix},
\end{align*}
where $\bm{T}$ is an $p \times p$ \textit{sub-transition probability matrix} and $\bm{t}=\bigl(\bm{I}-\bm{T}\bigr)\bm{e}$ is the vector of \textit{exit probabilities}. Here, $\bm{e}$ denotes the column vector of ones. We define $\tau$ as the time to absorption
\begin{align*}
    \tau := \min\{t \geq 0: X_t = p + 1\}.
\end{align*}
Denoting the initial distribution by $\bm{\pi} = (\pi_1,\cdots,\pi_p)$ such that $\bbp(X_0 = i) = \pi_i$ we say that $\tau$ has a discrete phase type distribution of order $p$ with initial distribution $\bm{\pi}$ and sub-transition probability matrix $\bm{T}$, and we write
\begin{align*}
    \tau \sim \text{DPH}_p(\bm{\pi},\bm{T}).
\end{align*}
The event $\{\tau = m\}$ occurs when the chain moves between the transient states $m-1$ times before exiting at time $m$ yielding
\begin{align}
    \bbp(\tau = m) = \bm{\pi}\bm{T}^{m-1}\bm{t}, \quad m \in \bbn. \label{eq:PMFtau}
\end{align}
We let
\begin{align}
    \bm{U} = \{u_{ij}\} = \bigl(\bm{I} - \bm{T}\bigr)^{-1} \label{eq:DefGreenMat}
\end{align}
denote the \textit{fundamental matrix}, whose entries $u_{ij}$ equal the expected number of visits to state $j$ when starting in $i$ \cite{MatExpo}. Using this notation, the moments of $\tau$ are provided in Proposition 2.7 in \cite{AzucenaPhD}. In particular,
\begin{align}
    &\bE[\tau] = \bm{\pi}\bm{U}\bm{e},\quad \bE[\tau(\tau-1)] = 2\bm{\pi}\bm{T}\bm{U}^2\bm{e},\quad \var(\tau) = \bE[\tau(\tau-1)] + \bE[\tau]\bigl(1-\bE[\tau]\bigr).\label{eq:momentstau}
\end{align}
\subsubsection{Reward transformations}
Let $\tau \sim \text{DPH}_p(\bm{\pi},\bm{T})$ and let $\{X_t\}_{t\in \bbn_0}$ be the underlying Markov chain. Define
\begin{align*}
    Y := \sum_{t = 0}^{\tau-1} r(X_t),
\end{align*}
where $r : \{1,\cdots,p\} \to \bbn_0$ is a non-negative and integer \textit{reward transform}. Theorem 5.2 in \cite{AzucenaPhD} shows by way of construction that there exist an initial distribution $\bm{\pi}^*$ and a sub-transition probability matrix $\bm{T}^*$ such that 
\begin{align*}
    Y \sim \text{DPH}_{p^*}(\bm{\pi}^*, \bm{T}^*),
\end{align*}
where $p^* = \sum_{j=1}^p r(j)$ is the size of the state space of the reward-transformed variable.

Although the moments of $Y$ can be computed using the DPH representation along with \eqref{eq:momentstau}, in practice it is more efficient to use
\begin{align}
    \bE[Y] = \bm{\pi}\bm{U}\bm{\Delta}(\bm{r})\bm{e},\quad \bE[Y^2] = \bm{\pi}\Bigl\{2(\bm{U}\bm{\Delta}(\bm{r})\}^2-\bm{U}\bm{\Delta}(\bm{r})^2\Bigr), \quad \var(Y) = \bE[Y^2]-(\bE[Y])^2\label{eq:MeanY},
\end{align}
where $\bm{\Delta}(\bm{r}) = \text{diag}(r(1),\cdots,r(p))$. 
The computations and examples presented below are implemented using the \texttt{PhaseTypeR} package \cite{PhaseTypeR}, which provides a powerful framework for working with phase-type distributions.
\subsubsection*{Example: Sum of non-fixed entries and external branch length for $n=5$}
Recallling Figure \ref{fig:Figure-Fmat}C, we let $n = 5$ and consider the ranked Kingman coalescent $\{X_t\}_{t=0,\cdots,4}$ with parameters
\begin{align}
    \bm{\pi} = (1,0,0,0,0,0,0), \quad \bm{T} = \begin{pmatrix}
        0&1&0&0&0&0&0\\
        0&0&1/2&1/2&0&0&0\\
        0&0&0&0&2/3&0&1/3\\
        0&0&0&0&1/3&1/3&1/3\\
        0&0&0&0&0&0&0\\
        0&0&0&0&0&0&0\\
        0&0&0&0&0&0&0
    \end{pmatrix}.\label{eq:paramKingmanRcoal5}
\end{align}
Define $\tau = \min\{t \geq 0 : X_t = \text{MRCA}\}$, and let $r_S :\{1,\cdots,7\} \to \bbn$ denote the reward transform that for each state $\bx$ returns the sum of the non-fixed entries. Using the notation $i(\bx)$ and $t(\bx)$ introduced in Section 3, we get
\begin{align}
    r_S(i(\bx)) = \sum_{j = n + 1 - t(\bx)}^{n-1} x_j. \label{eq:RewardTransformS}
\end{align}
In the case $n = 5$ we get 
\begin{align}
    (r_S(1),\cdots, r_S(7)) = (0,0,2,1,2,1,0). \label{eq:rS5}
\end{align}
Consequently,
\begin{align*}
    S = \sum_{t = 0}^{\tau - 1} r_S(X_t) = \sum_{t = 0}^3 r_S(X_t) \sim \text{DPH}_6(\bm{\pi}_S, \bm{T}_S), 
\end{align*}
with
\begin{align*}
    \bm{\pi}_S = (1/2, 0, 1/2, 0, 0, 0),\quad \bm{T}_S = \begin{pmatrix}
        0&1&0&0&0&0\\
        0&0&0&2/3&0&0\\
        0&0&0&1/3&0&1/3\\
        0&0&0&0&1&0\\
        0&0&0&0&0&0\\
        0&0&0&0&0&0
    \end{pmatrix},\quad \bm{t}_S = \begin{pmatrix}
        0\\1/3\\1/3\\0\\1\\1
    \end{pmatrix}.
\end{align*}
Applying \eqref{eq:MeanY}, we obtain $\bE[S] = 8/3$ and $\var(S) = 11/9$.

For the external branch length, we define
\begin{align}
    r_E(i(\bx)) = \bx_{n-1}, \label{eq:RewardTransformE}
\end{align}
which in the case $n=5$ yields
\begin{align}
    (r_E(1),\cdots,r_E(7)) = (5,3,2,1,1,0,0).\label{eq:rE5}
\end{align}
Applying \eqref{eq:MeanY} yields $\bE[E] = 10$ and $\var(E) = 2/3$.
\subsection{The multivariate discrete phase-type distribution}
Above, we introduced reward transforms as a means to recover distributional properties of single number summaries. Here, we generalize this by instead considering a random vector $(Y_1,\cdots,Y_m)$ where each $Y_j$ is formed by a reward transform. Formally, we define for each $j = 1,\cdots, m$ a reward transform $r_j : \{1,\cdots,p\} \to \bbn_0$ such that
\begin{align*}
    Y_j = \sum_{t=0}^{\tau - 1} r_j(X_t).
\end{align*}
Letting $\bm{R}$ denote the \textit{reward matrix} with 
\begin{align*}
    R_{ij} = r_j(i), \quad i = 1,\cdots,p,\quad j = 1,\cdots,m,
\end{align*}
we say that $\bm{Y} = (Y_1,\cdots,Y_m)$ has a multivariate discrete phase-type distribution with representation $(\bm{\pi}, \bm{T};\bm{R})$, and we write
\begin{align*}
    \bm{Y} \sim \text{MDPH}_p(\bm{\pi},\bm{T};\bm{R}).
\end{align*}

Proposition 5.7 in \cite{AzucenaPhD} provides matrix analytic formulae for the $l$th moments of $\bm{Y}$. In particular,  
\begin{align}
    \bE[Y_jY_k] &= \bm{\pi}\Bigl\{\bm{U} \bm{\Delta}(\bm{r}_j)\bm{U} \bm{\Delta}(\bm{r}_k) + \bm{U} \bm{\Delta}(\bm{r}_k)\bm{U} \bm{\Delta}(\bm{r}_j)-\bm{U} \bm{\Delta}(\bm{r}_j) \bm{\Delta}(\bm{r}_k)\Bigr\}\bm{e}, \label{eq:MeanYjYk} \\
    \cov(Y_j, Y_k) &= \bE[Y_jY_k] - \bE[Y_j]\bE[Y_k], \label{eq:covYjYk}
\end{align}
where $\bE[Y_j]$ is found using \eqref{eq:MeanY}. Again, the \texttt{PhaseTypeR} package is employed for the computations in the examples below.
\subsubsection*{Example: The joint distribution of E and S}
Let $n = 5$ and consider the ranked Kingman coalescent $\{X_t\}_{t=0,\cdots,4}$ with initial distribution and sub-transition probability matrix given in \eqref{eq:paramKingmanRcoal5}. Consider the rewards $r_S$ given in \eqref{eq:RewardTransformS} and $r_{E}$ given in \eqref{eq:RewardTransformE}. In this case, the reward matrix is given by
\begin{align*}
    \bm{R} = \begin{pmatrix}
        \bm{r}_E & \bm{r}_S 
    \end{pmatrix},
\end{align*}
where $\bm{r}_E$ and $\bm{r}_S$ are given in \eqref{eq:rE5} and \eqref{eq:rS5}. Using \eqref{eq:covYjYk} yields
\begin{align*}
    \cov(S,E) = 5/6.
\end{align*}
In Figure \ref{fig:MomentOfSnE}, we illustrate how the mean, variance, and covariance of $S$ and $E$ vary with increasing $n$.
\captionsetup{skip=5pt}
\begin{figure} [H]
    \centering
    \includegraphics[width=0.9\linewidth]{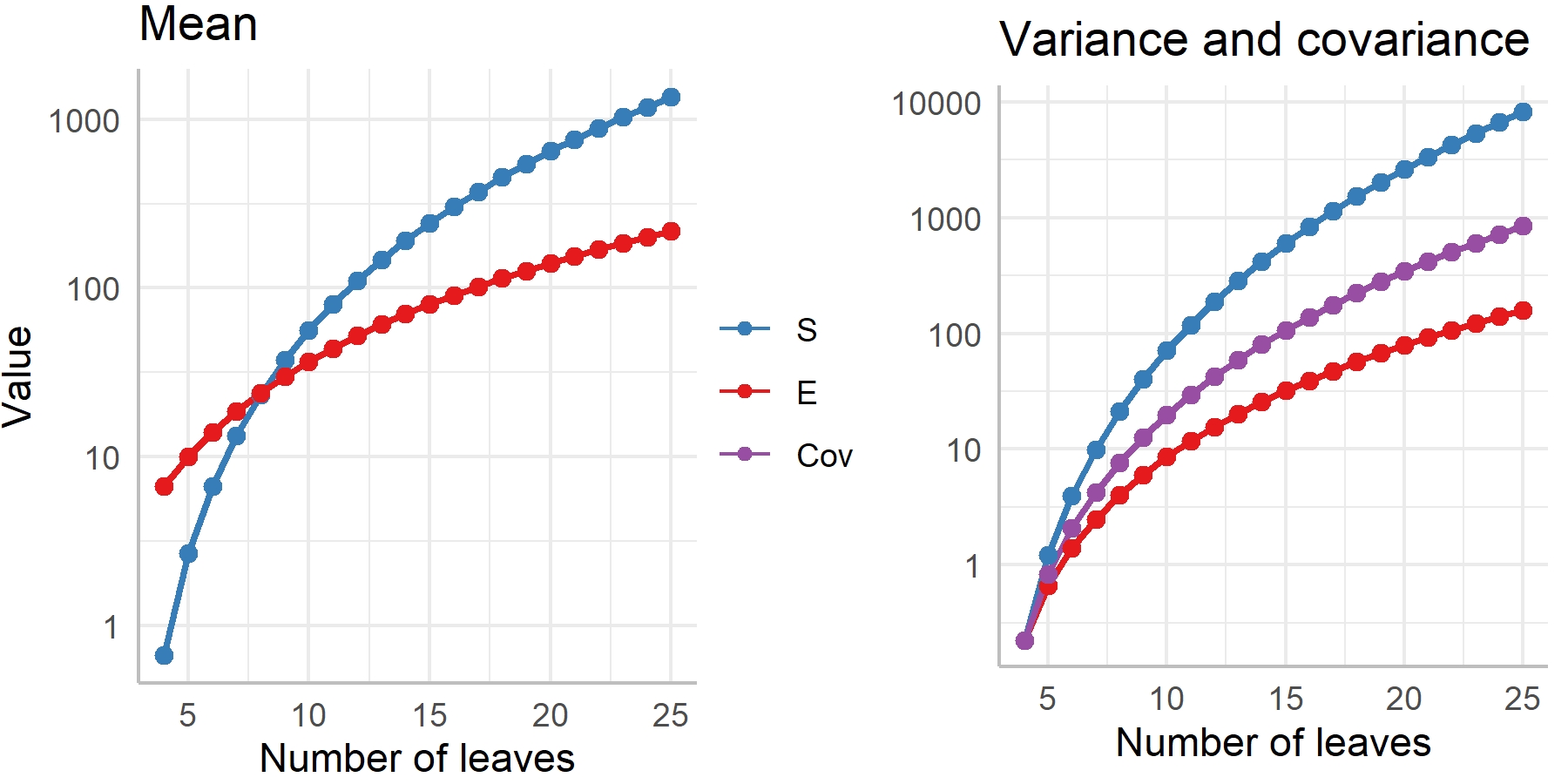}
    \caption{First and second order moments of $E$ and $S$ in the ranked Kingman coalescent for an increasing number of lineages.}
    \label{fig:MomentOfSnE}
    \vspace{-8pt}
\end{figure}
\subsection{The joint distribution of all non-fixed $\Fmat$ entries}
Consider the ranked coalescent $\{X_t\}_{t=0,\cdots,n-1}$ for a general number of lineages $n$. The non-fixed entries $F_{ij}$ are those below the second diagonal, $$\{F_{ij} : 1 \leq j \leq n-3, j+2 \leq i \leq n-1\}.$$
For each non-fixed position $(i,j)$, we define a reward transform $r_{ij}$ as
\begin{align*}
    r_{ij}(i(\bx)) = \bx_j \bm{1}\{t(\bx) = i - 1\},\quad \bx \in \mX_n. 
\end{align*}
With this definition, we get
\begin{align*}
    F_{ij} = \sum_{t=0}^{n-2} r_{ij}(i(X_t)).
\end{align*}
Collecting the reward transforms, we define
\begin{align}
    \bm{R}_F = \begin{pmatrix}
        \bm{r}_{31}&\bm{r}_{41}&\cdots&\bm{r}_{n-1,n-3}
    \end{pmatrix},\label{eq:RewardMatrixFij}
\end{align}
as the reward matrix, with non-fixed pairs ordered row-wise.
\subsubsection*{Example: Non-fixed entries for n = 5}
For $n=5$ entries $(3,1), (4,1), (4,2)$ are non-fixed, and the reward matrix is given by
\begin{align*}
    \bm{R}_F^\transpose = \begin{pmatrix}
        0&0&0&0&1&1&0\\
        0&0&0&0&1&0&0\\
        0&0&2&1&0&0&0
    \end{pmatrix}.
\end{align*}
In the Kingman ranked coalescent, the resulting mean vector $\bm{M}$ and covariance matrix $\bm{\Sigma}$ are found using \eqref{eq:paramKingmanRcoal5}, \eqref{eq:MeanY}, and \eqref{eq:covYjYk} as
\begin{align*}
    \bm{M} = \Bigl(2/3, 1/2, 3/2\Bigr)^\transpose, \quad \bm{\Sigma} = \begin{pmatrix}
        2/9&1/6&0\\
        1/6&1/4&1/12\\
        0&1/12&1/4
    \end{pmatrix}.
\end{align*}
The results of this section are summarized in the following theorem.
\begin{theorem}\label{Thm:MDPH}
    Let $n \geq 4$ and let $\bm{T},\bm{\pi}$ denote the sub transition probability matrix and initial distribution of the ranked coalescent.
    \begin{enumerate}[(i)]
        \item The external branch length $E$ follows a discrete phase-type distribution, with parameters obtainable from $\bm{T}, \bm{\pi}$ and reward transform given by \eqref{eq:RewardTransformE} using the \texttt{PhaseTypeR} package.
        \item Let $\bm{R}_{S,E}$ be the reward matrix given by columns \eqref{eq:RewardTransformS}, \eqref{eq:RewardTransformE}. The joint distribution of $S$ and $E$ is multivariate discrete phase-type, $$(S,E) \sim \text{MPDH}_{|\mX_n|-1}\bigl(\bm{\pi}_\text{RC},\bm{T}_\text{RC} ; \bm{R}_{S,E}\bigr).$$
        \item Let $\bm{R}_{F}$ be the reward matrix in \eqref{eq:RewardMatrixFij}. The joint distribution of the non-fixed entries $\{F_{ij} : 1 \leq j \leq n-3, j+2 \leq i \leq n-1\}$ is multivariate discrete phase-type, $$(F_{31},F_{41},\cdots,F_{n-1,n-3})\sim\text{MDPH}_{|\mX_n|-1}\bigl(\bm{\pi}_\text{RC},\bm{T}_\text{RC} ; \bm{R}_F\bigr).$$
    \end{enumerate}
\end{theorem}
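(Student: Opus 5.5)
The plan is to reduce all three parts to the definitions of (multivariate) discrete phase-type distributions recalled above, after verifying two facts. First, the ranked coalescent is an absorbing Markov chain with MRCA as its unique absorbing state. Second, each of $E$, $S$ and the non-fixed entries $F_{ij}$ is an additive functional $\sum_{t=0}^{\tau-1} r(X_t)$ with a non-negative integer reward $r$ on the transient states. The transient states are the $|\mX_n|-1$ non-absorbing states, which gives the order $|\mX_n|-1$ by Theorem~\ref{Thm:SizeOfSS}. The initial distribution is $\bm{\pi}_{\text{RC}}=(1,0,\cdots,0)$, since $X_0=(0,\cdots,0,n)^\transpose$ deterministically. $\bm{T}_{\text{RC}}$ is the sub-transition matrix restricted to transient states; its exit vector is supported on tier $n-2$. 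By Theorem~\ref{Thm:MarkovEmbeddingTheorem} and the tier structure \eqref{eq:TierDef}, every path visits exactly one state in each tier $0,\dots,n-2$ and then absorbs. Hence $\tau=n-1$ almost surely, and the sample path $(X_0,\dots,X_{n-2})$ is exactly the sequence of columns $n-1,\dots,1$ of the $\Fmat$ of the generated tree.

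The key step is the reward identification, checked column by column. A state $\bx$ of tier $t$ is column $n-1-t$ of $F$, so $x_k=F_{k,n-1-t}$. For (iii), the reward $r_{ij}(i(\bx))=x_j\bm{1}\{t(\bx)=i-1\}$ as written needs care with the indexing. The intended reading is that the state at tier $n-1-j$ (column $j$) contributes its $i$th coordinate. Since the tier determines the column, summing over the path picks out exactly one entry, and that entry is $F_{ij}$. I would state this with the correct indices, $r_{ij}(i(\bx))=x_i\bm{1}\{t(\bx)=n-1-j\}$, which agrees with the $n=5$ example matrix $\bm{R}_F$. For $E$, $r_E(\bx)=x_{n-1}$, and summing over tiers gives $\sum_j F_{n-1,j}$, matching \eqref{eq:E}. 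For $S$, \eqref{eq:RewardTransformS} sums the coordinates of column $n-1-t$ with row index $j\geq n+1-t=(n-1-t)+2$, which are exactly the non-fixed entries of that column. Summing over columns $1,\dots,n-3$ recovers \eqref{eq:SumNonFixed}; tiers $0,1$ give empty sums, consistent with the fixed columns $n-1,n-2$. All rewards are non-negative integers because $\Fmat$ entries are.

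With the identifications in place, (i) follows from Theorem 5.2 of \cite{AzucenaPhD}. That theorem gives $E\sim\text{DPH}_{p^*}(\bm{\pi}^*,\bm{T}^*)$ with $\bm{\pi}^*,\bm{T}^*$ constructed from $\bm{\pi}_{\text{RC}},\bm{T}_{\text{RC}},\bm{r}_E$; that construction is what \texttt{PhaseTypeR} implements. One caveat needs checking: the reward-transformed variable must be positive for the DPH representation to be proper. This holds because $r_E(X_0)=n>0$, so no atom at zero needs to be handled. Parts (ii) and (iii) are immediate from the definition of $\text{MDPH}(\bm{\pi},\bm{T};\bm{R})$, applied to the reward matrices $\bm{R}_{S,E}$ and $\bm{R}_F$.

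I expect the main obstacle to be the bookkeeping between tiers, columns and rows. This means making rigorous the claim that each tier corresponds to a unique column, and keeping the reward formulas consistent with it, including correcting the index convention in $r_{ij}$ noted above. The probabilistic content is entirely inherited from Theorem~\ref{Thm:MarkovEmbeddingTheorem} and the cited phase-type results. The hypothesis $n\geq4$ is only needed so that the set of non-fixed entries is nonempty.
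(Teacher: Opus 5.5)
Your proposal is correct and takes essentially the same route as the paper. The paper presents the theorem as a summary of Section 5: it writes $E$, $S$ and the non-fixed entries $F_{ij}$ as reward-transformed functionals of the absorbing ranked coalescent, then applies Theorem 5.2 of \cite{AzucenaPhD} for (i) and the definition of the multivariate discrete phase-type distribution for (ii)--(iii). Your corrected reward $r_{ij}(i(\bx))=x_i\bm{1}\{t(\bx)=n-1-j\}$ is the right one: the paper's displayed formula is mis-indexed, whereas its own $n=5$ matrix $\bm{R}_F$ agrees with your version. Your check that $E\geq n>0$, so that no atom at zero arises, fills in a detail the paper leaves implicit.
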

Result (iii) along with \eqref{eq:MeanY}, \eqref{eq:covYjYk} is a generalization of Theorem 2 in \cite{SamPal}, since it does not assume the Kingman/Yule model as the underlying distribution, but allows for other demographic models, e.g. island models \cite{Wakeley}. While this provides a powerful theoretical framework, the size of the matrices involved grows exponentially with the number of lineages. Consequently, standard procedures for computing matrix inverses and products quickly become infeasible. In the following section, we exploit the feed-forward structure of the chain and circumvent these computational issues.\vspace{-5pt}

\section{Efficient matrix computations in feed-forward Markov chains}
The Markov chain of the ranked coalescent is one example of a process that encodes the structure of a coalescent. Another example is the block-counting process \cite{MPHSFS, PHpopgen} whose state space grows according to the partition function \cite{PartitionFunc}. This process governs the site frequency spectrum and is therefore of great interest in population genetics. Both processes' state spaces exhibit rapid growth with the number of lineages. This poses computational challenges when using phase-type theory for analysis, as necessary matrix operations such as multiplication and inversion scale cubically with the number of states. For example, $|\mX_{25}| = \text{Fib}(26) = 121,393$, making Theorem \ref{Thm:MDPH} practically infeasible using standard procedures. 
However, both admit a feed-forward structure that can be exploited to achieve substantial computational gains. Here, we demonstrate how this structure can be leveraged for the ranked coalescent. The methods are easily generalized to general feed-forward processes.

Inspired by the notion of tiers from Section $3$, we exploit the feed-forward structure by decomposing the sub-transition probability matrix as
\begin{align}
    \bm{T} = \begin{pmatrix}
        0&\bm{T}_{01}&0&\cdots&0\\
        0&0&\bm{T}_{12}&\cdots&0\\
        0&0&0&\ddots&0\\
        0&0&0&\cdots&\bm{T}_{n-3,n-2}\\
        0&0&0&\cdots&0
    \end{pmatrix}.\label{eq:SubTPMStructure}
\end{align}
Here, $\bm{T}_{i,i+1}$ is the rectangular matrix of transition probabilities from states in tier $i$ to states in tier $i+1$.

Since the chain is deterministically absorbed after exactly $n-1$ transitions, the sub-transition probability matrix $\bm{T}$ is nilpotent, with $\bm{T}^{n-1} = 0$. Consequently, the fundamental matrix admits the finite expansion
\begin{align*}
    \bm{U} = \bigl(\bm{I}-\bm{T}\bigr)^{-1} = \sum_{k=0}^\infty \bm{T}^k = \sum_{k=0}^{n-2} \bm{T}^k.
\end{align*}
This formulation provides a computationally much more efficient way to compute the vector-matrix product
\begin{align}
    \bm{\pi}\bm{U} = \sum_{k=0}^{n-2}\bm{\pi}\bm{T}^k,\label{eq:PiU}
\end{align}
which is essential for the computation of both the expectation and covariance.

To exploit the tiered structure of the feed-forward process even further, we partition the state space according to its tiers and define $$T_n^k = \{i(\bx) : t(\bx) = k\},\quad k = 0,\cdots,n-2$$ as the set of entries belonging to tier $k$. Using this notation, $\bm{\pi}\bm{T}^k$ can be computed iteratively, noting that $\bm{\pi}\bm{T}^k$ is $0$ everywhere except for entries belonging to $T_n^k$, as
\begin{align}
    \bigl(\bm{\pi}\bm{T}^{k}\bigr)_{i \in T_n^k} = \bigl(\bm{\pi}\bm{T}^{k-1}\bigr)_{i \in T_n^{k-1}} \bm{T}_{k-1,k}. \label{eq:LeftProducts}
\end{align}
Starting from $\bm{\pi}\bm{T}^0 = \bm{\pi}$, this iterative procedure efficiently computes $\bm{\pi}\bm{U}$ without ever forming large matrices or performing full inversions. Having obtained $\bm{\pi}\bm{U}$, the expected value of any reward transformed variable can be efficiently computed using \eqref{eq:MeanY}.

While this procedure enables efficient computations of expectations, variances and covariances of the non-fixed entries of $\Fmats$ are also of interest. Below, we derive an efficient method for computing the covariance between the non-fixed entries, noting that both $E,S$ can be written as affine transformations of these.

Let $F_{i,j}, F_{i',j'}$ be non-fixed entries. We recall the formula \eqref{eq:MeanYjYk} which can be decomposed into left and right factors as follows:
\begin{align*}
    \bE\bigl[F_{i,j}F_{i',j'}\bigr] &= \bm{\pi}\bigl\{\bm{U}\bm{\Delta}(\bm{r}_{ij})\bm{U}\bm{\Delta}(\bm{r}_{i'j'})+\bm{U}\bm{\Delta}(\bm{r}_{i'j'})\bm{U}\bm{\Delta}(\bm{r}_{ij})-\bm{U}\bm{\Delta}(\bm{r}_{ij})\bm{\Delta}(\bm{r}_{i'j'})\bigr\}\bm{e} \\
    &= \Bigl[\bigl(\bm{\pi}\bm{U}\bm{\Delta}(\bm{r}_{ij})\bigr)\bigl(\bm{U}\bm{\Delta}(\bm{r}_{i'j'})\bm{e}\bigr)\Bigr] + \Bigl[\bigl(\bm{\pi}\bm{U}\bm{\Delta}(\bm{r}_{i'j'})\bigr)\bigl(\bm{U}\bm{\Delta}(\bm{r}_{ij})\bm{e}\bigr)\Bigr] - \Bigl[\bm{\pi}\bm{U}\bm{\Delta}(\bm{r}_{ij})\bm{\Delta}(\bm{r}_{i'j'})\bm{e}\Bigr].
\end{align*}
Although the left factor of each summand has already been handled above, we still need to compute the right factors $\bm{U}\bm{\Delta}(\bm{r}_{ij})\bm{e}$. Here, the tiered structure can once again be exploited, noting that the entries in $\bm{T}^k\bm{\Delta}(\bm{r}_{ij})\bm{e}$ are $0$ except for those belonging to $T_n^{n-1-j-k}$. For $k = 1,\cdots,n-1-j$ we iteratively compute the relevant entries as
\begin{align}
    \bigl(\bm{T}^k\bm{\Delta}(\bm{r}_{ij})\bm{e}\bigr)_{i \in T_n^{n-1-j-k}} = \bm{T}_{n-1-j-k,n-j-k}\bigl(\bm{T}^{k-1}\bm{\Delta}(\bm{r}_{ij})\bm{e}\bigr)_{i \in T_n^{n-j-k}}. \label{eq:RightProducts}
\end{align}

When $n=25$ there are $253$ non-fixed entries. Computing all expectations, variances, and pairwise covariances requires $32,384$ total moment computations. Using standard methods, each computation involves matrix operations on the order of $\text{Fib}^3(26) \approx 10^{15}$, making the problem infeasible in practice. In contrast, the iterative procedures above reduce the total complexity to $$\sum_{k=0}^{n-3} |T_n^k| \cdot |T_n^{k+1}|,$$ yielding a dramatic computational improvement.

We illustrate the procedure in a small example, where we compute $\bE[F_{4,2}]$ and $\cov(F_{4,2},F_{5,1})$ in the ranked Kingman coalescent with $n=6$ lineages. Here, the tiers are given by $T_6^0 = \{1\}, T_6^1 = \{2\}, T_6^2 = \{3,4\}, T_6^3 = \{5,6,7,8\}, T_6^4 = \{9,10,11,12\}$. The matrices $\bm{T}_{01},\bm{T}_{12},\bm{T}_{23},\bm{T}_{34}$ are given below as
\begin{align*}
    \bm{T}_{01} &= \begin{pmatrix}
        1
    \end{pmatrix}, \quad \bm{T}_{12} = \begin{pmatrix}
        4/10&6/10
    \end{pmatrix}, \quad
    \bm{T}_{23} = \begin{pmatrix}
        3/6&0&3/6&0\\
        1/6&2/6&2/6&1/6
    \end{pmatrix},\quad
    \bm{T}_{34} = \begin{pmatrix}
        2/3&0&0&1/3\\
        1/3&1/3&0&1/3\\
        1/3&0&1/3&1/3\\
        0&1/3&1/3&1/3
    \end{pmatrix}.
\end{align*}
The reward transforms $\bm{r}_{4,2},\bm{r}_{5,1}$ are given by
\begin{align*}
    \bm{r}_{4,2} = (0,0,0,0,2,2,1,1,0,0,0,0)^\transpose, \quad \bm{r}_{5,1} = (0,0,0,0,0,0,0,0,1,0,0,0)^\transpose.
\end{align*}
Following the iteration in \eqref{eq:LeftProducts} we obtain
\begin{align*}
    (\bm{\pi}\bm{T}^0)_{1} &= 1, (\bm{\pi}\bm{T}^1)_{2} = 1 \cdot 1, (\bm{\pi}\bm{T}^2)_{3,4} = 1 \cdot \begin{pmatrix}
        4/10&6/10
    \end{pmatrix},\\
    (\bm{\pi}\bm{T}^3)_{5,6,7,8} &= \begin{pmatrix}
        4/10&6/10
    \end{pmatrix}\bm{T}_{23} = \begin{pmatrix}
        3/10&2/10&4/10&1/10
    \end{pmatrix}, \\
    (\bm{\pi}\bm{T}^4)_{9,10,11,12} &= \begin{pmatrix}
        3/10&2/10&4/10&1/10
    \end{pmatrix}\bm{T}_{34} = \begin{pmatrix}
        4/10&1/10&1/6&2/6
    \end{pmatrix}.
\end{align*}
Using this, we find that
\begin{align}
    \bm{\pi}\bm{U} = (1,1,4/10,6/10,3/10,2/10,4/10,1/10,4/10,1/10,1/6,2/6), \label{eq:piU}
\end{align}
so that
\begin{align*}
    \bE[F_{4,2}] &= \bm{\pi}\bm{U}\bm{r}_{4,2} = 3/10\cdot 2 + 2/10 \cdot 2 + 4/10 \cdot 1 + 1/10 \cdot 1 = 3/2.
\end{align*}
Similarly, we find that $\bE[F_{5,1}] = 2/5$.

For the covariance, we compute
\begin{align*}
    &\bigl(\bm{T}^0 \bm{\Delta}(r_{4,2})\bm{e}\bigr)_{i \in T_6^3} = (2,2,1,1)^\transpose, \bigl(\bm{T}^1 \bm{\Delta}(r_{4,2})\bm{e}\bigr)_{i \in T_6^2} = \bm{T}_{2,3}(2,2,1,1)^\transpose = (3/2,3/2)^\transpose \\
    &\bigl(\bm{T}^2 \bm{\Delta}(r_{4,2})\bm{e}\bigr)_{i \in T_6^1} = \bm{T}_{1,2}(3/2,3/2)^\transpose = 3/2, \bigl(\bm{T}^3 \bm{\Delta}(r_{4,2})\bm{e}\bigr)_{i \in T_6^0} = \bm{T}_{0,1}3/2 = 3/2,
\end{align*}
and thus
\begin{align*}
    \bm{U}\bm{\Delta}(\bm{r}_{4,2})\bm{e} = (3/2,3/2,3/2,3/2,2,2,1,1,0,0,0,0)^\transpose.
\end{align*}
Similarly, we find that
\begin{align*}
    \bm{U}\bm{\Delta}(\bm{r}_{5,1})\bm{e} = (4/10,4/10,1/2,1/3,2/3,1/3,1/3,0,1,0,0,0)^\transpose.
\end{align*}
Using \eqref{eq:piU} we find that 
\begin{align*}
    \bigl(\bm{\pi}\bm{U}\bm{\Delta}(\bm{r}_{4,2})\bigr)_{i \in T_6^3} &= (6/10, 4/10, 4/10, 1/10),\quad
    \bigl(\bm{\pi}\bm{U}\bm{\Delta}(\bm{r}_{5,1})\bigr)_{i \in T_6^4} = (4/10, 0, 0, 0).
\end{align*}
Lastly, we find $\bE\bigl[F_{4,2}F_{5,1}\bigr] = 2/3$, and therefore $\cov(F_{4,2},F_{5,1}) = 2/3 - 3/2\cdot2/5 = 1/15$.

These computational advances make it possible to evaluate expectations and covariances for moderately large trees, thereby increasing the size of the matrices that can be handled for any feed-forward Markov chain. In particular, they extend the tractable number of lineages for both the ranked coalescent and block-counting process. In the next section, we build on the results from the two previous sections, to construct three tests for assessing whether a sample of trees are consistent with a ranked coalescent model with a specified sub-transition matrix.\vspace{-5pt}

\section{Tests of neutrality}
As a final application of the Markov embedding, we examine the power of various tests for neutrality. Theorem \ref{Thm:MDPH} along with a multivariate central limit theorem for $\Fmats$, allow us to test whether or not a sample of ranked tree shapes follows a certain distribution with mean $\bm{M}_0$ and variance $\bm{\Sigma}_0$. A typical test of interest is the case where the null model corresponds to the neutral/Kingman coalescent.

In coalescent theory, a classic statistical approach for testing neutrality from observed molecular data is Tajima's D \cite{Tajima1989}. However, this test can be altered by other processes that affect branch length distributions \cite{SamPal}.
An alternative is to test neutrality using the ranked tree shape \cite{DrummondSuchard, SamPal}. Because the ranked coalescent ignores branch lengths, the tests developed below are based solely on these tree shapes, making them independent of variability in population size. The test in \cite{DrummondSuchard} relies on single summary statistics such as external branch length and the number of cherries, whereas \cite{SamPal} proposes a Hotelling's T-squared for assessing $H_0 : \bE[F] = M_0$. Similarly, we construct three topology-based tests using the $\Fmat$ representation. Theorem \ref{thm:CLTforSummaries} defines the test statistics and their approximate null distributions.
\begin{theorem}\label{thm:CLTforSummaries}
        Let $n, m \in \bbn$ and consider a sample $F_1,\cdots,F_m$ of $(n-1)\times(n-1)$ $\Fmats$, drawn from a distribution where the expectation of the non-fixed entries is denoted by $\bm{M}$ and the covariance by $\bm{\Sigma}$. For each $\Fmat$, let $E_i$ denote external branch length. Divide the range of $E$ into $K$ boxes and let $A_k$ denote the expected count within each box. Letting $O_k$ denote the observed counts in box $k$ we get approximately
        \begin{align}
            G_{E} = \sum_{k=1}^K O_k \log\Bigl(\frac{O_k}{A_k}\Bigr) \sim \chi^2(K-1). \label{eq:GoFE}
        \end{align}
        Define $\bar{F}_m$, $\bar{S}_m$, and $\bar{E}_m$ as the vector of averages of non-fixed entries, average sum of non-fixed entries, and average external branch length, respectively. We have
        \begin{align}
            \sqrt{m}\bigl(\bar{F}_m - \bm{M}\bigr) &\overset{\mathcal{D}}{\to}\mathcal{N}\bigl(\bm{0},\bm{\Sigma}\bigr)\label{eq:MultCLT},\\
            W_F := \sqrt{\frac{2m}{(n-2)(n-3)}}\Bigl[\bm{\Sigma}^{-1/2}\bigl(\bar{F}_m - \bm{M}\bigr)\Bigr]^\transpose \bm{e}&\overset{\mathcal{D}}{\to}\mathcal{N}(0,1) \label{eq:uniCLT},\\
            W_{SE} := \sqrt{\frac{m}{2}}\Bigl[\bm{\Sigma}_{SE}^{-1/2}\bigl((\bar{S}_m,\bar{E}_m)^\transpose-\bm{\mu}_{SE}\bigr)\Bigr]^\transpose \bm{e} &\overset{\mathcal{D}}{\to}\mathcal{N}(0,1),\quad \text{as }m\to \infty. \label{eq:SECLT}
        \end{align}
\end{theorem}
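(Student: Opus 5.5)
The plan is to treat the sample $F_1,\dots,F_m$ as i.i.d.\ draws. Then every claim reduces to a classical limit theorem applied to i.i.d.\ vectors with finite moments.

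Finiteness of all moments is immediate, since every entry of an $\Fmat$ is bounded by $n$. Theorem \ref{Thm:MDPH} additionally identifies $\bm{M}$ and $\bm{\Sigma}$ as computable MDPH moments. The argument for each of the four claims is as follows.

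\textbf{The statistic $G_E$.} The counts $(O_1,\dots,O_K)$ are multinomial with $m$ trials and cell probabilities $A_k/m$. These probabilities come from the exact distribution of $E$ in part (i) of Theorem \ref{Thm:MDPH}. The claim is then the standard likelihood-ratio (Wilks) result for a fully specified multinomial null. Writing $O_k = A_k + \sqrt{m}\,Z_k$ and Taylor expanding $O_k\log(O_k/A_k)$ to second order shows that the statistic agrees, up to $o_p(1)$, with Pearson's statistic $\sum_k (O_k-A_k)^2/A_k$. The linear terms cancel because $\sum_k O_k = \sum_k A_k = m$. Pearson's statistic converges to $\chi^2(K-1)$ by the multivariate CLT for multinomial counts.

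One caveat must be checked against the paper's normalization. The usual $G$-statistic carries a factor $2$, i.e.\ $2\sum_k O_k\log(O_k/A_k)\to\chi^2(K-1)$. I would state this constant explicitly in the proof. I would also require boxes with $A_k>0$, presumably merged so that each $A_k$ is not too small, so that the approximation is meaningful.

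\textbf{The multivariate CLT \eqref{eq:MultCLT}.} The vectors of non-fixed entries are i.i.d., bounded, with mean $\bm{M}$ and covariance $\bm{\Sigma}$. The Lindeberg--L\'evy multivariate CLT therefore gives \eqref{eq:MultCLT} directly.

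\textbf{The scalar statistics $W_F$ and $W_{SE}$.} First assume $\bm{\Sigma}$ is nonsingular, so the symmetric inverse square root exists. Applying the continuous mapping theorem with the linear map $\bm{\Sigma}^{-1/2}$ yields
\[
\sqrt{m}\,\bm{\Sigma}^{-1/2}(\bar F_m-\bm{M})\to\mathcal{N}(\bm 0,\bm I_d),
\qquad d=\tfrac{(n-2)(n-3)}{2},
\]
where $d$ is the number of non-fixed entries. Taking the inner product with $\bm e$ gives a $\mathcal{N}(0,d)$ limit, and dividing by $\sqrt d$ gives exactly the normalization in \eqref{eq:uniCLT}. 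The same argument for the bivariate vector $(S,E)$ proves \eqref{eq:SECLT}, now with $d=2$ and with $\bm{\mu}_{SE},\bm{\Sigma}_{SE}$ obtained from part (ii) of Theorem \ref{Thm:MDPH}. Alternatively, $(S,E)$ can be treated as an affine image of the non-fixed entries together with the fixed diagonal ones.

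\textbf{Main obstacle: nonsingularity of $\bm{\Sigma}$ and $\bm{\Sigma}_{SE}$.} It is not obvious that the covariance matrices are nonsingular. Linear constraints among $\Fmat$ entries could in principle make some combination of non-fixed entries degenerate.

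For $\bm{\Sigma}_{SE}$, I would exhibit enough trees with positive probability to show that no nontrivial combination $aS+bE$ is almost surely constant. For $\bm{\Sigma}$, I would show the support of the non-fixed-entry vector affinely spans $\mathbb{R}^d$ whenever all transitions of the ranked coalescent have positive probability. The natural route is induction on the paths of $\mX_n$: a single coalescence choice at tier $t$ perturbs the entries in one column in a controlled way.

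If this fails for a particular model, the fallback is to replace $\bm{\Sigma}^{-1/2}$ by a Moore--Penrose pseudo-inverse square root. In that case $d$ must be replaced by $\operatorname{rank}\bm{\Sigma}$ in the normalization.
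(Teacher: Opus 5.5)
Your proposal is correct and follows the same route as the paper's proof: Wilks' theorem for $G_E$, the multivariate CLT applied to the vectorized non-fixed entries, and linear transformations for $W_F$ and $W_{SE}$, with the normalizations $d=(n-2)(n-3)/2$ and $d=2$ that you correctly recover. Your two caveats are well founded and go beyond the paper's one-line argument: without the factor $2$ the printed $G_E$ is asymptotically $\tfrac12\chi^2(K-1)$, and nonsingularity is a genuine hidden hypothesis — for instance, for $n=4$ one has $E=S+6$, so $\bm{\Sigma}_{SE}$ is singular.
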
\noindent
\begin{proof}
    The statistic in \eqref{eq:GoFE} is the goodness of fit and the asymptotic distribution follows from Wilks' theorem. \eqref{eq:MultCLT} follows directly from the multivariate central theorem, considering a vectorized version of the non-fixed entries of $\Fmats$. The results in \eqref{eq:uniCLT}, \eqref{eq:SECLT} follow from \eqref{eq:MultCLT} after applying the appropriate linear transformations.
\end{proof}
The distributional results in \eqref{eq:GoFE}, \eqref{eq:uniCLT}, and \eqref{eq:SECLT} can each be used to construct a test of neutrality. In Figure \ref{fig:powerplotall}, we estimate the power of these tests and compare the estimated power of each test to the estimated power of Hotelling's T-squared test from \cite{SamPal}. The powers were estimated by simulating $1000$ samples of $1000$ tree shapes with $n=25$ tips from the Blum-Fran\c{c}ois model \cite{Blum-Francois}, over a grid of parameter values $\beta \in (-1, 1]$.

All three tests attain the correct nominal level of $0.05$, as shown in Figure \ref{fig:powerplotall}. Additional validation of the distributional results in Theorem \ref{thm:CLTforSummaries} is provided in Figure \ref{fig:NullDistVerification} in the Appendix.

\begin{figure} [H]
    \centering
    \includegraphics[width=0.8\linewidth]{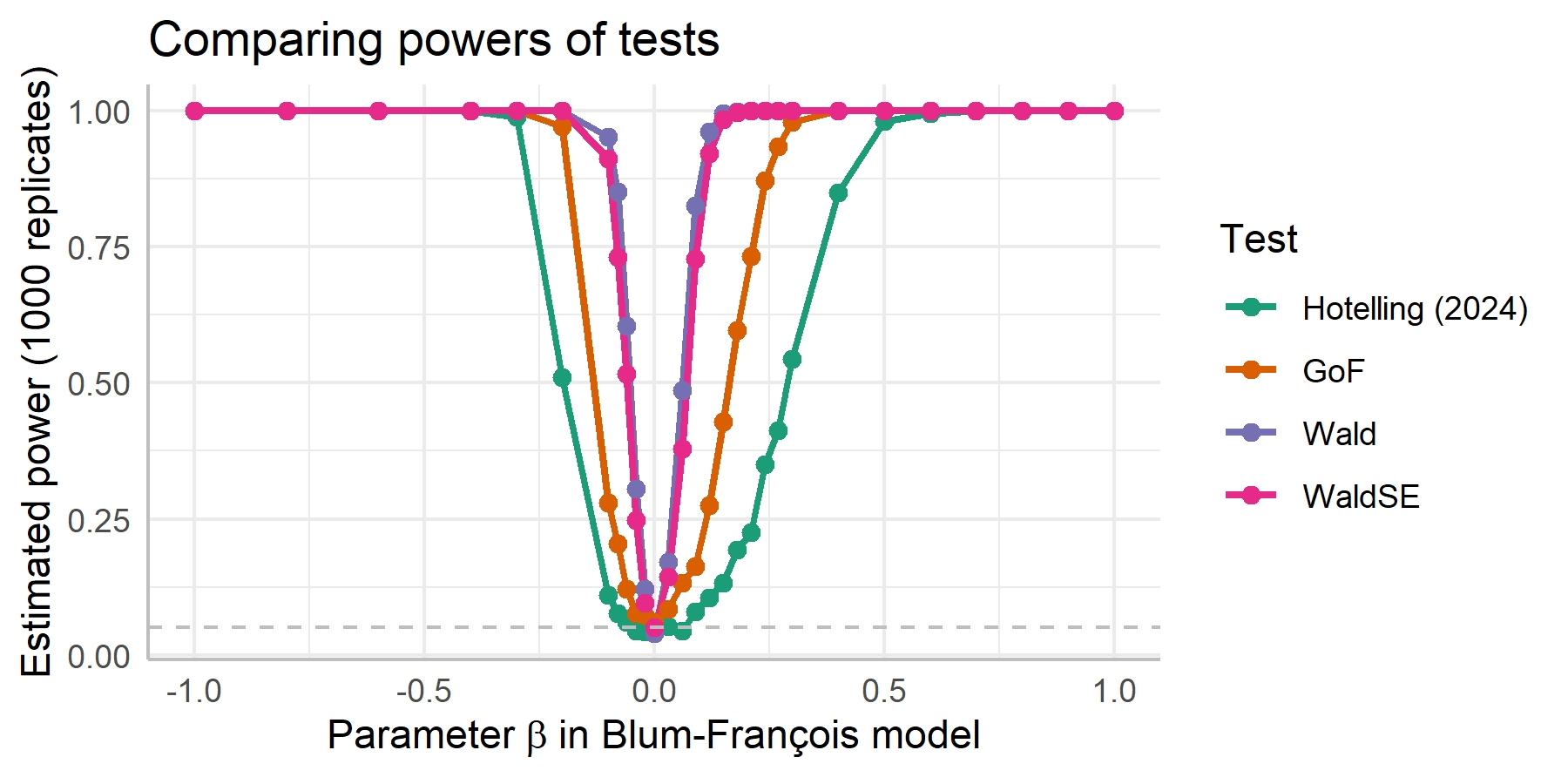}
    \caption{Estimated power of all tests including Hotelling's T-squared from \cite{SamPal}. The power is estimated using $1000$ replicates of $1000$ trees with $25$ leaves sampled from Blum-Fran\c{c}ois model with parameter $\beta$. We demonstrate that all the tests produced in this paper are more powerful.}
    \label{fig:powerplotall}
    \vspace{-5pt}
\end{figure}

\section{Discussion}
Markov chain embedding of genealogical processes is an emerging and powerful framework for understanding the stochastic structure of population genetics models \cite{MatrixAnalytics,MPHSFS,PHpopgen}. A key contribution of this paper is Theorem \ref{Thm:MarkovEmbeddingTheorem}, which demonstrates the Markov chain embeddability of ranked unlabelled trees. This adds a new perspective to the study of coalescent processes by focusing on tree topology. It also provides a state space much smaller than the tree space analyzed in \cite{SamPal}, as demonstrated in Theorem \ref{Thm:SizeOfSS}.

Using this smaller state space, we developed the recursive \texttt{ViTreebi} algorithm to efficiently compute all Fréchet means from a given distribution. We also generalized Theorem 2 in \cite{SamPal} using phase-type analysis, extending beyond the standard Kingman model and providing matrix analytical expressions for higher order moments. In Section $7$, we used these results to develop three statistical tests with greater power than Hotelling’s test in \cite{SamPal}.

The generality of the ranked coalescent suggests a wide potential beyond the classic Kingman framework. The island model \cite{Wakeley} is an important example: here, only lineages on the same island can coalesce at a given time. Consequently, the resulting tree topology is strongly dependent on initial conditions and migration rates. Extending the framework to include the island model is therefore a key direction for both theoretical development and potential applications.

While the ranked coalescent encodes the full structure of ancestral relationships, classical coalescent theory often represents such structures through the block-counting process (BCP) \cite{PHpopgen, MPHSFS}. The BCP records only the number of leaves per branch between coalescent events and is therefore coarser than the ranked coalescent, as every path through its associated jump process can be represented by one or more paths in the ranked coalescent. Because the state space is substantially smaller, computations based on the BCP are considerably faster. In particular, the external branch length can be derived from the BCP alone, whereas the sum of non-fixed entries depends on structural information not retained by the BCP. A detailed comparison of the two processes is provided in Section $9.3$ in the Appendix.

Although the framework's generality is one of its main advantages, extending the framework to include more coalescent models also highlights some of the limitations of the ranked coalescent. Despite the reduction in state space, the ranked coalescent still grows rapidly with the number of lineages. This creates challenges, both in terms of storing the transition probabilities and computing matrix inverses and products. In Section $6$, we developed methods for exploiting the feed-forward structure of the process, allowing us to extend our analysis to include the case of $n=25$ lineages.

Together, these results illustrate how the ranked coalescent provides a new and general framework for analyzing the topology of genealogical trees. The ranked coalescent offers a clear interpretation of the $\Fmats$ from \cite{SamPal,FmatPaper}, and is a practical tool for future applications involving small to medium sized trees.\medbreak\noindent
\small{\textbf{Acknowledgements}  We thank Julia Palacios for sharing the code on the Hotelling's T-squared test from \cite{SamPal}}.  \vspace{-5pt}

\section{Appendix}
\subsection{Efficient generation of the state space and transition matrix}
Practical implementation of the Markov embedding requires explicit construction of the state space and transition matrix. As the number of states grows exponentially with the number of lineages, efficient methods are essential for this construction. In this section, we introduce a binary representation of each state $\bx \in \mX_n$ and use it to construct the transition matrix under the Kingman model. 

Recall from Section $2$ that each state can be decomposed into lineages, where each lineage has a binary representation indicating its presence in each layer of the tree. Since each layer corresponds to a single coalescent event, at most one lineage is removed at each step, which is reflected as a transition from $1$ to $0$ in the binary representation. The last entry of each state gives the number of external lineages, while the remaining entries indicate the layers at which lineages are removed. Consequently, each state can be represented by tracking where lineages are removed and by noting the number of external lineages. Importantly, this representation allows all states and transitions to be generated using simple operations on binary vectors.

Formally, for $n \geq 3$, we define $d : \mX_n \to \{0,1\}^{n-2}\times \{0,\cdots,n\}$ by
\begin{align}
    d(\bx) = \Bigl(\bx - (\bx_{-1}^\transpose,0)^\transpose\Bigr)^+, \quad \bx \in \mX_n, \label{eq:DiffOperator}
\end{align}
where $\bx_{-1}^\transpose = (x_2,\cdots,x_{n-1})$ and where $\mathbf{y}^+$ is the vector with entries $\max(y_i,0)$. The last entry $d(\bx)_{n-1}$ gives the number of external lineages, while the remaining entries $d(\bx)_i, i = 1,\cdots,n-2$ indicate the layers at which a lineage is removed. For example, if $d(\bx)_i = 1$ for some $i = 1,\cdots,n-2$, then exactly one of the lineages is present at layer $i$, but not at layer $i+1$, and we refer to such an index as a \textit{decremental index}. In the following, we describe how these indices can be used to determine whether a transition is possible and, if possible, compute the probability.

Transitions from tier $t$ to tier $t+1$ occur when two lineages coalesce to form the new lineage $(0,\cdots,0,1,0,\cdots,0)^\transpose = \bm{e}_{n-2-t}$. By comparing the decremental indices between two states, we obtain the following necessary and sufficient condition for the feasibility of a transition.   

\begin{proposition}\label{prop:BinaryTransitions}
    Let $n\geq 4$ and $t \geq 0$. Consider the states $\bx,\mathbf{y} \in \mX_n$ of tiers $t, t+1$, respectively. A transition from $\bx$ to $\mathbf{y}$ is possible if and only if there exists a unique pair $(i,k)$ satisfying $i < k \leq n-1$ or $i=k=n-1$ such that
    \begin{align}
        d(\bx) - d(\mathbf{y}) + \bm{e}_{n-2-t} = \bm{e}_i + \bm{e}_k, \label{eq:BinaryTrans}
    \end{align}
    where $\bm{e}_j$ is the $j$th unit vector, and where the term $\bm{e}_{n-2-t}$ accounts for the difference in tiers between $\bx$ and $\mathbf{y}$.
\end{proposition}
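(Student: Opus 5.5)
The plan is to translate states into multisets of lineage endpoints. Then a coalescence is just the removal of two endpoints and the insertion of a new one, and \eqref{eq:BinaryTrans} becomes a restatement of this bookkeeping.

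\emph{Step 1: description of $d(\bx)$.} Using the decomposition illustrated in \eqref{eq:X6decomp}, a state $\bx$ of tier $t$ is a sum of $n-t$ lineage vectors. Each lineage vector is the indicator of an interval $\{n-1-t,\dots,e_\ell\}$, where $e_\ell\le n-1$ is the layer at which the lineage was formed (reading backward in time, its lower end). Since at most one lineage is created per layer, the endpoints $e_\ell<n-1$ are pairwise distinct. By contrast, the value $e_\ell=n-1$ (external lineages) may be repeated.

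I would then show that
\begin{align*}
d(\bx)_k=\#\{\ell: e_\ell=k\}\ \text{ for } k\le n-2, \qquad d(\bx)_{n-1}=x_{n-1}.
\end{align*}
This holds because $x_k-x_{k+1}$ counts the lineages ending exactly at $k$ whenever $k\ge n-1-t$. For $k<n-1-t$ we have $x_k=0$, so the positive part kills the difference, which is $\le 0$. Consequently $d(\bx)$ is exactly the endpoint multiset.

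Conversely, $\bx$ is recovered from $d(\bx)$ and its tier $t$ via $x_j=\sum_{k\ge j}d(\bx)_k$ for $j\ge n-1-t$, and $x_j=0$ otherwise. Hence $d$ is injective on each tier.

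\emph{Step 2: necessity.} Suppose the chain moves from $\bx$ to $\mathbf{y}$ by coalescing two lineages with endpoints $i\le k$. Then every other lineage is extended by the layer $n-2-t$, and the merged lineage becomes $\bm{e}_{n-2-t}$, exactly as in \eqref{eq:X7decomp}. By Step 1, this gives
\begin{align*}
d(\mathbf{y})=d(\bx)-\bm{e}_i-\bm{e}_k+\bm{e}_{n-2-t},
\end{align*}
which is \eqref{eq:BinaryTrans}. The case $i=k$ can only occur when $i=k=n-1$, since endpoints below $n-1$ are distinct. This yields the stated constraint on the pair. The pair is also unique, because $\bm{e}_i+\bm{e}_k$ determines the unordered pair $\{i,k\}$.

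\emph{Step 3: sufficiency.} Assume \eqref{eq:BinaryTrans} holds for an admissible pair $(i,k)$. Rewrite it as $d(\bx)-\bm{e}_i-\bm{e}_k=d(\mathbf{y})-\bm{e}_{n-2-t}$.

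First, the right-hand side is componentwise nonnegative. The reason is that $d(\mathbf{y})_{n-2-t}=1$, since the newest lineage of any tier-$(t+1)$ state ends at $n-2-t$. Moreover, $i,k\ge n-1-t>n-2-t$, because all lineage endpoints of $\bx$ lie in $\{n-1-t,\dots,n-1\}$; I expect this range bookkeeping to be the main place where care is needed.

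From nonnegativity, $\bx$ possesses lineages with endpoints $i$ and $k$, two distinct ones when $i=k=n-1$. Coalescing these two lineages yields some state $\mathbf{y}'$ of tier $t+1$. By Step 2, $d(\mathbf{y}')=d(\mathbf{y})$, and the injectivity from Step 1 then gives $\mathbf{y}'=\mathbf{y}$. Hence the transition is possible.

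The main obstacle is really Step 1: rigorously justifying that the lineage decomposition is unique and that the positive part in \eqref{eq:DiffOperator} behaves correctly at the boundary index $n-2-t$. Once that identification is in place, the remaining steps are routine.
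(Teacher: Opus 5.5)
Your proposal is correct and follows essentially the same route as the paper. Both identify the entries of $d(\bx)$ with lineage endpoints, observe that a coalescence removes the endpoints $i,k$ and adds the new endpoint $n-2-t$ (the paper splits this into internal/internal, internal/external and external/external cases, which you handle uniformly), and read the converse off this bookkeeping. Your Step~3 is more careful than the paper's one-line converse: you check that lineages with endpoints $i$ and $k$ actually exist in $\bx$, and you use injectivity of $d$ on a tier to get $\mathbf{y}'=\mathbf{y}$. Note that the bound $i,k\ge n-1-t$ needs no separate justification, since it already follows from nonnegativity of $d(\bx)-\bm{e}_i-\bm{e}_k$.
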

\begin{proof}
    Assume that $\bbp(X_{t+1} = \mathbf{y} \mid X_t = \bx) > 0$, so that $\mathbf{y}$ is formed by the coalescence of two lineages in $\bx$. There are three possible cases:
    \begin{enumerate}
        \item \textbf{Two internal lineages coalesce}. Let $i < k < n-1$ denote the corresponding decremental indices. Then $d(\mathbf{y})$ inherits all decremental indices except $i$ and $k$, satisfying the condition \eqref{eq:BinaryTrans}.
        \item \textbf{One internal and one external coalesce.} Let $i < n-1$ be a decremental index and let $k=n-1$ correspond to an external lineage. Then $d(\mathbf{y})$ inherits all decremental indices except for $i$, and $\mathbf{y}$ has one less external lineage than $\bx$, again satisfying \eqref{eq:BinaryTrans}.
        \item \textbf{Two external lineages coalesce.} Set $i = k = n-1$. All decremental indices remain the same, but $\mathbf{y}$ has two fewer external lineages than $\bx$. Again, we see that \eqref{eq:BinaryTrans} is satisfied.
    \end{enumerate}
     Conversely, a pair $(i,k)$ satisfying \eqref{eq:BinaryTrans} uniquely determines which lineages of $\bx$ coalesce (up to non-uniqueness of the external lineages), ensuring that the transition to $\mathbf{y}$ is possible.
\end{proof}

The proof not only characterizes feasible transitions but also partitions them into three distinct scenarios. This forms the basis for computing transition probabilities directly from the binary representation, without reference to the underlying lineage decomposition, as formalized in the following theorem.

\begin{theorem}\label{thm:TPM}
    Consider the setup of Proposition \ref{prop:BinaryTransitions}, and assume that there exists a pair $(i,k)$ satisfying \eqref{eq:BinaryTrans}, and let $x_{\text{max}} = \max_i(x_i)$ be the total number of lineages in $\bx$. Under the Kingman model, the transition probability is given by
    \begin{align}
        \bbp(X_{t+1} = \mathbf{y} \mid X_t = \bx) = \binom{x_{\text{max}}}{2}^{-1}\cdot\begin{cases}
            1&\text{if } i < k < n-1\\[0.5em]
            x_{n-1}&\text{if } i < k = n-1\\[0.5em]
            \binom{x_{n-1}}{2}&\text{if }i=k=n-1
        \end{cases}.\label{eq:TransProbs}
    \end{align}
\end{theorem}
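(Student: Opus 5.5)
The plan is to work through the lineage decomposition from Section 2 and count how many unordered pairs of current lineages produce the state $\mathbf{y}$. Under the Kingman model every unordered pair of the $x_{\max}$ extant lineages is equally likely to coalesce, so each pair has probability $\binom{x_{\max}}{2}^{-1}$. The transition probability is therefore this factor times the number of pairs whose coalescence maps $\bx$ to $\mathbf{y}$.

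\textbf{Step 1: $x_{\max}$ is the number of lineages.} Every lineage in the decomposition of $\bx$ (as in \eqref{eq:X6decomp}) has binary shape $(0,\dots,0,1,\dots,1,0,\dots,0)^\transpose$. Its leading block of ones starts at the current layer $n-1-t$, the first nonzero coordinate of $\bx$. So all lineages are present at that layer, and $x_{n-1-t}$ counts them. Away from this layer, entries of $\bx$ can only be smaller, since lineages terminate and are never created there. Hence $x_{\max}$ equals the number of lineages.

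\textbf{Step 2: identify the lineages by their decremental indices.} Using the map $d$ in \eqref{eq:DiffOperator}, each internal lineage is determined by the unique layer where its entry drops from $1$ to $0$, namely a decremental index $j<n-1$. Because at most one lineage terminates per layer, each decremental index belongs to exactly one lineage, so the internal lineages are pairwise distinct. The external lineages all have the identical shape $(0,\dots,0,1,\dots,1)^\transpose$, and there are $x_{n-1}=d(\bx)_{n-1}$ of them.

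\textbf{Step 3: count the pairs in each case of Proposition~\ref{prop:BinaryTransitions}.} The proof of that proposition shows that coalescing a given pair produces the state whose $d$-vector satisfies \eqref{eq:BinaryTrans} with that pair's indices. The indices $(i,k)$ are uniquely determined by $\bx$ and $\mathbf{y}$.
\begin{itemize}
\item If $i<k<n-1$, exactly one pair works: the two internal lineages with decremental indices $i$ and $k$.
\item If $i<k=n-1$, the internal lineage with index $i$ may pair with any of the $x_{n-1}$ external lineages. All of these choices give the same $\mathbf{y}$, because the external lineages are indistinguishable after the update.
\item If $i=k=n-1$, any of the $\binom{x_{n-1}}{2}$ pairs of external lineages works.
\end{itemize}

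\textbf{Step 4: the main obstacle.} It remains to show that no pair of a different type also yields $\mathbf{y}$, so the three counts do not double count across cases. The plan is to use the converse direction of Proposition~\ref{prop:BinaryTransitions} together with the fact that the multiset of decremental indices removed by a coalescence, read off from $d(\bx)-d(\mathbf{y})+\bm{e}_{n-2-t}$, differs between the three cases. In particular, the $(n-1)$-coordinate of that vector equals $0$, $1$ or $2$ according to the number of external lineages involved. The resulting state is then determined by which lineages are chosen, up to permuting external ones, and summing the uniform pair probabilities gives \eqref{eq:TransProbs}.
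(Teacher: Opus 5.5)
Your proposal is correct and follows the paper's proof. Under the Kingman model each of the $\binom{x_{\text{max}}}{2}$ pairs is equally likely, and the three cases of Proposition~\ref{prop:BinaryTransitions} contribute $1$, $x_{n-1}$ and $\binom{x_{n-1}}{2}$ pairs respectively. Your Steps 1 and 4 make explicit what the paper takes for granted: that $x_{\text{max}}$ counts the lineages, and that any pair producing $\mathbf{y}$ must have index multiset $\bm{e}_i+\bm{e}_k=d(\bx)-d(\mathbf{y})+\bm{e}_{n-2-t}$, so there is no cross-case double counting. The paper instead ends with a check that the counts over all targets sum to $\binom{x_{\text{max}}}{2}$.
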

\begin{proof}
    Assuming the Kingman model, all lineages are equally likely to coalesce. The three cases identified in the proposition correspond directly to the type of coalescence.
    \begin{enumerate}
        \item \textbf{Two internal lineages} $i < k < n-1$: Each such coalescence leads to a unique state, giving a single possibility.
        \item \textbf{One internal, one external} $i < k = n-1$: All external lineages are identical, so each of the $x_{n-1}$ external lineages leads to the same transition.
        \item \textbf{Two external lineages} $i = k = n-1$: All external lineages are identical, and there are $\binom{x_{n-1}}{2}$ ways to select a pair, each resulting in the same transition.
    \end{enumerate}
    Lastly, we check that this gives a valid probability distribution. As $x_{\text{max}}$ gives the total number of lineages and $x_{n-1}$ is the number of external lineages, there are $x_{\text{max}}-x_{n-1}$ internal lineages. Therefore, there are $\binom{x_{\text{max}}-x_{n-1}}{2}$ distinct transitions corresponding to the first case. Similarly, we find that there are $x_{\text{max}}-x_{n-1}$ distinct transitions for the second case. Lastly, as all external lineages are identical, all transitions here lead to the same state. In total, we get
    \begin{align*}
        \binom{x_{\text{max}}-x_{n-1}}{2}+x_{n-1}(x_{\text{max}}-x_{n-1})+\binom{x_{n-1}}{2} = \binom{x_{\text{max}}}{2}.
    \end{align*}
\end{proof}\noindent
In practice, the binary representation is generated efficiently using a preloaded matrix of binary numbers. Proposition \ref{prop:BinaryTransitions} and Theorem \ref{thm:TPM} allow all transitions between consecutive tiers to be generated using simple vector operations. This forms the foundation for all further analysis. \vspace{-5pt}

\subsection{Proof for size of state space}
In this section, we provide a proof of Theorem \ref{Thm:SizeOfSS}. To show this, we first need a lemma that establishes a connection between states in $\mX_n$ and in $\mX_{n-1}$.
\begin{lemma}\label{lem:UniquePrev}
   Let $n\geq 4$. For all $\bx = (x_1,\cdots,x_{n-1})^T \in \mX_n$ except the initial state $(0,\cdots,0,n)$, there exists a state $\mathbf{y} = (y_1,\cdots,y_{n-2})^T \in \mX_{n-1}$ such that either
   \begin{enumerate}
       \item $\bx = (y_1,\cdots, y_{n-2}, y_{n-2})$, or
       \item $\bx = (y_1,\cdots, y_{n-2}, y_{n-2} - 1)$.
   \end{enumerate}
\end{lemma}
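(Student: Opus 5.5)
The idea is to delete the last coordinate of $\bx$, which records the external lineages at the bottom layer $n-1$, and to show that what remains is a state of $\mX_{n-1}$. For this I will use the lineage decomposition of Section 2. A non-initial state $\bx\in\mX_n$ of tier $t\ge 1$ is a sum of $n-t$ binary lineage vectors. Each is of the form $(0,\dots,0,1,\dots,1,0,\dots,0)^\transpose$, beginning at layer $n-1-t$ (the current layer). Since the ranked coalescent removes at most one lineage per layer, at most one lineage can end at each layer $i<n-1$. The external lineages are exactly those with a $1$ in coordinate $n-1$. The key structural fact is the following: as $\bx$ is reached from a tree process on $n$ leaves, the first coalescence (tier $0\to1$) merges two external lineages. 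Hence after the first step exactly one lineage terminates at layer $n-2$, namely the cherry formed by the first coalescence. (For $\bx$ of tier $\ge1$ this cherry lineage is either still present or has been absorbed into an internal lineage, which by then also covers layer $n-2$.)

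Set $\mathbf{y}=(x_1,\dots,x_{n-2})^\transpose$. The claim is that $\mathbf{y}\in\mX_{n-1}$, so that the last coordinate of $\bx$ is either $y_{n-2}=x_{n-2}$ or $x_{n-2}-1$. To prove this I will truncate every lineage of $\bx$ at layer $n-2$. Lineages covering layers up to $n-1$ and those ending at $n-2$ both become lineages covering up to layer $n-2$, i.e.\ external lineages of a tree with $n-1$ leaves. The difference $x_{n-2}-x_{n-1}$ counts the lineages that are present at layer $n-2$ but not at $n-1$. By the uniqueness property (at most one lineage terminates per layer, namely at layer $n-2$), this difference is $0$ or $1$, which gives the two cases of the lemma.

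It remains to check that the truncated collection is a genuine state of $\mX_{n-1}$. Equivalently, the path $\bx_0,\dots,\bx_t=\bx$ leading to $\bx$ must project to a valid path of the $(n-1)$-leaf ranked coalescent. Concretely, I will remove the bottom layer of the partial tree. This amounts to deleting the first coalescent event: the cherry of the two lineages that merged first is replaced by a single leaf. The result is a partial ranked tree on $n-1$ leaves whose columns are the truncations of the columns of the original tree, shifted by one tier; the first transition $(0,\dots,0,n)\to(0,\dots,0,n-1,n-2)$ truncates to the initial state of $\mX_{n-1}$. By Theorem \ref{Thm:MarkovEmbeddingTheorem}, this partial tree corresponds to a path in $\mX_{n-1}$, and its terminal state is $\mathbf{y}$.

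I expect the main obstacle to be the index bookkeeping: matching tier $t$ in $\mX_n$ with tier $t-1$ in $\mX_{n-1}$, and checking that the coordinate at layer $n-2$ in $\bx$ equals the number of external lineages of the truncated tree. The edge case where $\bx$ is the second state $(0,\dots,0,n-1,n-2)$ should be checked directly, with $\mathbf{y}=(0,\dots,0,n-1)$ the initial state of $\mX_{n-1}$, which falls under case 2.
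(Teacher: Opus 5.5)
Your proposal is correct and follows the paper's argument: set $\mathbf{y}=(x_1,\dots,x_{n-2})^\transpose$, truncate every lineage in the decomposition of $\bx$ to obtain a decomposition of $\mathbf{y}$, and use that at most one lineage ends at layer $n-2$ to get $x_{n-1}\in\{x_{n-2},x_{n-2}-1\}$. Your justification that $\mathbf{y}\in\mX_{n-1}$ is more careful than the paper's one-line assertion: you delete the first coalescence and project $\bx_1,\dots,\bx_t$ onto a path of $\mX_{n-1}$ shifted by one tier.

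One slip to fix: the claim that exactly one lineage of $\bx$ terminates at layer $n-2$ is false once the cherry has merged. The merged lineage is created as $\bm{e}_{n-1-s}$ with $s\ge 2$ and is only extended upward afterwards, so it ends above layer $n-2$. For example, $(0,3,2,2)^\transpose$ with $n=5$ has no such lineage, which is exactly case 1. Your final step only uses ``at most one'', so the proof still goes through.
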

\begin{proof}
    Let $\bx = (x_1,\cdots,x_{n-1}) \in \mX_n$, and let $\mathbf{y} = (x_1,\cdots, x_{n-2})$. Considering a decomposition like \eqref{eq:X6decomp} of $\bx$, one obtains $\mathbf{y}$ by removing the last entry from each summand. This gives a decomposition of $\mathbf{y}$, showing that $\mathbf{y} \in \mX_{n-1}$. By the definition of $\mX_n$, the last entry of each non-initial state always satisfies $x_{n-1}\in\{x_{n-2},x_{n-2}-1\}$, as only one pair of lineages can coalesce at a time. This concludes the proof.
\end{proof}\noindent
Before continuing, we define for a fixed $n \geq 3$ and for each $j = 0,\cdots,n$ the set $$\mX_n^j := \{\bx\in \mX_n : x_{n-1} = j\},$$ of states having its last entry equal to $j$. Lemma \ref{lem:UniquePrev} now implies the following recursive structure regarding the size of $\mX_n^j$. 
\begin{corollary}\label{cor:RecursiveStates}
    Let $n \geq 4$. For each $j = 0,\cdots,n-2$ the size of $\mX_n^j$ can be calculated as
    \begin{align*}
        |\mX_n^j| = |\mX_{n-1}^j|+|\mX_{n-1}^{j+1}|.
    \end{align*}
\end{corollary}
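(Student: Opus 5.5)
The plan is to build an explicit bijection between $\mX_n^j$ and the disjoint union $\mX_{n-1}^j \sqcup \mX_{n-1}^{j+1}$, using the truncation map from Lemma \ref{lem:UniquePrev}. Define $\phi(\bx) = (x_1,\cdots,x_{n-2})^\transpose$ for non-initial $\bx\in\mX_n$. The proof of Lemma \ref{lem:UniquePrev} shows that $\phi(\bx)\in\mX_{n-1}$ and that $x_{n-1}\in\{x_{n-2},x_{n-2}-1\}$. Hence, for $\bx\in\mX_n^j$, the image $\phi(\bx)$ lies in $\mX_{n-1}^j$ (first case of the lemma) or in $\mX_{n-1}^{j+1}$ (second case).

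For $j\le n-2$ the initial state $(0,\cdots,0,n)$ has last entry $n$, so it is excluded and $\phi$ is defined on all of $\mX_n^j$. The map $\bx\mapsto(\phi(\bx),x_{n-1})$ is injective, since $\bx$ is recovered by appending $x_{n-1}=j$. It therefore suffices to show the converse: every $\mathbf{y}\in\mX_{n-1}^j$ extends to $(\mathbf{y},j)\in\mX_n$, and every $\mathbf{y}\in\mX_{n-1}^{j+1}$ extends to $(\mathbf{y},j)\in\mX_n$. Both claims I would prove through the lineage decomposition \eqref{eq:X6decomp}, interpreted as the chain with one extra bottom layer.

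\textbf{Case 1}, $\mathbf{y}\in\mX_{n-1}^j$. Take the decomposition of $\mathbf{y}$ and append a $1$ to every external lineage (those ending in $1$) and a $0$ to every other lineage. This yields a valid state of $\mX_n$ in which no coalescence happens between the two bottom layers. That is inadmissible, since every layer must record exactly one coalescence. So I would instead append the extra layer \emph{below}: one external lineage of $\mathbf{y}$ is split into two new leaves that coalesce first. Concretely, the sample path of $\mathbf{y}$ in the $(n-1)$-leaf chain is lifted to the $n$-leaf chain by prepending the coalescence of two leaves. Every column gains one entry. The new last entry equals the previous last entry (one external lineage is replaced by a non-external one, and the new lineages are counted), giving $x_{n-1}=y_{n-2}$.

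\textbf{Case 2}, $\mathbf{y}\in\mX_{n-1}^{j+1}$. Here I would split instead an internal position, so that the last entry drops by one. I expect this step to be the main obstacle. It requires checking carefully, in terms of the $d(\cdot)$ representation \eqref{eq:DiffOperator}, that the extended vector with last entry $y_{n-2}-1$ arises from some valid path. The cleanest route is probably to use $d$: appending a layer corresponds to setting a decremental index at position $n-2$, which is allowed iff $y_{n-2}\ge 1$. This holds since $j+1\ge1$.

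With both extensions in hand, $\mX_n^j$ is in bijection with $\mX_{n-1}^j\sqcup\mX_{n-1}^{j+1}$, which gives the stated recursion. The endpoint $j=n-2$ is consistent, because $\mX_{n-1}^{n-1}$ contains only the initial state of $\mX_{n-1}$, matching the unique state $(0,\cdots,0,n-1,n-2)$.
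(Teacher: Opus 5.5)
\paragraph{Verdict.} Your overall structure matches the paper's proof: truncation $\phi$, injectivity, and reduction to showing $(\mathbf{y},j)\in\mX_n$ for $\mathbf{y}\in\mX_{n-1}^j\cup\mX_{n-1}^{j+1}$. The paper simply asserts these two extension claims. There is a genuine gap in your justification of them.

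\paragraph{Where the argument breaks.} In Case 1 you discard the correct construction for a wrong reason. A column only records the lineages present at its own layer $c$. Nothing forces one of them to terminate between rows $n-2$ and $n-1$, because the coalescence there can involve a lineage born below layer $c$. For example, $(2,1,1,1)^\transpose\in\mX_5$ has no decrement between its two bottom rows.

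The construction you use instead gives the wrong value. Splitting an external lineage of $\mathbf{y}$ at the new last event makes that lineage internal. The two new leaves are born after layer $c$, so they do not appear in column $c$ at all. Hence the new last entry is $y_{n-2}-1$, not $y_{n-2}$. Concretely, taking $\mathbf{y}=(2,1,1)^\transpose\in\mX_4^1$ and splitting its external lineage gives $(2,1,1,0)^\transpose\in\mX_5^0$, not an element of $\mX_5^1$.

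Case 2 is only asserted (``allowed iff $y_{n-2}\ge 1$''). Moreover, ``splitting an internal position'' is not a meaningful operation, since the internal lineages of $\mathbf{y}$ have already split.

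\paragraph{How to fix it.} Swap the constructions.
\begin{enumerate}
\item Realize $\mathbf{y}$ as column $c$ of an $(n-1)$-leaf tree, and add a final branching event $n-1$ that splits one leaf. Rows $1,\dots,n-2$ of the $\Fmat$ are unchanged. The new entry in column $c$ is $y_{n-2}$ if the split leaf was absent from layer $c$, and $y_{n-2}-1$ if it was present there.
\item For Case 1, $\mathbf{y}$ is not the initial state of $\mX_{n-1}$, since its last entry is $j\le n-2$. So $c\le n-3$, and event $c+1$ creates branches whose descendant leaves are absent from layer $c$. Splitting such a leaf gives $(\mathbf{y},y_{n-2})$. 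This is exactly the vector you first rejected, obtained by appending $1$ to external lineages and $0$ to the rest.
\item For Case 2, $y_{n-2}=j+1\ge 1$ guarantees an external lineage of $\mathbf{y}$ to split, giving $(\mathbf{y},y_{n-2}-1)$. This also covers $\mathbf{y}=(0,\dots,0,n-1)^\transpose$ when $j=n-2$.
\end{enumerate}
With this correction your argument is complete, and it is more explicit than the paper's.
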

\begin{proof}
    By Lemma \ref{lem:UniquePrev} define a mapping $\phi : \mX_n^j \to \mX_{n-1}^j \cup \mX_{n-1}^{j+1}$, by removing the last entry of each summand in a decomposition like \eqref{eq:X6decomp}. That is, $\phi(x_1,\cdots,x_{n-1}) = (x_1,\cdots,x_{n-2})$. The mapping is injective, since for any $\mathbf{y} \in \mX_{n-1}$, the last coordinate is uniquely determined as either $y_{n-2}$ or $y_{n-2}-1$. To prove surjectivity, let $\mathbf{y} = (y_1,\cdots,y_{n-2}) \in \mX_{n-1}^j$. Then $\bx = (y_1,\cdots,y_{n-2},y_{n-2}) \in \mX_n^j$. Similarly, if $\mathbf{y} \in \mX_{n-1}^{j+1}$, then $\bx = (y_1,\cdots,y_{n-2},y_{n-2}-1) \in \mX_n^j$. Hence $\phi$ is bijective, and thus $$|\mX_n^j| = |\mX_{n-1}^j \cup \mX_{n-1}^{j+1}| = |\mX_{n-1}^j| + |\mX_{n-1}^{j+1}|,$$ as $\mX_{n-1}^j \cap \mX_{n-1}^{j+1} = \emptyset$ by definition.
\end{proof}\noindent
As a consequence of this recursive structure, we get the following proposition.
\begin{proposition}
    Let $n \geq 3$. Then 
    \begin{enumerate}
        \item $|\mX_n^n| = 1, |\mX_n^{n-1}| = 0$
        \item $|\mX_n^j| = \text{Fib}(n-1-j)$ for all $j = 1,\cdots, n-2$
        \item $|\mX_n^0| = \text{Fib}(n-1)-1$.
    \end{enumerate}
\end{proposition}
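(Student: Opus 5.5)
We argue by induction on $n$, with base case $n=3$ and inductive step supplied by Corollary \ref{cor:RecursiveStates}. We treat the three items in turn.

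\emph{Base case $n=3$.} The non-absorbing states are $X_0=(0,3)^\transpose$ and $X_1=(2,1)^\transpose$. Hence $|\mX_3^3|=1$, $|\mX_3^2|=0$, $|\mX_3^1|=1=\text{Fib}(1)$ and $|\mX_3^0|=0=\text{Fib}(2)-1$. Since the range $j=1,\dots,n-2$ is just $j=1$, all three items hold.

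\emph{Item 1 for general $n$.} This follows directly from the definitions rather than the recursion. A state whose last entry equals $n$ has $n$ external lineages, so no coalescence has occurred; it must be the initial state, giving $|\mX_n^n|=1$. After $t\ge1$ coalescences only $n-t$ lineages remain, so the last entry is at most $n-2$ (indeed $X_1$ has last entry $n-2$). Therefore $|\mX_n^{n-1}|=0$.

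\emph{Item 2, inductive step.} Assume all three items for $n-1$ and fix $j\in\{1,\dots,n-2\}$. Corollary \ref{cor:RecursiveStates} gives $|\mX_n^j|=|\mX_{n-1}^j|+|\mX_{n-1}^{j+1}|$. For $j\le n-4$ both indices lie in the range $\{1,\dots,n-3\}$ of item 2 for $n-1$, so the sum is $\text{Fib}(n-2-j)+\text{Fib}(n-3-j)=\text{Fib}(n-1-j)$. Two boundary values need separate treatment:
\begin{itemize}
\item For $j=n-3$: $|\mX_{n-1}^{n-3}|+|\mX_{n-1}^{n-2}|=\text{Fib}(1)+0=1=\text{Fib}(2)$.
\item For $j=n-2$: $|\mX_{n-1}^{n-2}|+|\mX_{n-1}^{n-1}|=0+1=1=\text{Fib}(1)$.
\end{itemize}
These use item 1 for $n-1$ and the convention $\text{Fib}(1)=\text{Fib}(2)=1$.

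\emph{Item 3, inductive step.} For $j=0$ we get $|\mX_n^0|=|\mX_{n-1}^0|+|\mX_{n-1}^1|=\text{Fib}(n-2)-1+\text{Fib}(n-3)=\text{Fib}(n-1)-1$, using items 3 and 2 for $n-1$. When $n=4$, item 2 for $n-1=3$ gives $|\mX_3^1|=\text{Fib}(1)$, which is consistent with this formula.

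\emph{Main obstacle.} The delicate point is the boundary cases $j\in\{n-3,n-2\}$. There the corollary's recursion reaches into $\mX_{n-1}^{n-2}$ and $\mX_{n-1}^{n-1}$, which are governed by item 1 rather than item 2. The three items must therefore be proved simultaneously, with careful index bookkeeping and a consistent Fibonacci convention.

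Summing the items then yields Theorem \ref{Thm:SizeOfSS}. Using $\sum_{k=1}^{m}\text{Fib}(k)=\text{Fib}(m+2)-1$, the non-absorbing states number
\[
1+\sum_{j=1}^{n-2}\text{Fib}(n-1-j)+\text{Fib}(n-1)-1=\text{Fib}(n)-1+\text{Fib}(n-1)=\text{Fib}(n+1)-1.
\]
Adding the MRCA state gives $|\mX_n|=\text{Fib}(n+1)$.
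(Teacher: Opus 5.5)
Your proof is correct and follows the paper's route: the base case $n=3$, item 1 read off directly from the first two states, and items 2 and 3 by induction via Corollary \ref{cor:RecursiveStates}. Your explicit treatment of the boundary indices $j\in\{n-3,n-2\}$ through item 1 spells out what the paper's one-line inductive step leaves implicit: that step needs $|\mX_n^{n-1}|=0=\text{Fib}(0)$ and, at the top index, effectively $\text{Fib}(-1)=1$ to account for $|\mX_n^{n}|=1$.
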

\begin{proof}
    The two initial states of the ranked coalescent are $(0,\cdots,0,n)^\transpose, (0,\cdots,n-1,n-2)^\transpose$ with all other states having $x_{n-1} \leq n-3$. Therefore $|\mX_n^n| = 1,|\mX_n^{n-1}| = 0$.
    \medbreak \noindent
    We show points 2. and 3. by induction. For $n=3$, we have $\mX_3 = \{(0,3)^\transpose, (2,1)^\transpose\}$, so that $|\mX_3^2| = 0 = \text{Fib}(0), |\mX_3^1| = 1 = \text{Fib}(1)$, and $|\mX_3^0| = 0 = \text{Fib}(2)-1$.\\
    Inductive hypothesis: assume that for some $n \geq 3$, $|\mX_n^j| = \text{Fib}(n-1-j)$ for $1 \leq j \leq n-2$ and $|\mX_n^0| = \text{Fib}(n-1) - 1$. Then, by Corollary \ref{cor:RecursiveStates},
    \begin{align*}
        |\mX_{n+1}^j| = |\mX_{n}^j| + |\mX_{n}^{j+1}| = \text{Fib}(n-1-j)+\text{Fib}(n-1-j-1) = \text{Fib}(n-j),
    \end{align*}
    which proves point 2. Similarly,
    \begin{align*}
        |\mX_{n+1}^0| = |\mX_{n}^0| + |\mX_{n}^1| = \text{Fib}(n-1)-1 + \text{Fib}(n-2) = \text{Fib}(n) - 1.
    \end{align*}
\end{proof}\noindent
Using these results, we find that
\begin{align*}
        |\mX_n| &= 1 + \sum_{j=0}^n |\mX_n^j| = 1 + |\mX_n^0| + |\mX_n^n| + \sum_{j=1}^{n-1} |\mX_n^j|\\
        &= 1 + (\text{Fib}(n-1) - 1) + 1 + \sum_{j=1}^{n-1}\text{Fib}(n-1-j) = 1 + \sum_{k=0}^{n-1} \text{Fib}(k) = \text{Fib}(n+1).
\end{align*}
Here, the added $1$ accounts for the absorbing state MRCA. This concludes the proof of Theorem \ref{Thm:SizeOfSS}.
\subsection{The ranked block-counting process}
Traditional coalescent theory quantifies genetic diversity via the site frequency spectrum (SFS), whose dynamics are governed by the block-counting process (BCP) \cite{MPHSFS, PHpopgen}. Similarly to the ranked coalescent, the block-counting process describes the structure of the coalescent. In this section, we focus on the jump chain of the BCP which we denote by the ranked BCP and describe how it relates to the ranked coalescent. We then demonstrate that the external branch length, $E$, can be derived from the ranked BCP, while the sum of non-fixed entries, $S$, cannot. Analogously to Section $5$, this derivation enables the use of discrete-phase type theory for deriving the distribution of $E$ under neutral branching.  

\subsubsection{Comparing the ranked BCP to the ranked coalescent}
In this section, we formally introduce the BCP. The details are taken from \cite{MPHSFS}. The BCP tracks the number of branches $a_i$ that have $i$ descendants, where $i=1,\cdots,n$. With $n$ lineages, the state space of the BCP is given by
\begin{align}
    BCP_n := \Bigl\{\bm{a} = (a_1,\cdots,a_{n-1}) \in \mathbb{Z}_+^{n-1} : \sum_{i=1}^{n-1} i a_i = n\Bigr\}.\label{eq:statespaceBCP}
\end{align}
Possible transitions are
\begin{align*}
    (a_1,\cdots,a_{n-1}) \to (a_1,\cdots,a_i-1,\cdots,a_j-1,\cdots,a_{i+j}+1,\cdots,a_{n-1})
\end{align*}
with rate $a_ia_j$ whenever $a_i,a_j \geq 1$, and
\begin{align*}
    (a_1,\cdots,a_{n-1}) \to (a_1,\cdots,a_i-2,\cdots,a_{2i}+1,\cdots,a_{n-1})
\end{align*}
with rate $\binom{a_i}{2}$ for $a_i \geq 2$. The ranked BCP $\{B_t\}_{t=0,\cdots,n-2}$ is the jump process of this continuous time Markov chain. Similarly to the ranked coalescent, the ranked BCP starts in state $(n,0,\cdots,0)$ and then moves deterministically to state $(n-2,1,0,\cdots,0)$ before randomly moving to subsequent states until the MRCA is reached. Below, we illustrate the state space and transition probability matrix of the ranked BCP when $n=5$
\begin{figure}[H]
    \centering
    \scalebox{0.9}{
    \begin{tikzpicture}[->]
        \node[draw,circle,minimum size=1.5cm,inner sep=0pt, label={[shift={(0,-0.5)}]1}] at (-2,0) (state 1) {$\bigl(5,0,0,0\bigr)$};
    
        \node[draw,circle,minimum size=1.5cm,inner sep=0pt, label={[shift={(0,-0.5)}]2}] at (0,0) (state 2) {$\bigl(3,1,0,0\bigr)$};

        \node[draw,circle,minimum size=1.5cm,inner sep=0pt, label={[shift={(0,-0.5)}]3}] at (2,1) (state 3) {$\bigl(2,0,1,0\bigr)$};

        \node[draw,circle,minimum size=1.5cm,inner sep=0pt, label={[shift={(0,-0.5)}]4}] at (2,-1) (state 4) {$\bigl(1,2,0,0\bigr)$};

        \node[draw,circle,minimum size=1.5cm,inner sep=0pt, label={[shift={(0,-0.5)}]5}] at (4,1) (state 5) {$\bigl(1,0,0,1\bigr)$};

        \node[draw,circle,minimum size=1.5cm,inner sep=0pt, label={[shift={(0,-0.5)}]6}] at (4,-1) (state 6) {$\bigl(0,1,1,0\bigr)$};

        \node[draw,circle,minimum size=1.5cm,inner sep=0pt] at (6,0) (MRCA) {MRCA};

        \path (state 1) edge (state 2);
        \path (state 2) edge (state 3);
        \path (state 2) edge (state 4);
        \path (state 3) edge (state 5);
        \path (state 3) edge (state 6);
        \path (state 4) edge (state 5);
        \path (state 4) edge (state 6);
        \path (state 5) edge (MRCA);
        \path (state 6) edge (MRCA);

        \node at (10.5, 0) {$T = \begin{pmatrix}
            0&1&0&0&0&0\\
            0&0&1/2&1/2&0&0\\
            0&0&0&0&2/3&1/3\\
            0&0&0&0&1/3&2/3\\
            0&0&0&0&0&0\\
            0&0&0&0&0&0
        \end{pmatrix}$};
    \end{tikzpicture}}
    \caption{State space and sub-transition matrix of the ranked BCP in the case $n=5$. Notably, the ranked BCP contains $7$ states when MRCA is included, while the ranked coalescent contains $8$ (see Figure \ref{fig:Figure-Fmat}C). Compate to Figure 4 in \cite{MPHSFS}.}
    \label{fig:RankedBCPn5}
    \vspace{-10pt}
\end{figure}
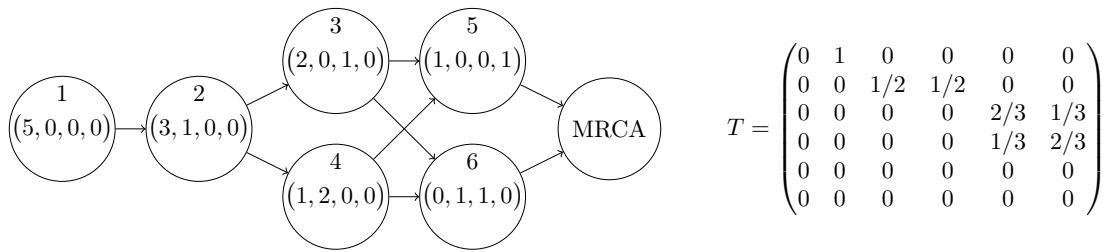

While the ranked BCP tracks the number of $i$-ton lineages, the ranked coalescent tracks the order in which the lineages coalesce. As a consequence, only singleton and non-singleton lineages can be distinguished with certainty in each individual state in the ranked coalescent. In contrast, the ranked BCP cannot differentiate between two doubletons, where one is formed early and one is formed much later. When $n=4$ this difference does not matter and the two processes coincide, which is illustrated in Figure \ref{fig:BCP_Rcoal_n4}.
\begin{figure}[!bt]
    \centering
    \scalebox{0.9}{
    \begin{tikzpicture}
        \node (CombTree) {\begin{tikzpicture}
        \draw [red] (-2, 0) -- (1, 0);
        \draw [red] (-2, 0) -- (-2, -1);
        \draw [red] (-3, -1) -- (-1, -1);
        \draw [red] (-3, -1) -- (-3, -2);
        \draw [red] (-1, -1) -- (-1, -3);
        \draw [red] (-4, -2) -- (-2, -2);
        \draw [red] (-4, -2) -- (-4, -3);
        \draw [red] (-2, -2) -- (-2, -3);
        \draw [red] (1, 0) -- (1, -3);

        \node[text width = 1cm, red] at (0,0.2) {$2$};
        \node[text width = 1cm, red] at (-1.75,-0.8) {$3$};
        \node[text width = 1cm, red] at (-2.75,-1.8) {$4$};
    \end{tikzpicture}};
        \node [right = 2.5cm of CombTree] (ForkTree) {\begin{tikzpicture}
        \draw [blue] (-2, 0) -- (1,0);
        \draw [blue] (-2, 0) -- (-2, -1);
        \draw [blue] (-3, -1) -- (-1, -1);
        \draw [blue] (-3, -1) -- (-3, -3);
        \draw [blue] (-1, -1) -- (-1, -3);
        \draw [blue] (1, 0) -- (1, -2);
        \draw [blue] (0, -2) -- (2,-2);
        \draw [blue] (0, -2) -- (0, -3);
        \draw [blue] (2, -2) -- (2, -3);

        \node[text width = 1cm, blue] at (-0.6,0.2) {$2$};
        \node[text width = 1cm, blue] at (-2.4,-0.8) {$3$};
        \node[text width = 1cm, blue] at (0.6,-1.8) {$4$};
    \end{tikzpicture}};

    \node [below = 0.75cm of CombTree] (RankedCoal) {\begin{tikzpicture}[->]
        \node at (-2, 0) (state 1) {$\begin{pmatrix}
            0\\0\\4
        \end{pmatrix}$};

        \node at (0, 0) (state 2) {$\begin{pmatrix}
            0\\3\\2
        \end{pmatrix}$};

        \node at (2, 1.8) (state 3) {$\begin{pmatrix}
            2\\1\\1
        \end{pmatrix}$};

        \node at (2, -1.8) (state 4) {$\begin{pmatrix}
            2\\1\\0
        \end{pmatrix}$};

        \node [anchor=north] at (4, -0.5) (MRCA) {MRCA};

        \path (state 1) edge (state 2);
        \path [red] (state 2) edge (state 3);
        \path [blue] (state 2) edge (state 4);
        \path [red] (state 3) edge (MRCA);
        \path [blue] (state 4) edge (MRCA);
    \end{tikzpicture}};

    \node [below = 0.75cm of ForkTree] (RankedBCP) {\begin{tikzpicture}[->]
        \node[draw,circle,minimum size=1.5cm,inner sep=0pt, label={[shift={(0,-0.5)}]1}] at (-2,0) (state 1) {$\bigl(4,0,0\bigr)$};

        \node[draw,circle,minimum size=1.5cm,inner sep=0pt, label={[shift={(0,-0.5)}]2}] at (0,0) (state 2) {$\bigl(2,1,0\bigr)$};

        \node[draw,circle,minimum size=1.5cm,inner sep=0pt, label={[shift={(0,-0.5)}]3}] at (2,2) (state 3) {$\bigl(1,0,1\bigr)$};

        \node[draw,circle,minimum size=1.5cm,inner sep=0pt, label={[shift={(0,-0.5)}]4}] at (2,-2) (state 4) {$\bigl(0,2,0\bigr)$};

        \node[draw,circle,minimum size=1.5cm,inner sep=0pt] at (4, 0) (MRCA) {MRCA};

        \path (state 1) edge (state 2);
        \path [red] (state 2) edge (state 3);
        \path [blue] (state 2) edge (state 4);
        \path [red] (state 3) edge (MRCA);
        \path [blue] (state 4) edge (MRCA);
    \end{tikzpicture}};
    \node (HeaderNode) at ([shift={(-0.5cm,2cm)}] CombTree) {};
    \node (A) at ([shift={(0cm,0cm)}] HeaderNode) {\Large{\bf A:} The two trees with $n=4$ leaves};
    \node (B) at ([shift={(0cm,-4.2cm)}] HeaderNode) {\Large{\bf B:} The ranked coalescent for $n=4$};
    \node (C) at ([shift={(8.25cm,-4.2cm)}] HeaderNode) {\Large{\bf C:} The ranked BCP for $n=4$};
    \end{tikzpicture}}
    \caption{{\bf A: } The two ranked unlabelled trees with $n=4$ leaves.{\bf B:} The transition diagram in the ranked coalescent. {\bf C:} The transition diagram in the ranked BCP. We note that the two transition diagrams coincide.}
    \label{fig:BCP_Rcoal_n4}
    \vspace{-15pt}
\end{figure}

As already hinted in Figure \ref{fig:RankedBCPn5}, the ranked coalescent contains an additional state in the last tier before MRCA compared to the ranked BCP when $n=5$. This is due to the fact that both states $(2,1,1,0)^\transpose, (2,1,0,0)^\transpose$ can be formed by the coalescence of a doubleton and a singleton. This shows that a state in the ranked BCP, here $(0,1,1,0)$ can be represented by multiple states in the ranked coalescent. 

Similarly, a state in the ranked coalescent also has the potential to have multiple representations in the ranked BCP. As an example, consider the decomposition of the state $(2,1,1,0,0)^\transpose \in \mX_6$
\begin{align*}
    \begin{pmatrix}
        2\\1\\1\\0\\0
    \end{pmatrix} = \underbrace{\begin{pmatrix}
        1\\0\\0\\0\\0
    \end{pmatrix}}_a + \underbrace{\begin{pmatrix}
        1\\1\\1\\0\\0
    \end{pmatrix}}_b.\vspace{-5pt}
\end{align*}
Here, lineage $a$ is either a tripleton or quadrupleton while $b$ is either a doubleton or a tripleton. Thus, this state can correspond to either $(0,0,2,0,0)$ or $(0,1,0,1,0)$ in the ranked BCP. 

Although states in both processes can often be represented in several ways for the other process, the state space of the ranked BCP grows much slower with $n$ compared to the ranked coalescent. The size of $BCP_n$ is equal to the partition function $p(n) \approx \frac{1}{4n\sqrt{3}}\exp(\pi\sqrt{2n/3})$ \cite{PartitionFunc}, which almost grows exponentially, albeit at a much lower rate than $\text{Fib}(n)$. In Figure \ref{fig:compBCPRcoal}, we compare the sizes of the state spaces.

The paths in the ranked BCP provide a more coarse description of tree topology compared to the complete description by the paths in the ranked coalescent. As a consequence, some measures of tree balance, such as the sum of non-fixed entries, are not available from the ranked BCP. However, the much smaller state space of the ranked BCP allows for more efficient computations. In the following section, we derive a phase-type distribution of the external branch length based only on the ranked BCP alone.

\subsubsection{External branch length in the ranked BCP}
Similarly to the ranked coalescent, the ranked BCP is a discrete-time Markov chain with an absorbing state. Therefore, we can apply phase-type methods in this situation as well. In Section 3, we introduced the notion of external lineages as those equal to $(0,\cdots,1,\cdots,1)^\transpose$. These are lineages that have yet to coalesce and in context of the ranked BCP they are therefore identical to singletons. This allows us to define a reward transform $r_{E}^{BCP} : \{1,2,\cdots,|BCP_n|\} \to \bbn$ by
\begin{align}
    r_{E}^{BCP}(i) = \#\{\text{singletons in state }i\} \label{eq:RewardTransformEBCP}.
\end{align}
Based on this, we can find the distribution of $E$ given a sub-transition probability matrix $\bm{T}_{BCP}$ of the ranked BCP.
\begin{figure}[!bt]
    \centering
    \includegraphics[width=0.7\linewidth]{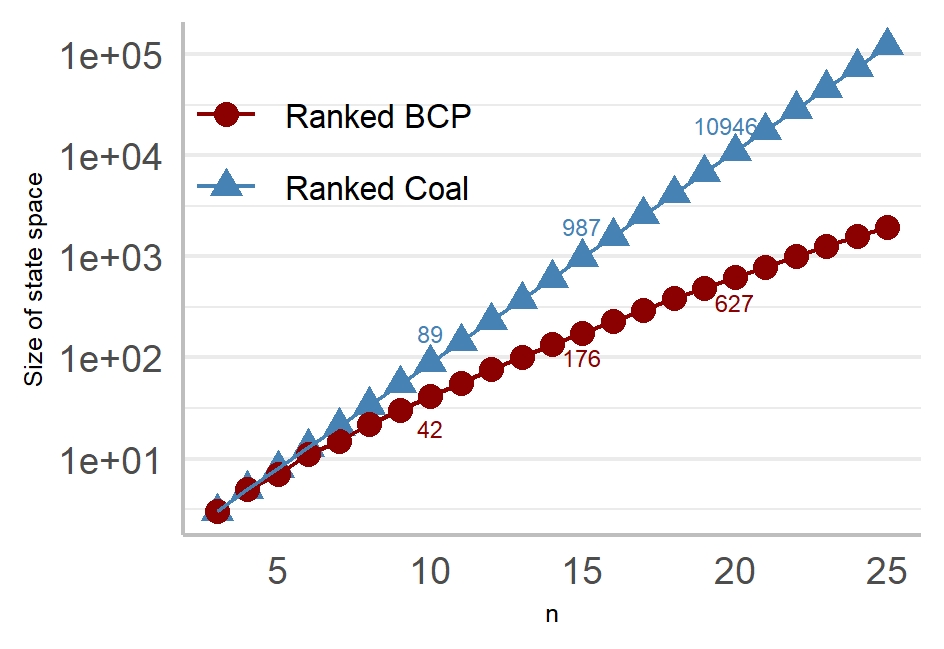}
    \caption{Comparing the sizes of the two state spaces. The state space of the ranked coalescent grows at a much faster rate compare to the ranked BCP.}
    \label{fig:compBCPRcoal}
\end{figure}
As an example, we consider the case $n=4$. Here, the initial distribution $\bm{\pi}_{BCP} = (1,0,0,0)$ and the sub-transition probability matrix
\begin{align*}
    \bm{T}_{BCP} = \begin{pmatrix}
        0&1&0&0\\
        0&0&2/3&1/3\\
        0&0&0&0\\
        0&0&0&0
    \end{pmatrix}.
\end{align*}
We find $|BCP_4| = 4$ and $(r_{E}^{BCP}(1),r_{E}^{BCP}(2),r_{E}^{BCP}(3),r_{E}^{BCP}(4)) = (4,2,1,0)$. Consequently,
\begin{align*}
    E = \sum_{t=0}^{2} r_{E}^{BCP}(B_t) \sim \text{DPH}_7\bigl(\bm{\pi}_{E}^{BCP}, \bm{T}_{E}^{BCP}\bigr),
\end{align*}
with
\begin{align*}
    \bm{\pi}_{E}^{BCP} = (1,0,0,0,0,0,0), \quad \bm{T}_{E}^{BCP} = \begin{pmatrix}
        0&1&0&0&0&0&0\\
        0&0&1&0&0&0&0\\
        0&0&0&1&0&0&0\\
        0&0&0&0&1&0&0\\
        0&0&0&0&0&1&0\\
        0&0&0&0&0&0&2/3\\
        0&0&0&0&0&0&0
    \end{pmatrix}.
\end{align*}
We find $E = 7$ if $B_2 = 3$ and $E = 6$ if $B_2 = 4$, which occur with probability $2/3$ and $1/3$, respectively. Computing $\bm{\pi}_{E}\bm{T}_{E}^{m-1}\bm{t}_{E}$ with $m = 6,7$ confirms these results.

The much smaller state space of the ranked BCP allows us to employ the \texttt{PhaseTypeR} \cite{PhaseTypeR} package directly. In Figure \ref{fig:EnullDist}, we illustrate the distribution of $E$ with $n=25$ lineages based on the ranked Kingman BCP.
\begin{figure}[H]
    \centering
    \includegraphics[width=0.7\linewidth]{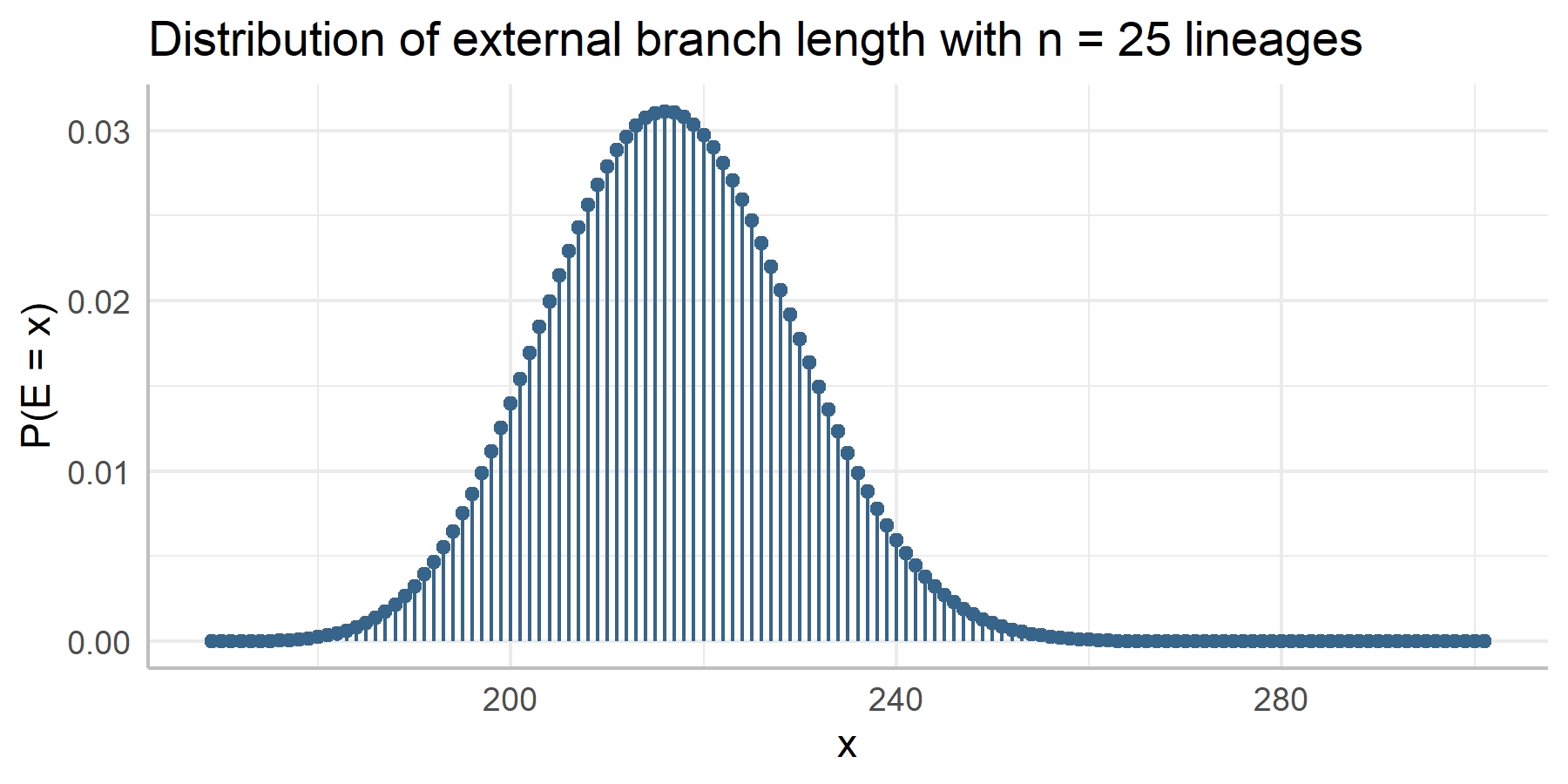}
    \caption{Distribution of external branch length with $n=25$ lineages. Probabilities are found using the \texttt{PhaseTypeR} package.}
    \label{fig:EnullDist}
    \vspace{-10pt}
\end{figure}
\subsection{Verification of correct null distributions}
In Section $7$, we introduced three statistical tests for neutrality based on a sample of $\Fmats$. The tests were shown to be more powerful compared to the previous test suggest in \cite{SamPal}. For each test, we demonstrated the correct level of $0.05$ based on samples from the ranked Kingman coalescent. Here, we further demonstrate the correctness by plotting the distribution of each of the test statistics based on samples from the ranked Kingman coalescent. 
\begin{figure}[H]
    \centering
    \includegraphics[width=0.8\linewidth]{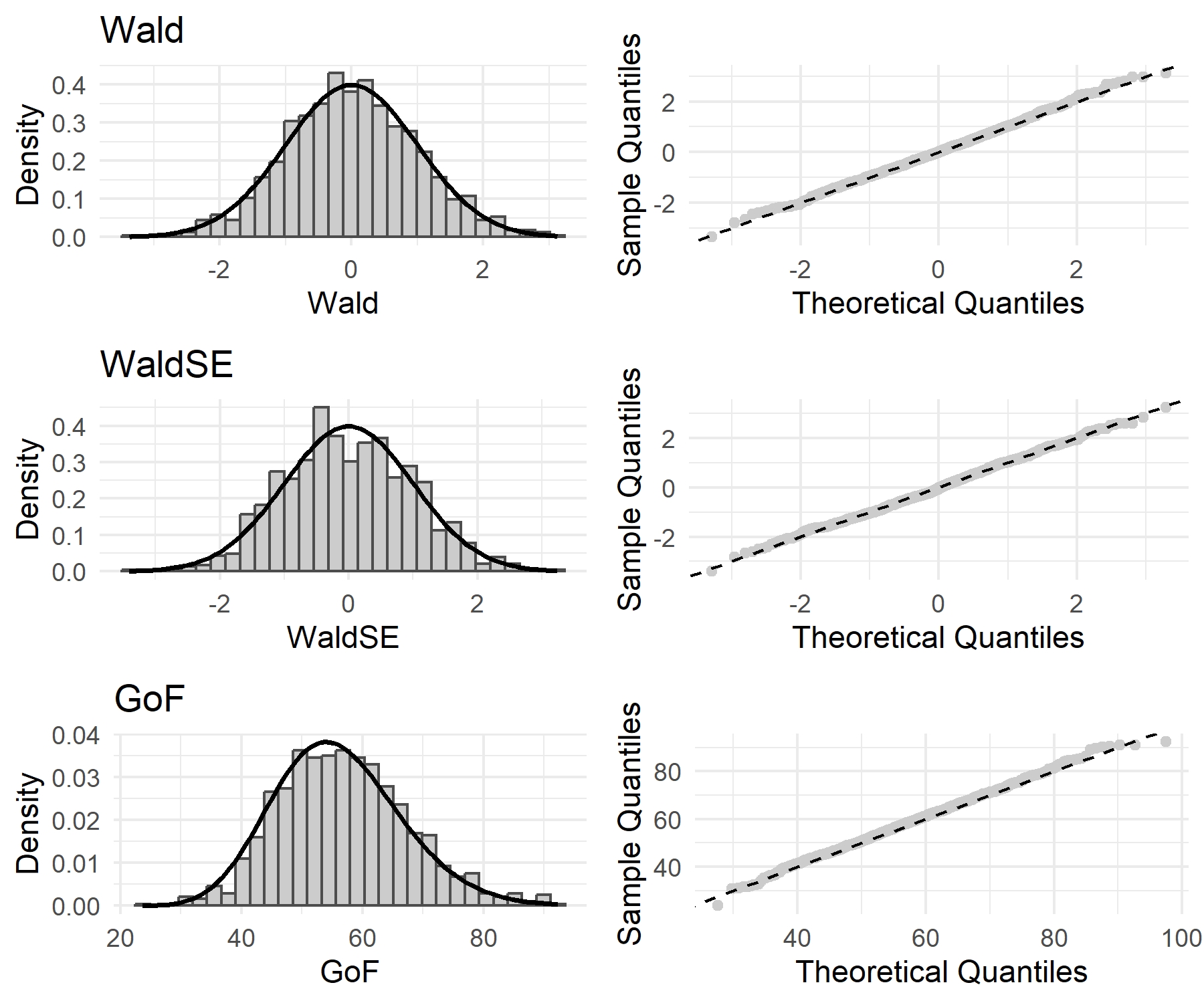}
    \caption{Histogram and quantile plots based on $1000$ test statistics simulated from the ranked Kingman coalescent. The null distributions are superimposed.}
    \label{fig:NullDistVerification}
\end{figure}

\printbibliography
\end{document}